\documentclass{article}

\usepackage[utf8]{inputenc}
\usepackage[T1]{fontenc}

\usepackage[english]{babel}

\usepackage{amsmath}
\usepackage{amssymb}
\usepackage{amsthm}
\usepackage{enumerate}
\usepackage{appendix}
\usepackage{geometry}
\usepackage{eurosym}
\usepackage{hyperref}
\hypersetup{
    colorlinks,
    citecolor=blue,
    filecolor=blue,
    linkcolor=blue,
    urlcolor=blue
}
\usepackage{todonotes}
\usepackage{mathrsfs}
 \usepackage{dsfont}

 \usepackage{rotating}
\usepackage[round]{natbib}
\setcitestyle{round}

\usepackage{etoolbox}

\newcounter{bibvis}

\makeatletter
\pretocmd{\thebibliography}{\setcounter{bibvis}{0}}{}{}

\let\bibvis@orig@lbibitem\@lbibitem
\let\bibvis@orig@bibitem\@bibitem

\renewcommand\@lbibitem[2][]{%
  \bibvis@orig@lbibitem[#1]{#2}%
  \stepcounter{bibvis}\text{[\arabic{bibvis}]}\hspace{0.6em}\ignorespaces
}
\renewcommand\@bibitem[1]{%
  \bibvis@orig@bibitem{#1}%
  \stepcounter{bibvis}\text{[\arabic{bibvis}]}\hspace{0.6em}\ignorespaces
}
\makeatother

\usepackage{empheq}

\def\be{\begin{eqnarray}}
\def\ee{\end{eqnarray}}

\def\b*{\begin{eqnarray*}}
\def\e*{\end{eqnarray*}}

\newtheorem{Definition}{Definition}[part]
\newtheorem{Proposition}{Proposition}[part]

\newcommand{\ba}{\begin{array}}
\newcommand{\ea}{\end{array}}
\newcommand{\ben}{\begin{equation*}} 
\newcommand{\een}{\end{equation*}}
\newcommand{\bea}{\begin{eqnarray}}
 \newcommand{\eea}{\end{eqnarray}}
\newcommand{\bean}{\begin{eqnarray*}} 
\newcommand{\eean}{\end{eqnarray*}}
\newcommand{\bel}{\begin{align}} 
\newcommand{\eel}{\end{align}}
\newcommand{\beln}{\begin{align*}} 
\newcommand{\eeln}{\end{align*}}
\newcommand{\bit}{\begin{itemize}}
\newcommand{\eit}{\end{itemize}}

\makeatletter \@addtoreset{equation}{section}

\@addtoreset{Definition}{section}

\@addtoreset{Theorem}{section}

\@addtoreset{Proposition}{section}

\@addtoreset{Property}{section}

\@addtoreset{Assumption}{section}

\@addtoreset{Corollary}{section}

\@addtoreset{Lemma}{section}

\@addtoreset{Remark}{section}

\@addtoreset{Example}{section}

\@addtoreset{Condition}{section}

\def \E{\mathbb{E}}

\def \H{\mathbb{H}}
\def \L{\mathbb{L}}
\def \M{\mathbb{M}}
\def \N{\mathbb{N}}
\def \P{\mathbb{P}}
\def \Q{\mathbb{Q}}
\def \R{\mathbb{R}}

\def \Z{\mathbb{Z}}

\def \G{\mathbb{G}}

\def\Cov{\mathbb{C}\text{ov}}
\def\Var{\mathbb{V}\text{\rm ar}}

\def\Ec{{\cal E}}

\def\Gc{{\cal G}}

\def\Kc{{\cal K}}
\def\Lc{{\cal L}}

\def\Tr#1{{\rm Tr}\left[#1\right]}

\def\={\;=\;}
\def\.{\;.}

\def\eps{\varepsilon}

\def\1{{\bf 1}}
\def \ep{\hbox{ }\hfill$\Box$}

\def \proof{{\noindent \bf Proof. }}
\def\eps{\epsilon}

 \def\normeL2#1{\left\|{#1}\right\|_{L^2}}
 
\newcommand{\alias}[2]{
\providecommand{#1}{}
\renewcommand{#1}{#2}
}

\alias{\P}{\mathbb{P}}
\alias{\N}{\mathcal{N}}
\alias{\L}{\mathcal{L}}
\alias{\Z}{\mathbb{Z}}
\alias{\Q}{\mathbb{Q}}
\alias{\R}{\mathbb{R}}
\alias{\C}{\mathcal{C}}
\alias{\T}{\mathbb{T}}
\alias{\E}{\mathbb{E}}
\alias{\H}{\mathcal{H}}
\alias{\1}{\mathbf{1}}
\alias{\B}{\mathcal{B}}
\alias{\M}{\mathcal{M}}
\alias{\G}{\mathcal{G}}
\alias{\Y}{Y_{\bullet}}

\newcommand{\nc}{\newcommand}
\nc{\cA}{{\mathcal A}} \nc{\cB}{{\mathcal B}} \nc{\cC}{{\mathcal
C}} \nc{\cD}{{\mathcal D}} \nc{\bbD}{\mathbb{D}}
\nc{\cG}{{\mathcal G}} \nc{\cF}{{\mathcal F}} \nc{\cS}{{\mathcal
S}} \nc{\cU}{{\mathcal U}} \nc{\cH}{{\mathcal H}}
\nc{\cK}{{\mathcal K}} \nc{\cM}{{\mathcal M}} \nc{\cO}{{\mathcal
O}} \nc{\cP}{{\mathcal P}} \nc{\bbE}{\mathbb{E}}
\nc{\bbEP}{\mathbb{E}_{\mathbb{P}}}\nc{\bbL}{\mathbb{L}}
\nc{\bbP}{\mathbb{P}} \nc{\bbQ}{\mathbb{Q}} \nc{\del}{\partial}
\nc{\Om}{\Omega} \nc{\om}{\omega} \nc{\bbR}{\mathbb{R}}
\nc{\bbC}{\mathbb{C}} \nc{\bfr}{\begin{flushright}}
\nc{\efr}{\end{flushright}} \nc{\dXt}{\delta q_{t}}
\nc{\dXs}{\delta q_{s}} \nc{\bs}{\blacksquare} \nc{\dX}{\delta q}
\nc{\dY}{\Delta Y}
\nc{\dnkx}{\left(X(T^{n}_{k})-X(T^{n}_{k-1})\right)}
\nc{\esssup}{\mathrm{ess}\mbox{ }\mathrm{sup}}

\nc{\essinf}{\mathrm{ess}\mbox{ } \mathrm{inf}}
\nc{\dhats}{\widehat{\delta_s}}

\nc{\chf}{\mbox{$\mathbf1$}}
\nc{\ind}{\mathds{1}}

\begingroup\expandafter\expandafter\expandafter\endgroup
\expandafter\ifx\csname pdfsuppresswarningpagegroup\endcsname\relax
\else
\fi

\nc{\zfb}{ z_{\mbox{\footnotesize\rm\sc fb}} }
\nc{\zsb}{ z_{\mbox{\footnotesize\rm\sc sb}} }
\nc{\thetah}{ \theta^{\mbox{\footnotesize\rm\sc h}} }
\nc{\zsbh}{ z_{\mbox{\footnotesize\rm\sc sb}}^{\mbox{\footnotesize\rm\sc h}} }
\nc{\psbh}{ p_{\mbox{\footnotesize\rm\sc sb}}^{\mbox{\footnotesize\rm\sc h}} }
\nc{\psb}{ p_{\mbox{\footnotesize\rm\sc sb}} }

\nc{\thetasbh}{ \theta_{\mbox{\footnotesize\rm\sc sb}}^{\mbox{\footnotesize\rm\sc h}} }
\nc{\thetafbh}{ \theta_{\mbox{\footnotesize\rm\sc fb}}^{\mbox{\footnotesize\rm\sc h}} }

\nc{\afb}{ a_{\mbox{\footnotesize\rm\sc fb}} }
\nc{\asb}{ a_{\mbox{\footnotesize\rm\sc sb}} }
\nc{\asbh}{ a_{\mbox{\footnotesize\rm\sc sb}}^{\mbox{\footnotesize\rm\sc h}} }
\nc{\afbh}{ a_{\mbox{\footnotesize\rm\sc fb}}^{\mbox{\footnotesize\rm\sc h}} }
 
\nc{\xisb}{ \xi_{\mbox{\footnotesize\rm\sc sb}} }
\nc{\xifb}{ \xi_{\mbox{\footnotesize\rm\sc fb}} }
\nc{\xifbh}{ \xi_{\mbox{\footnotesize\rm\sc fb}}^{\mbox{\footnotesize\rm\sc h}} }
\nc{\xisbm}{ \xi_{\mbox{\footnotesize\rm\sc sb$_m$}} }
\nc{\xisbf}{ \xi_{\mbox{\footnotesize\rm\sc sb}}^{\mbox{\footnotesize\rm\sc f}} }
\nc{\xifbf}{ \xi_{\mbox{\footnotesize\rm\sc fb}}^{\mbox{\footnotesize\rm\sc f}} }
\nc{\xisbmf}{ \xi_{\mbox{\footnotesize\rm\sc sb$_m$}}^{\mbox{\footnotesize\rm\sc f}} }
\nc{\xisbv}{ \xi_{\mbox{\footnotesize\rm\sc sb}}^{\mbox{\footnotesize\rm\sc v}} }
\nc{\xifbv}{ \xi_{\mbox{\footnotesize\rm\sc fb}}^{\mbox{\footnotesize\rm\sc v}} }
\nc{\xisbmv}{ \xi_{\mbox{\footnotesize\rm\sc sb$_m$}}^{\mbox{\footnotesize\rm\sc v}} }

\nc{\Vo}{ V^{\mbox{\footnotesize\rm\sc 0}}  }
\nc{\Vh}{ V^{\mbox{\footnotesize\rm\sc h}}  }
\nc{\Vsb}{ V_{\mbox{\footnotesize\rm\sc sb}} }
\nc{\Vsbh}{ V_{\mbox{\footnotesize\rm\sc sb}}^{\mbox{\footnotesize\rm\sc h}} }
\nc{\Vfbh}{ V^{\mbox{\footnotesize\rm\sc h}}_{\mbox{\footnotesize\rm\sc fb}} }
\nc{\Vfb}{ V_{\mbox{\footnotesize\rm\sc fb}} }
\nc{\Ufb}{ U_{\mbox{\footnotesize\rm\sc fb}} }

\nc{\Go}{ G^{\mbox{\footnotesize\rm\sc 0}} }
\nc{\Gh}{ G^{\mbox{\footnotesize\rm\sc h}} }
\nc{\Gsb}{ G_{\mbox{\footnotesize\rm\sc sb}} }
\nc{\Gsbh}{ G^{\mbox{\footnotesize\rm\sc h}}_{\mbox{\footnotesize\rm\sc sb}} }
\nc{\Gfb}{ G_{\mbox{\footnotesize\rm\sc fb}} }
\nc{\Gfbh}{ G^{\mbox{\footnotesize\rm\sc h}}_{\mbox{\footnotesize\rm\sc fb}} }

\nc{\mo}{ m^{\mbox{\footnotesize\rm\sc 0}} }
\nc{\mh}{ m^{\mbox{\footnotesize\rm\sc h}} }
\nc{\msb}{ m_{\mbox{\footnotesize\rm\sc sb}} }
\nc{\msbh}{ m^{\mbox{\footnotesize\rm\sc h}}_{\mbox{\footnotesize\rm\sc sb}} }
\nc{\mfb}{ m_{\mbox{\footnotesize\rm\sc fb}} }
\nc{\mfbh}{ m^{\mbox{\footnotesize\rm\sc h}}_{\mbox{\footnotesize\rm\sc fb}} }

\nc{\epsilonsb}{ \epsilon^{\mbox{\footnotesize\rm\sc sb}} }
\nc{\epsilonfb}{ \epsilon^{\mbox{\footnotesize\rm\sc fb}} }
\nc{\Hv}{ H_{\rm v} }
\nc{\Hm}{ H_{\rm m} }

\nc{\hs}{\vspace{5mm}}

\nc{\bga}{\bar \gamma_a}
\newcommand{\tp}{\intercal}

\nc{\mcdot}{\!\cdot\!}
\nc{\mddot}{\!:\!}

\graphicspath{{./figs/}}

\begin{document}

\title{Forward Hedging Reshapes Incentive Provision}
\author{René Aïd\thanks{Université Paris-Dauphine -- PSL Research University, Department of Economics, UMR CNRS 8007. René Aïd thanks the financial support of the {\em Finance and Sustainable Development EDF-CA CIB Chair}, the {\em Finance for Energy Market} Research Initiative, the French ANR PEPR Math-Vives project MIRTE ANR-23-EXMA-0011 and the UMP6P-Africa Business School.}  \quad Nizar Touzi\thanks{New York University, Tandon School of Engineering, Finance and Risk Engineering Department. Nizar Touzi is partially supported by the {\em Finance and Sustainable Development EDF-CA CIB Chair}, the {\em Finance for Energy Market} Research Initiative and and the National Science Foundation under grant No. DMS-2508581. } \quad Stéphane Villeneuve\thanks{Toulouse School of Economics. Villeneuve acknowledges funding from FDR-SCOR {\it Chaire Marché des risques et création de valeur}, AFOSR-22IOE016 and ANR under grant ANR-17-EUR-0010 (Investissements d'Avenir program). Many thanks for insightful feedback to participants at seminars  at Collegio Carlo Alberto, University of Lugano and Toulouse School of Economics.}}

\maketitle

\begin{abstract}

We study how forward hedging reshapes incentive provision inside the firm. We consider a risk-averse producer facing demand and production risk that can either operate in-house or delegate production to a risk-averse agent under moral hazard, while hedging output in a competitive forward market with a rational market maker. Within a tractable continuous-time CARA framework, we jointly characterize optimal production, compensation, and static hedging in equilibrium.

Delegation and external hedging are partial substitutes because both create value through risk sharing. Delegation can increase firm value even when the agent uses the same technology and is more risk averse than the principal, while access to forward hedging reduces the need to provide incentives through risk exposure. This mechanism delivers two main results. First, the principal hedges less under delegation than under in-house production. Second, this lower hedging demand under delegation raises the equilibrium forward price relative to the integrated benchmark.

In the constant-demand case, we show that access to hedging lowers the agent’s expected
compensation under delegation. Numerical results indicate that this mechanism remains robust
in the presence of demand uncertainty. More broadly, our results show that external risk transfer
through financial markets feeds back into internal organizational design.

\end{abstract}

\clearpage

\section{Introduction}

Firms use financial markets to transfer risk outside the firm, but doing so should also change how much risk must be allocated inside the firm. This interaction is especially important when production is delegated to a manager or specialist whose effort is unobservable. In that case, compensation must both provide incentives and share risk. A standard implication of principal-agent theory is that stronger pay-performance sensitivity improves incentives but also exposes the agent to more firm-specific risk \cite{Holmstrom,Salanie,LaffontMartimort}. At the same time, the corporate hedging literature emphasizes that firms use forward and futures markets to reduce the volatility of operating cash flows \cite{SmithStulz,Froot93}. This paper studies the link between these two margins. Our central question is the following: when the principal can hedge output price risk in a forward market, how does this reshape optimal incentive provision inside the firm, and what are the resulting implications for equilibrium forward prices?

This question is distinct from the managerial hedging problem emphasized in the executive-compensation literature. That literature studies how managers undo or attenuate incentive schemes by trading financial claims on their own account, with important implications for contract design and pay-performance sensitivity \cite{Ofek,Gao2010,Bebchuk02,GarveyMilbourn03}. We instead focus on \emph{principal hedging}. The issue is not whether the agent privately neutralizes the contract, but whether the principal's access to financial markets changes the contract that is optimal \emph{ex ante}. Once external hedging is available, the principal can offload part of firm's risks to the market, which reduces the need to load that same risk onto the agent through compensation. External hedging and internal incentive provision are therefore partial substitutes.

More specifically, we consider a risk-averse commodity producer with market power that faces both demand risk and production risk and can hedge output in a competitive futures market where liquidity is supplied by a market maker. The producer can either retain production in-house or delegate it to a risk-averse agent whose effort is unobservable. Delegation therefore creates the usual incentive-insurance trade-off: making compensation more sensitive to performance improves incentives, but also exposes the agent to more risk. In parallel, access to the futures market gives the principal an additional instrument to transfer part of output price risk outside the firm. Our objective is to understand how these two margins interact, and how the availability of external hedging reshapes optimal production, compensation, and equilibrium futures prices.

From a methodological viewpoint, our framework builds on continuous-time agency models in the spirit of \cite{HM}; see \cite{Cvitanic12} for a broad overview of this literature. Relative to the classical setting, however, our environment introduces an endogenous production decision and separates demand shocks from production shocks, so that only part of revenue risk can be hedged in financial markets. This distinction is central for our purposes, because it generates a non-trivial interaction between real decisions, contract design, and hedging demand. The principal chooses the organizational form, the production policy, the compensation scheme, and the futures position, subject to the agent's incentive compatibility and participation constraints. This tractable CARA framework allows us to compare in-house production and delegated production within a unified equilibrium setting, and to isolate how external risk transfer feeds back into internal incentives.

This agency problem has direct implications for futures-market equilibrium. Because the producer has market power in the spot market, her futures position affects not only risk transfer but also real production incentives and, ultimately, the terminal spot price. Our analysis therefore also connects to the literature on forward trading by firms with market power. As reviewed by \cite{Anderson84}, a monopolist’s ability to hedge may induce a departure from the standard monopoly allocation, since futures positions create incentives to adjust output in ways that affect the profitability of the hedge. In \cite{Newberry84}, for instance, the presence of a futures market leads a risk-averse monopolist to expand equilibrium supply, and hedging demand increases with risk aversion. The seminal contributions of \cite{Allaz92} and \cite{Allaz93} further show that forward trading can soften market power by committing firms to higher output. 

Our contribution is to bring this market-power channel together with delegated production and optimal risk sharing inside the firm. In our setting, the producer’s futures demand depends not only on market power and risk aversion, but also on the way risk is allocated between principal and agent through the compensation scheme. This additional organizational margin is central to the paper. Delegation already provides an internal channel for sharing risk, which reduces the producer’s incentive to hedge externally relative to the in-house benchmark. As a result, the agency problem affects not only contract design but also equilibrium hedging pressure and, through market clearing, the equilibrium forward price. This is the sense in which our analysis differs from prior work focused primarily on the interaction between futures trading and market power alone.

\hs

\noindent{\em Contribution.}
The primary contribution of this paper is to analyze how a principal's access to a forward market reshapes internal incentive provision and equilibrium pricing in a monopolistic production problem under demand uncertainty. We develop a tractable continuous-time CARA framework that combines delegated production under moral hazard with forward-market participation, and that allows us to characterize jointly optimal production, compensation, hedging demand, and equilibrium forward prices under both in-house production and second-best delegation. Our analysis delivers three main economic insights.

\begin{enumerate}[\upshape (i)]
\item First, delegation may remain optimal even when the agent is more risk-averse than the principal, provided that the participation constraint is not too tight. The mechanism is a trade-off between a positive risk-sharing benefit and a negative information-rent component. As the agent's risk aversion rises, the benefit from internal risk sharing declines while the information-rent cost increases, thereby generating an endogenous boundary for profitable delegation.

\item Second, access to the forward market changes the internal allocation of risk within the firm. External hedging and internal incentive provision are partial substitutes: when the principal can transfer price risk to the forward market, she needs to expose the agent less to that same risk through compensation. In the constant-demand case, we show that this weaker insurance role of the contract lowers the agent’s expected compensation under delegation. Our numerical results further indicate that this mechanism remains robust in the presence of demand uncertainty.

\item Third, the firm's production and trading organisation feeds back into hedging demand and equilibrium prices. Because delegation already provides an internal channel for sharing risk, the principal hedges less under delegation than under in-house production. In equilibrium, this lower hedging demand under delegation translates into higher forward prices relative to the integrated benchmark. More generally, the paper identifies internal organizational design as a determinant of hedging pressure and equilibrium forward pricing.
\end{enumerate}

\noindent{\it Related literature.} Our paper relates to four strands of research: corporate risk management and delegated hedging; moral hazard, financial trading, and continuous-time contracting; equilibrium pricing in commodity futures markets; and industrial-organization perspectives on vertical arrangements and forward trading under market power.

First, our paper contributes to the corporate-finance literature on risk management. A large literature studies why firms hedge and how hedging interacts with investment, financing, and governance frictions; classic contributions include \cite{Froot93} and the synthesis in \cite{Stulz96}. Empirical evidence on delegated hedging in commodity-producing firms is provided by \cite{Tufano96}, while \cite{Fehle05} emphasize dynamic implications of corporate risk management. Relative to this literature, our contribution is to embed hedging in a delegated production problem with moral hazard, and to show that a firm's access to futures markets reshapes internal incentive provision. In our framework, external hedging and internal risk sharing are partial substitutes: the ability to transfer price risk outside the firm changes both the structure of compensation and the firm's equilibrium hedging demand.

Second, our paper relates to the literature on moral hazard in environments where agents or principals interact with financial markets. \cite{Kahn95} show that when effort is unobservable, allowing additional trades may weaken incentives and generate market failures. Our analysis complements this insight by focusing on \emph{principal hedging}: rather than studying how the agent offsets compensation risk through side trading, we characterize how the principal's own access to commodity futures markets changes the set of implementable incentive schemes in a delegated production problem. More generally, our framework is connected to the literature on asset pricing under moral hazard and to continuous-time contracting models. From a general-equilibrium perspective, \cite{Ramakrishnan84} and \cite{Ou-Yang05} study how moral hazard affects asset prices and risk premia. We instead work in partial equilibrium and focus on the joint determination of production, compensation, hedging, and equilibrium futures prices when the principal can both trade and influence prices through spot market power. Our tractable CARA specification also places the paper in the broader tradition of continuous-time agency models initiated by Holmström and Milgrom and extended in, among others, \cite{Sannikov08,DeMarzo06,Biais07}.

Third, the paper contributes to the literature on equilibrium pricing in commodity futures markets when hedging demand is endogenous. The intuition that hedging pressure affects futures premia goes back to \cite{Keynes30} and has been formalized and tested in modern settings such as \cite{Hirshleifer88,Bessembinder92}. The relation between futures premia, term structure, and storage-based mechanisms is documented in \cite{Fama87} and rationalized in competitive storage models such as \cite{Deaton92,Deaton96}. Our mechanism is different. We show that hedging pressure depends not only on inventories, storage, or exogenous exposures, but also on the firm's internal organizational design. Because delegation itself provides an internal channel for sharing risk, it reduces the producer's equilibrium demand for futures relative to in-house production, which in turn raises equilibrium forward prices. The paper therefore identifies internal incentives and delegated production as determinants of equilibrium futures pricing.

Fourth, our analysis also connects to work on commitment, firm organization, and forward trading under market power. \cite{Song12} shows that futures markets can serve as commitment devices by transferring resources across states. In our setting, futures markets do not substitute for the principal's role; instead, they interact with delegation and incentive constraints to shape both risk management and equilibrium pricing. Our analysis is also related to dynamic agency models in which informational feedback and strategic manipulation matter, as in \cite{Taub97}, and to recent quantitative work on dynamic moral hazard and firm dynamics such as \cite{Ai23}. It also speaks to recent work on managerial hedging in continuous time, where private hedging and termination considerations distort optimal performance evaluation \cite{Huang23}; by contrast, we study delegated real production under commodity-price risk and emphasize how the principal's access to financial markets feeds back into compensation design. Finally, from an industrial-organization perspective, our paper relates to the economic analysis of commodity trading firms and their vertical arrangements \cite{Pirrong14,Joskow03}, complements \cite{Aid11} on vertical integration as a substitute for hedging, and connects to the literature on forward trading and market power, where forward positions affect incentives and competition \cite{Mahenc04}.

\vspace{4mm}

The paper proceeds as follows: Section 2 lays out the model. Section 3 addresses the agency problem under the condition that the principal lacks access to the futures market. Section 4 then introduces the principal's access to the futures market and analyzes the resulting equilibrium price. Finally, Section 5 leverages the explicit nature of our findings to provide numerical illustrations.

\section{The model}

A risk-averse entrepreneur, hereafter the Principal, produces a perishable good and is subject to three types of risks: an idiosyncratic risk affecting the production, a market risk of falling output prices, to which is added a principal-agent problem of delegation when production requires the expertise of a risk-averse agent. We assume that both the principal and the agent have CARA preferences,
$$
U_P(x) = -  e^{- \eta_p x} \hbox{ and } U_A(x)= -  e^{-\eta_a x}.
$$
In order to state the optimization problem faced by the Principal, we first describe the dynamics of demand and production for the perishable good.\\

{\it Production technology:} The firm in our model produces a good whose sale is exclusively restricted to a prespecified future time, T. A salient illustration of this setup is that of an agricultural producer, such as a farmer, whose output necessarily awaits maturity. Production is inherently subject to uncertainty, which we conceptualize as a composite of two distinct components: a global, systematic risk—exemplified by phenomena like the pervasive effects of climate change—and an idiosyncratic risk, encompassing localized factors such as soil quality or specific micro-climatic conditions.

We make the assumption that the producer has a model for the estimation of his final production $Q_T$ and with a slight abuse of notation, we denote by $(Q_t)_{t\le T}$ the stochastic process that describes the evolution of the final production estimator at time $t$. Production can be managed by the owner (in-house producer) or delegated to an expert, hereafter the agent who shares the same belief regarding the final production estimator. In either case, the average production can be improved if the person in charge exerts an effort, which will be unobservable when the principal delegates to the agent. Without effort, the expected production level evolves as
 $$
 Q_t  =q + \int_0^t\sigma(\rho dW_s^\perp+\sqrt{1-\rho^2}dW_s^0),
 $$
where $(W_t^0)_{t\le T}$ is a standard Brownian motion modeling the idiosyncratic risk and $(W_t^\perp)_t$ is a standard Brownian motion modeling the global risk. We assume that $(W_t^0)_t$ and $(W_t^\perp)_t$ are independent Brownian motions on some probability space $(\Omega,{\mathcal F}_T,\mathbb{P}_0)$, while the correlation coefficient between the production and the global risk is $\rho \in (-1,1)$. The expected production $Q$ is assumed to be observable\footnote{  In the context of agricultural production, for example, apple farming, this estimator can be interpreted as being related to the observable number of apple trees before the blooming season. Historical production data provides relevant information about the average number of apples produced by such an orchard.}.\\
In the two situations (in-house or delegated), the person in charge impacts the average level of production by exerting effort $a$, but also supports a pecuniary quadratic cost
$\frac12 \frac{a^2}{\gamma},$ where $\gamma$ is a positive constant. The parameter $\gamma$ is a measure of the efficiency of the production technology: the higher, the better. Note that we make the assumption that the firm and its delegate share the same technology efficiency.  The uncertainty of production is thus influenced by the agent's decisions through the implementation of actions\footnote{For instance, in agricultural production, the agent needs to protect crops from the weather or increase prevention of sanitary risks.} to increase or decrease the level of production. 
Thus, the production level, when effort $(\alpha_t)_{t\le T}$ is exerted, evolves as
\[
dQ_t  = \alpha_t dt + \sigma(\rho dW_t^\perp+\sqrt{1-\rho^2}dW_t^\alpha) ,
\]
where $\left(W_t^\alpha=W_t^0-\int_0^t \frac{\alpha_s}{\sigma\sqrt{1-\rho^2}}\,ds \right)_t$ is a Brownian motion independent of $(W_t^\perp)_t$ under the equivalent probability measure $\P^\alpha$ defined  by
$$
\frac{d\P^\alpha}{d\P^0}\vert \mathcal{F}_T =\exp\left(-\int_0^T \frac{\alpha_s}{\sigma\sqrt{1-\rho^2}}\,dW_s^0-\frac12\int_0^T \frac{\alpha_s^2}{\sigma^2(1-\rho^2)}\,ds  \right).
$$

{\it Demand Uncertainty:} Because the firm is a monopoly, we assume that the market price will change depending on the quantity of good the monopoly firm supplies to the market at time $T$. The inverse demand (willingness to pay) function $P_T$ is assumed to be of the form
$$
P_T=P(S_T,Q_T)=S_T-\eps Q_T,
$$
where $Q_T$ is the final output level and $S_T$ is the value at time $T$ of a stochastic shift factor. In the following, we call $P_T$ the spot market price at time $T$.
 The parameter $\eps \ge 0$ is the slope of the inverse demand function. The lower the value of $\eps$, the more the firm has a dominant position in its market. We will interpret hereafter $\frac{1}{\eps}$ as a measure of the firm's market power. Several factors can lead to a demand shift, for example: changes in income levels or changes in the size of the market, but we will remain silent in this paper on the reasons for these changes in demand. It is rather assumed that the random fluctuations of $(S_t)_t$ are given by
$$
dS_t = \nu dW_t^\perp,
$$
where $\nu>0$. Unlike production, the demand shift $(S_t)$ is not affected by the effort process.
Finally, we denote by $\pi_T=S_TQ_T-\eps Q_T^2$ the firm revenues.\\

{\it Contractual agreement:} Under delegation, the agent's employment starts at time $0$ and ends when production is fully sold at time $T$. Assume that the principal  can commit to an employment contract which specifies a compensation $\xi$ which is assumed to be restricted to depend only on the observables $(Q, S)$. For a given compensation $\xi$, the objective of the agent is given by:
\begin{equation}\label{agentutility}
V^A(\xi) := \sup_{\alpha} J_{\rm A}(\alpha;\xi),
\end{equation}
where
\begin{align}
J_{\rm A}(\alpha;\xi) := \E^{\P^{\alpha}} \Big[  U_A\Big( \xi - \int_0^T  \frac{\alpha_t^2}{2\gamma}  dt \Big) \Big].
\end{align}
To alleviate the notation, we will define the effort cost function by $c(a) =  \frac{a^2}{2\gamma}$.

We will say that a contract $\xi$ is {\it incentive compatible}, if there exists an effort policy $\alpha^\star$ that maximises \eqref{agentutility}. Hence, for any incentive-compatible contract $\xi$, we define the set of best replies for the agent as
\[
\mathcal A^\star(\xi):=\bigg\{\alpha^\star :V^A(\xi) := J_{\rm A}(\alpha^\star;\xi)]\bigg\}.
\]

{\it Hedging:} To mitigate the demand and market fluctuations, we finally assume that the firm owner has the possibility to sell her production on a forward market at time $0$ at a forward price $F$. The financial profit and loss of a forward position is $H_T=\theta(P_T-F)$, where $\theta$ is the quantity of goods that the producer buys (resp. sells) forward if positive (resp. if negative).\\

We now turn our attention to the individuals who provide the counterparty in the forward market. Following the literature on forward commodity market equilibrium without market power (see \cite{Ekeland19} and the references therein) or in the presence of market power  (see \cite{Newberry84}, \cite{Hughes97}), we assume the existence of a rational CARA market-maker who provides liquidity in the forward market. One considers that the market maker maximizes the expected value of his CARA utility function (with risk aversion parameter $m>0$). Given a forward price $F$, the rational market maker anticipates the producer optimal demand $\theta_{\rm m}^*(F)$ which, in turn, determines the probability distribution of the pair $(S_T,Q_T)$ through the agent's best reply $\alpha^\ast(F)$. We will use notation $\P^F$ for $\P^{\alpha^\ast(F)}$ to highlight the influence of the forward price $F$ of the probability distribution induced by the agent's optimal effort $\alpha^\ast(F)$. Hence, the market maker faces the following optimization problem

\begin{align*}
&\sup_{\theta \in \R} \E^{F}\Big[ U_m\Big( \theta(P(S_T,Q_T) - F) \Big)\Big], 
\end{align*}
to determine the optimal demand $\theta_{\rm m}^*(F)$ for a given forward price $F$.

\hs

{\it Principal Problem:} The firm owner's income at time $T$ is the sum of the production income $\pi_T$ and financial income $H_T$. The risk management instruments at the disposal of the principal are the level of compensation $\xi$ to mitigate the agency problem and the level of production $\theta$ sold on the futures market to mitigate the market risk. The agent has an outside opportunity that he values with a certainty equivalent utility of $R_0$. We suppose
that both agents agree on a contract that delivers at least  expected utility of $R_0$ to the agent. Therefore, the principal's objective is 
\begin{equation}\label{Principal:PB}
\Vsbh := \sup_{\theta,\xi} J_{\rm P}(\theta,\xi),
\end{equation}
where
\begin{align*}
J_{\rm P}(\theta,\xi)&:= \E^{\P^{a^\star}} \Big[U_P( \pi_T + H_T - \xi )  \Big]
 =\E^{\P^{a^\star}} \Big[U_P( \pi_T +\theta(P_T-F)- \xi  \Big]
\end{align*}
together with the participation constraint of the agent $V^A(\xi) \geq R_0 = U_A(r_0)$. Assume, for a while, there is an optimal demand for hedging $\theta_{\rm p}(F)$ for the Principal to formally state the equilibrium definition we study in this paper.

\begin{Definition}
    ({\it Forward market equilibrium definition})\\
    A forward price $F^\ast$ is an equilibrium price if it satisfies the market-clearing condition $\theta_{\rm p}(F^\ast) + \theta_{\rm m}(F^\ast) = 0$.
\end{Definition}

\section{The unhedgeable risk setting}
\label{sec:no-hedge}

We first consider the case in which the producer cannot access the futures market. This initial assumption allows us to primarily investigate the ramifications of delegation concerning risk sharing. We first analyze the in-house production scenario, in which the firm maintains full control for producing. Subsequently, we turn our attention to the case of delegation, where production is outsourced to an external agent.

\subsection{In-house producer}\label{sec:no-hedge-0}

The in-house producer problem is given by,
\begin{align}\label{IntegratedValue}
V^{\rm i}(0,s,q) & := \sup_{\alpha} \E\Big[ U_P\Big( (S_T-\eps Q_T) Q_T -  \int_0^T c(\alpha_t) dt \Big)\Big], \\
dQ_t  &= \alpha_t dt + \sigma \big(\rho dW_t^\perp + \sqrt{1-\rho^2} dW_t\big), \quad dS_t =  \nu dW_t^\perp,\nonumber
\end{align}
where $(W_t^\perp)_t$ and $(W_t)_t$ are two independent Brownian motions and for initial conditions $S_0=s$ and $Q_0=q$. This problem is a standard stochastic control problem that yields the following Hamilton-Jacobi-Bellman (HJB) equation:
\begin{align}\label{HJB}
-\partial_tV &= \frac12 \sigma^2 V_{qq}  + \frac12 \nu^2 V_{ss} + \mu V_s 
+  \rho \sigma\nu V_{qs} + \sup_{a} \big\{  a V_q  + \eta_p c(a) V \big\}, \quad V(T,s,q) =  U_P(s-\eps q)q).
\end{align}

According to the standard verification theorem in stochastic control (see, e.g., \cite{YongZhou99}, Chapter 4), a smooth solution to the aforementioned Hamilton-Jacobi-Bellman (HJB) equation corresponds to the value function $V^{\rm i}$ and yields the optimal control $\alpha^*_i$. We will explicitly construct a solution to Equation \eqref{HJB} in the Appendix, leading to the subsequent Proposition \ref{OptiProd-no-hedge}. This proposition demonstrates that in an environment where the function $\Delta_{\rm i}$, as defined in Proposition \ref{OptiProd-no-hedge}, is strictly positive, firms possess a profitable incentive to undertake in-house production. Conditions ensuring the positivity of $\Delta_{\rm i}$ are readily identifiable. Two illustrative conditions include: the cost of effort being non-excessive relative to the inherent risk ($\gamma > \eta_p\sigma^2$), or the contract possessing a sufficiently short maturity. More precisely, let us define the functions,
\begin{align*}
\Delta_{\rm i}(t) & := 
1 
+ 2\big[\eps\gamma - \eta_p(\eps\sigma^2 -\rho\sigma\nu) \big](T-t)
+ \eta_p\nu^2\big[\gamma - \eta_p(1-\rho^2)\sigma^2\big](T-t)^2, \\
k_{\rm i}(t) & :=\frac{\gamma(2\eps + \eta_p\nu^2(T-t))}{\Delta_{\rm i}(t)}, \quad
\ell_i(t) := \frac{1 +  \eta_p\rho\sigma\nu(T-t)}{1+\frac{\eta_p\nu^2}{2\eps} (T-t)}, \quad
q_m(s) := \frac{s}{2\eps}.
\end{align*}

The following proposition provides the dynamics of the optimal production
\begin{Proposition}\label{OptiProd-no-hedge}
Assume $ \Delta_{\rm i}(t) >0$.
The dynamics of the optimal production is
\begin{align*}
dQ^{*}_t & = k_{\rm i}(t)
\big[  \ell_i(t) q_m(S_t)  - Q_t^*  \big] dt  + \sigma \big(\rho dW_t^\perp + \sqrt{1-\rho^2} dW_t\big).
\end{align*}
The optimal level of effort for the in-house producer is
\begin{equation}\label{effort_integrated}
    \alpha_{\rm i}^*(t,s,q)=k_{\rm i}(t)(\ell_i(t) q_m(s) - q).
\end{equation}
\end{Proposition}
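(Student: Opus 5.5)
The plan is to guess the value function in exponential–quadratic form, reduce the HJB equation \eqref{HJB} to a system of Riccati-type ODEs, solve them explicitly, and then read off the optimal feedback. Since the terminal condition is $U_P\big((s-\eps q)q\big)=-e^{-\eta_p (sq-\eps q^2)}$ and the dynamics are Gaussian with additive noise, I use the ansatz
\begin{equation*}
V(t,s,q) = -\exp\big(-\eta_p\, g(t,s,q)\big), \qquad g(t,s,q)= A(t)\,sq - B(t)\,q^2 + C(t)\,s^2 + D(t),
\end{equation*}
with $A(T)=1$, $B(T)=\eps$ and $C(T)=D(T)=0$. I take the drift $\mu$ of $S$ to be zero, consistent with $dS_t=\nu dW^\perp_t$.

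First, the Hamiltonian. Since $V_q=-\eta_p g_q V$ and $V<0$, the term $\sup_a\{aV_q+\eta_p c(a)V\}$ equals $\eta_p V \inf_a\{\tfrac{a^2}{2\gamma}-a g_q\}$. The infimum is attained at
\begin{equation*}
a^* = \gamma g_q = \gamma\big(A(t)s - 2B(t)q\big),
\end{equation*}
and its value is $-\tfrac{\gamma}{2}g_q^2$. Next, substitute $V_{qq}=(\eta_p^2 g_q^2-\eta_p g_{qq})V$, the analogous expressions for $V_{ss}$ and $V_{qs}$, and $\partial_t V=-\eta_p \partial_t g\,V$ into \eqref{HJB}, then divide by $\eta_p V$. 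This yields a PDE for $g$ that is quadratic in $g_q$ and $g_s$, in which the risk-aversion terms $-\tfrac{\eta_p}{2}(\sigma^2 g_q^2+\nu^2 g_s^2+2\rho\sigma\nu g_q g_s)$ compete with the effort gain $\tfrac{\gamma}{2}g_q^2$. Matching the coefficients of $q^2$, $sq$, $s^2$ and $1$ gives a closed system: a Riccati equation for $B$ coupled to $A$ and $C$, linear in the cross-terms, plus a simple quadrature for $D$.

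I will solve this system by the standard linearization, writing the Riccati unknowns as ratios of solutions of a linear second-order ODE in $\tau=T-t$. The claim to verify is that the common denominator is exactly the quadratic polynomial $\Delta_{\rm i}(t)$, with the $\tau^2$ coefficient $\eta_p\nu^2[\gamma-\eta_p(1-\rho^2)\sigma^2]$ coming from the determinant of the risk-adjusted quadratic form. The target identities are
\begin{equation*}
2\gamma B(t)=k_{\rm i}(t) \qquad\text{and}\qquad \frac{A(t)}{2B(t)}=\frac{\ell_i(t)}{2\eps}.
\end{equation*}
Given these, $a^*=2\gamma B\big(\tfrac{A}{2B}s-q\big)=k_{\rm i}(t)\big(\ell_i(t)q_m(s)-q\big)$, which is \eqref{effort_integrated}. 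Substituting this into the dynamics of $Q$ gives the stated SDE for $Q^*$. A practical check is to verify directly that the candidate $A,B,C$ built from $\Delta_{\rm i}$ satisfy the ODEs and terminal conditions, rather than deriving them from scratch.

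Finally, I will apply the verification theorem (\cite{YongZhou99}). The assumption $\Delta_{\rm i}>0$ on $[0,T]$ rules out blow-up of the Riccati solution, so $g$ is smooth and $V$ is a classical solution. Because the feedback is linear, $Q^*$ is Gaussian. The remaining work is to check the integrability condition $\E\big[\sup_t e^{-\eta_p g(t,S_t,Q_t)}\big]<\infty$ and the martingale property of the local martingale part, for the optimal control and for admissible competitors restricted to controls with suitable exponential moments.

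I expect the main obstacle to be the explicit solution of the coupled Riccati system and its identification with $\Delta_{\rm i}$, $k_{\rm i}$ and $\ell_i$, since the coupling through $\rho\sigma\nu$ makes the algebra delicate. A secondary difficulty is the verification step, since the controls are unbounded and the state is Gaussian. The cleanest route there is a localization argument with the exponential-quadratic moments controlled on the region where $\Delta_{\rm i}>0$.
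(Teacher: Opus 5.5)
Your plan is correct and is essentially the paper's argument: the paper also writes $V=U_P(v)$ with $v$ quadratic (your $g$, with no linear term), reduces the HJB to the matrix Riccati equation $\dot K_2=2K_2\Lambda_{\rm i}K_2$, $K_2(T)=M$, with $\Lambda_{\rm i}=\eta_pA-\gamma E_{22}$, and reads off $a^*=\gamma v_q$; in your notation this gives $A=(1+\eta_p\rho\sigma\nu\tau)/\Delta_{\rm i}$ and $B=(\eps+\tfrac12\eta_p\nu^2\tau)/\Delta_{\rm i}$, hence exactly your target identities. The step you flag as the main obstacle disappears in matrix form. Because this Riccati equation has no linear or constant term, $\frac{d}{dt}K_2^{-1}=-2\Lambda_{\rm i}$, so $K_2(t)=(I+2(T-t)M\Lambda_{\rm i})^{-1}M$ and $\Delta_{\rm i}=\det(I+2(T-t)M\Lambda_{\rm i})$ follow immediately. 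Note also that your scalar system is quadratic in all of $A,B,C$, not linear in the cross-terms, so scalar linearization is not the right tool, although your fallback of directly verifying the candidate works.
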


\proof See the Appendix \ref{proofoptiprodnohedge}.\ep\\

The optimal production quantity follows an Ornstein-Ulhenbeck process, characterized by a stochastic mean-reversion level. This level is, up to a deterministic scaling factor, equivalent to the production level $q_m(S_t)$ that maximizes instantaneous revenues at time $t$. The stochastic nature of $Q^*_t$ incorporates the volatility of the demand shift factor $S_t$. Notably, this holds true even when the production risk and demand risk are uncorrelated.
Consequently, the dynamics of the optimal production estimator constitute a mean-reverting process. However, the production can significantly deviate from the monopolist's production level $q_m(S_t)$, with the extent of this deviation contingent upon the producer's risk-aversion $\eta_p$, the demand volatility $\nu$ and the market power $\eps$. These findings align with the linear specification presented by \cite{RamanChatterjee95}, who characterize optimal monopoly pricing under demand uncertainty. In the absence of demand uncertainty (i.e., $\nu=0$), the optimal production precisely tracks the optimal monopoly production $q_m(s)$.

\subsection{Delegation}\label{sec:no-hedge-2}

Our attention now shifts to the situation where the principal offers a take-it-or-leave-it contract to an agent. This agent undertakes an unobservable effort to produce the good, receiving $\xi$ as compensation. Once the contract is accepted, the agent's problem is
\begin{align*}
& V^{\rm A}(\xi) := \sup_{\alpha} \E \Big[  U_A\Big( \xi - \int_0^T  \frac{\alpha_t^2}{2\gamma}  dt \Big) \Big], \\
& dQ_t = \alpha_t dt + \sigma \big(\rho dW_t^\perp + \sqrt{1-\rho^2} dW^\alpha_t\big),\\
& dS_t = \nu dW_t^\perp.
\end{align*}
The now well-established literature on dynamic contracting in the Holmstrom-Milgrom setting (see \cite{SchattlerSung} and \cite{Cvitanic18}) gives the form of the compensation $\xi$ as the value at time $T$ of the so-called expected wage process $Y_T^Z$, where, following \cite{SchattlerSung} and \cite{Cvitanic18}, $Z$ is a bi-dimensional control $Z=(Z_1,Z_2)$ where $Z_1$ and $Z_2$ are the contract sensitivities with respect to the market risk and production, respectively. Therefore, when the principal offers a contract $Y_T^Z$, the agent optimally exerts the effort $\alpha$ which is uniquely determined as the linear function $\alpha=\gamma Z_2$. It is important to recognize that market risk sensitivity $Z_1$ operates solely as a risk-sharing mechanism, not an incentive mechanism. Its purpose is confined to the reallocation of market risk between the principal and agent, serving to ameliorate the principal's exposure without providing direct motivation for specific effort levels. Specifically, the choice of a control $Z$ influences the probability measure used to evaluate final production solely through its second component $Z_2$.\\
In the following, we will precisely describe the compensation mechanism. To do so, we first introduce the relevant notation:
\begin{align}\label{eq:Ci}
\Sigma:= \left(\begin{array}{cc}\nu&0\\ \sigma\rho&\sigma\sqrt{1-\rho^2}\end{array} \right),  \quad A=\Sigma\Sigma^\tp, \quad E_{22}:=\left(\begin{array}{cc}0&0\\ 0&1\end{array} \right) \quad \text{and} \quad
C_j := \gamma E_{22}+\eta_j A, \quad  j\in \{ a,p\}, 
\end{align}
the contract is given by  $\xi=Y_T^Z$ where the dynamics of $Y_t^Z$ is under $\P^{Z_2}$,
\begin{align*} 
dY_t & = \frac12\big(\gamma |E_{22}Z_t|^2 + \eta_a |\Sigma^\top Z_t |^2 \big)dt + Z_t \mcdot \Sigma d\mathbb{W}_t
= \frac12 Z_t \mcdot C_a  Z_tdt +Z_t \mcdot \Sigma d\mathbb{W}_t,  
\end{align*}
where  $\mathbb{W} := (W^\perp, W^{Z_2})$ is a bi-dimensional standard Brownian motion under the probability measure $\P^{Z_2}$.
Starting from $x=(s,q)$ and assuming without loss of optimality that the participation constraint is binding, we obtain the principal value function, denoted by  $V^{\rm P}_{NH}$, when the principal  is precluded from participating in the futures market. We have,
\b*
V^{\rm P}_{NH}
=
-U_{\rm P}(-r_0)
V_{\rm d}(0,x),
&\mbox{with}&
V_{\rm d}(0,x)
:=
\sup_Z \E^{\P^{Z_2}}\Big[ U_{\rm P}\Big( \pi_T - Y_T^Z +r_0\Big)\Big]=\sup_Z \E^{\P^{Z_2}}J_{\rm d}(x,Z) ,
\e*
The function $V_{\rm d}$ represents the principal value resulting from the delegation, regardless of the agent's participation constraint $r_0$. Using the Girsanov Theorem, we observe that the performance function $J_{\rm d}$ may be rewritten as
\begin{align*}
J_{\rm d}(x,Z) =&
\E^\Q\Big[ U_{\rm P}\Big( (S_T-\eps Q_T) Q_T - \frac12\int_0^T Z_t\!\cdot\! \bar C Z_t dt \Big)\Big],
\end{align*}
with $
 \bar{C} := \gamma E_{22} + (\eta_a+\eta_p) A$
and where $\Q$ is the equivalent probability measure to $\P^{Z_2}$ with density
\begin{align*}
\frac{d\Q}{d\P^{Z_2}}\vert \mathcal{F}_T = \Ec\Big(\int_0^.\eta_p\Sigma^\tp  Z_t\!\cdot\!d\mathbb{W}_t\Big)_T.
\end{align*}  
Thus, the process
$\mathbb{B}_t := \mathbb{W}_t-\int_0^t\eta_p\Sigma^\tp  Z_sds$ 
defines a $\Q-$ Brownian Motion and we have the following dynamics for the process $X_t:=(S_t,Q_t)^\tp $:
\begin{align*}
dX_t &= C_p  Z_t dt + \Sigma d\mathbb{B}_t,
\end{align*}
The principal is therefore confronted with the linear quadratic stochastic control problem
$$
V^{\rm d}(0,x)=\sup_{Z} \E^\Q\Big[ U_{\rm P}\Big( X_T \cdot MX_T - \frac12\int_0^T Z_t\!\cdot\! \bar C Z_t dt \Big)\Big], \quad
M := \begin{pmatrix} 0 & \frac12 \\ \frac12 & - \eps \end{pmatrix}.
$$
In line with the analysis in the setting without delegation, we define the following functions and parameters:
\begin{align}\label{eq:k-ell-delegation}
\Delta_{\rm d}(t) & = 1 + 2\big[\eps\gamma_d - \bar\eta(\eps\sigma^2 - \rho\sigma\nu) \big](T-t)
+ \bar\eta\nu^2\big[\gamma_d - \bar\eta(1-\rho^2)\sigma^2 \big](T-t)^2, \\
k_{\rm d}(t) & =\gamma \frac{\gamma+\eta_p(1-\rho^2)\sigma^2}{\gamma + (\eta_a+\eta_p)(1-\rho^2)\sigma^2} \frac{2\eps + \bar\eta\nu^2(T-t)}{\Delta_{\rm d}(t)},\quad
\ell_{\rm d}(t) = \frac{1 +  \bar \eta\rho\sigma\nu(T-t)}{1+\frac{\bar \eta \nu^2}{2\eps} (T-t)}, \\
\bar \eta&=\frac{\eta_a\eta_p}{\eta_a+\eta_p},\quad
\gamma_d  =  \gamma\Big( 1 - \frac{(1-\rho^2)\sigma^2 \eta_a^2}{ (\eta_a+\eta_p)[\gamma +(\eta_a+\eta_p)(1-\rho^2)\sigma^2]}
\Big).
\end{align}

We provide the explicit solution to this problem in the following proposition.
\begin{Proposition}\label{Delegation-no-hedge}
   Assume $\Delta_{\rm d}(t) >0$. The dynamics of the optimal production is
    \begin{align*}
dQ^{*}_t & = k_{\rm d}(t)
\big[  \ell_{\rm d}(t) q_m(S_t)  - Q_t^*  \big] dt  + \sigma \big(\rho dW_t^\perp + \sqrt{1-\rho^2} dW_t^*\big).
\end{align*}
The optimal level of effort for the delegated producer is
\begin{equation}\label{effort_delegated}
    \alpha_{\rm d}^*(t,s,q)=k_{\rm d}(t)(\ell_d(t)\frac{s}{2\eps}-q).
\end{equation}
\end{Proposition}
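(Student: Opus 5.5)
We solve the reduced linear–quadratic problem for $V^{\rm d}$ with an exponential–quadratic ansatz in the HJB equation, exactly as in the proof of Proposition \ref{OptiProd-no-hedge}. Under $\Q$ the state $X=(S,Q)^\tp$ has dynamics $dX_t = C_pZ_tdt+\Sigma d\mathbb{B}_t$, the running cost is $\frac12 Z\cdot\bar C Z$, and the terminal reward is $X_T\cdot MX_T$. The HJB equation is therefore
\begin{align*}
-\partial_t v = \tfrac12 \mathrm{Tr}\big[A D^2v\big] + \sup_{z\in\R^2}\Big\{ C_pz\cdot Dv + \tfrac{\eta_p}{2}\, z\cdot\bar C z\, v\Big\},
\qquad v(T,x)=U_P(x\cdot Mx).
\end{align*}
We look for a solution $v(t,x)=-\exp\big(-\eta_p[x\cdot P(t)x + b(t)]\big)$ with $P(T)=M$, where $P$ is symmetric and deterministic.

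Since $v<0$, the supremum is a strictly concave quadratic problem in $z$. Substituting $Dv=-2\eta_p vPx$ gives the maximiser
\begin{align*}
z^*(t,x)=2\bar C^{-1}C_pP(t)x,
\end{align*}
and the HJB reduces to a matrix Riccati equation
\begin{align*}
\dot P + 2PC_p\bar C^{-1}C_pP - 2\eta_p PAP=0,\qquad P(T)=M,
\end{align*}
together with the trivial equation $\dot b = -\mathrm{Tr}[AP]$. The Riccati equation has the form $\dot P = -PKP$ with $K:=2(C_p\bar C^{-1}C_p-\eta_pA)$. We linearise it through $P^{-1}$, which gives $\frac{d}{dt}P^{-1}=K$, so $P(t)^{-1}=M^{-1}-(T-t)K$.

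Here $M$ is invertible, with $\det M=-\frac14$, so the solution is
\begin{align*}
P(t)=\big(I-(T-t)MK\big)^{-1}M,
\end{align*}
which exists as long as $\det\big(I-(T-t)MK\big)\neq 0$. The main computational step is to write $K$ explicitly from $\bar C=\gamma E_{22}+(\eta_a+\eta_p)A$ and $C_p=\gamma E_{22}+\eta_pA$. We then identify this determinant with $\Delta_{\rm d}(t)$, up to a positive constant. The reduced constants should come out as follows:
\begin{itemize}
\item after simplifying $C_p\bar C^{-1}C_p-\eta_pA$, the effective risk aversion in the $\nu$-direction becomes $\bar\eta=\eta_a\eta_p/(\eta_a+\eta_p)$;
\item the effective efficiency $\gamma_d$ arises from a Schur-complement computation in the $(2,2)$ entry.
\end{itemize}
This algebra is the part I expect to be the main obstacle. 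The way to handle it is to use the structure of $A$ and $E_{22}$: every entry of $K$ is rational in $\gamma$, with common denominator $\det\bar C\propto \gamma+(\eta_a+\eta_p)(1-\rho^2)\sigma^2$ after factoring out $\nu^2$. The resulting determinant is then compared with the expression for $\Delta_{\rm i}$, with $\eta_p$ replaced by $\bar\eta$ and $\gamma$ by $\gamma_d$. The assumption $\Delta_{\rm d}>0$ on $[0,T]$ guarantees that $P$ is well defined and bounded on $[0,T]$.

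With $P$ in hand, the effort is recovered from the IC characterisation $\alpha=\gamma Z_2$, giving $\alpha^*=\gamma\, e_2\cdot z^*(t,x)=2\gamma\, e_2\cdot \bar C^{-1}C_pP(t)x$. This is linear in $(s,q)$, of the form $\kappa_s(t)s-\kappa_q(t)q$. We identify $\kappa_q=k_{\rm d}$; the prefactor $\frac{\gamma+\eta_p(1-\rho^2)\sigma^2}{\gamma+(\eta_a+\eta_p)(1-\rho^2)\sigma^2}$ comes from the second row of $\bar C^{-1}C_p$. We also identify $\kappa_s/\kappa_q=\ell_{\rm d}/(2\eps)$. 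Plugging $\alpha^*$ into $dQ_t=\alpha_tdt+\sigma(\rho dW^\perp_t+\sqrt{1-\rho^2}dW^*_t)$ under $\P^{Z_2^*}$ yields the stated dynamics.

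Finally, verification follows the standard argument cited for Proposition \ref{OptiProd-no-hedge}. Because $z^*$ is linear in $x$ with bounded coefficients, the state is Gaussian. Novikov's condition holds on small intervals, and hence by the usual linear-growth argument on $[0,T]$, so $\Q$ and $\P^{Z_2}$ are well defined. For an arbitrary admissible $Z$, Itô's formula applied to $v(t,X_t)\exp\big(\eta_p\int_0^t\frac12Z\cdot\bar CZ\,ds\big)$ gives a local supermartingale, which is a true martingale at $Z^*$. Combined with the integrability of the exponential–quadratic functional when $\Delta_{\rm d}>0$, this proves optimality.
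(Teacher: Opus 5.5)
Your proposal follows the paper's proof: the HJB equation for the $\Q$-reduced linear-quadratic problem, the maximiser $\hat z=\bar C^{-1}C_pDv$, and the Riccati equation solved as $K_2(t)=(I+2(T-t)M\Lambda_{\rm d})^{-1}M$. Here $\Lambda_{\rm d}=\eta_pA-C_p\bar C^{-1}C_p=\bar\eta A-\gamma_dE_{22}$ (your $K$ equals $-2\Lambda_{\rm d}$), and the effort is read off from the second row $\bigl(0,\tfrac{\gamma+\eta_p(1-\rho^2)\sigma^2}{\gamma+(\eta_a+\eta_p)(1-\rho^2)\sigma^2}\bigr)$ of $\bar C^{-1}C_p$. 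The paper simply carries out the algebra you defer; note that the determinant equals $\Delta_{\rm d}(t)$ exactly, not only up to a positive constant, and your verification sketch goes beyond what the paper itself provides.
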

\proof See the Appendix \ref{proofdelegationnohedge}.\ep\\

The two propositions \ref{OptiProd-no-hedge} and \ref{Delegation-no-hedge} establish that the optimal production strategies in both the in-house and delegated scenarios exhibit an identical structural form, differing solely in their underlying parameter values. This remarkable symmetry is not merely an incidental observation; rather, it offers a significant methodological advantage. By revealing a consistent mathematical framework across these distinct organizational structures, this congruence facilitates direct and meaningful comparisons between the outcomes of in-house production and contractual delegation. Specifically, it enables a more precise and rigorous assessment of the moral hazard risk inherent in delegation. Because the fundamental relationships governing production remain stable, variations in performance or efficiency can be more directly attributed to the presence and management of agency problems, rather than to disparate underlying models. This structural parallelism thus provides a clear lens through which to evaluate the costs and benefits of each organizational form, particularly concerning the mitigation or exacerbation of moral hazard.\\
In particular, the impact of delegation can be measured by considering the agent's risk aversion. The sufficient condition $\gamma_d > \bar\eta \sigma^2$ to have $\Delta_{\rm d}(t)>0$ for all $t \in [0,T] $ is always satisfied for low values of the agent risk aversion parameter $\eta_a$, regardless of the value of the principal risk aversion parameter $\eta_p$. Comparing effort levels between a vertically in-house producer and a delegated situation is not straightforward and is postponed to Section \ref{sec:illustration}. Nonetheless, the following observation is interesting: the optimal effort level of a risk-neutral agent is equal to the optimal effort level of a vertically in-house producer in the absence of market risk, namely 
$$
\alpha_d^*(t,s,q)=\frac{\gamma2\eps}{1+\gamma2\eps(T-t)}\Big(\frac{s}{2\eps} -q\Big).
$$
This implies that a producer who has fully hedged their demand risk will produce the same quantity as an owner selling their asset to a risk-neutral agent. Consequently, in the absence of a futures market to hedge demand risk, a principal with production capability may gain from delegating to an agent with low risk aversion. This suggests that delegation can be viewed as a substitute for the absence of a futures market.

\subsection{Delegation vs. in-house production} 
In this section, we will leverage the structural similarities observed between the equations defining the principal's value under the two different operational modes. This commonality in their mathematical forms provides a powerful analytical tool. By exploiting these resemblances, we can systematically investigate and delineate the precise conditions under which delegation becomes a desirable strategic choice. Our approach will focus on identifying the specific parameter constellations or environmental factors that render the principal's value, when derived through a delegated structure, superior to alternative arrangements, such as in-house production. This comparative analysis, facilitated by the analogous nature of the value functions, will thus illuminate the fundamental drivers behind the decision to delegate.
We reiterate that the certainty equivalent value functions $v^{\rm i}$ 
  (in-house producer) and $v^{\rm d}$
  (delegation) correspond to the solutions of the subsequent Partial Differential Equations (PDEs):
\begin{align*}
&\Lc( \Lambda^j,v^j) =0, \quad v^{j}(T,x) = x \mcdot M x, \quad
j \in \{ {\rm i}, {\rm d}\}, 
\end{align*}
 where $\Lc(\Lambda,v) := \partial_t v   +\frac12 Tr(AD^2v)- \frac12 Dv \mcdot \Lambda Dv$ with 
$\Lambda^{\rm d}  = \bar \eta A-\gamma_d E_{22}$ and  $\Lambda^{\rm i}  = \Lambda^{\rm d}  + L$ where 
\begin{align*}
L &:= \frac{\eta_p^2}{\eta_a+\eta_p} \begin{pmatrix}
\nu^2  &  \rho\sigma\nu\\
&\\
 \rho\sigma\nu  & \sigma^2(\rho^2 + \ell)
 \end{pmatrix}, \quad
 \ell := \frac{\eta_a+\eta_p}{\eta_p^2} \frac{(1-\rho^2)\big[\gamma(\eta_p-\eta_a)+ (1-\rho^2)\eta_p^2\sigma^2\big]}{\gamma + (\eta_a+\eta_p)(1-\rho^2)\sigma^2}.
\end{align*}
The relationship directly reveals that $\Lambda^{\rm d} \leq \Lambda^{\rm i}$ if $\ell \ge 0$. This observation underpins the following proposition.

\begin{Proposition}
Delegation improves the producer’s payoff when $v^{\rm i}(s,q) \le v^{\rm d}(s,q)$, which holds true  if  
\begin{align}\label{eq:cond-del-int}
\eta_a \leq \eta^* \equiv\eta_p \Big(1 + \frac{(1-\rho^2)\eta_p\sigma^2}{\gamma}\Big). 
\end{align}
\end{Proposition}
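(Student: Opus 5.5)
The argument is a comparison principle for the terminal-value problem $\Lc(\Lambda,v)=0$, $v(T,x)=x\mcdot Mx$, which is monotone in $\Lambda$, combined with the algebraic reduction $\Lambda^{\rm i}=\Lambda^{\rm d}+L$. The plan has three steps.

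\textbf{Step 1: reduce to $L\ge 0$.} Write $L=\frac{\eta_p^2}{\eta_a+\eta_p}\big(\tilde L+\sigma^2\ell E_{22}\big)$ with
$$\tilde L=\begin{pmatrix}\nu^2&\rho\sigma\nu\\ \rho\sigma\nu&\rho^2\sigma^2\end{pmatrix}=(\nu,\rho\sigma)^\tp(\nu,\rho\sigma).$$
The matrix $\tilde L$ is positive semidefinite with rank one, so $L\ge 0$ whenever $\ell\ge 0$. Since $\gamma+(\eta_a+\eta_p)(1-\rho^2)\sigma^2>0$, the sign of $\ell$ is the sign of $\gamma(\eta_p-\eta_a)+(1-\rho^2)\eta_p^2\sigma^2$. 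This is nonnegative exactly when
$$\eta_a\le \eta_p+\frac{(1-\rho^2)\eta_p^2\sigma^2}{\gamma}=\eta^*.$$
So \eqref{eq:cond-del-int} gives $\Lambda^{\rm d}\le\Lambda^{\rm i}$ in the sense of quadratic forms.

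\textbf{Step 2: compare the two solutions.} Both equations come from LQ problems, so I would use the quadratic ansatz from the Appendix proofs of Propositions \ref{OptiProd-no-hedge} and \ref{Delegation-no-hedge}: $v^j(t,x)=x\mcdot P_j(t)x+b_j(t)$. Substituting gives the Riccati equation
$$\dot P_j=2P_j\Lambda^jP_j,\qquad P_j(T)=M,$$
together with $\dot b_j=-\mathrm{Tr}(AP_j)$ and $b_j(T)=0$. Under $\Delta_j>0$, which is the standing assumption, these solutions exist on $[0,T]$.

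Next, set $D=P_{\rm d}-P_{\rm i}$. A direct computation gives
$$\dot D=2P_{\rm d}\Lambda^{\rm d}P_{\rm d}-2P_{\rm i}\Lambda^{\rm i}P_{\rm i}=G(t)-2P_{\rm i}LP_{\rm i},$$
where $G$ is linear in $D$, namely $G=2(D\Lambda^{\rm d}P_{\rm d}+P_{\rm i}\Lambda^{\rm d}D)$. Writing it in this symmetric form as a Lyapunov equation, $\dot D=D\,\Lambda^{\rm d}P_{\rm d}+(\Lambda^{\rm d}P_{\rm d})^\tp D+\ldots-2P_{\rm i}LP_{\rm i}$ with $D(T)=0$, the solution is
$$D(t)=2\int_t^T\Phi(s,t)^\tp\, P_{\rm i}(s)\,L\,P_{\rm i}(s)\,\Phi(s,t)\,ds\;\ge 0,$$
where $\Phi$ is the resolvent of the linear part. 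Hence $P_{\rm d}\ge P_{\rm i}$.

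The inequality for the constant term then follows from $b_{\rm d}-b_{\rm i}=\int_t^T\mathrm{Tr}\big(A(P_{\rm d}-P_{\rm i})\big)\,ds\ge 0$, because the trace of a product of two PSD matrices is nonnegative. Together these give $v^{\rm d}\ge v^{\rm i}$.

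An alternative for Step 2 is a stochastic-control argument. Write the certainty equivalents as $v^j=\sup_z\E[\cdots-\frac12\int z\mcdot(\Lambda^j)^{-1}z]$, where the problem is posed under a duality form of $-\frac12 p\mcdot\Lambda p$ when $\Lambda$ is indefinite. A larger penalty can then only lower the value. I prefer the Riccati route because $\Lambda^j$ is indefinite, which makes the duality form awkward.

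\textbf{Main obstacle.} The hard part is the Riccati comparison: getting the difference equation into a clean Lyapunov form with a PSD source term, and making sure the linear part is symmetric so that the integral representation above is valid. If that gets messy, my fallback is a viscosity comparison. I would show that $v^{\rm i}$ is a subsolution of the delegation equation, since
$$\Lc(\Lambda^{\rm d},v^{\rm i})=\tfrac12\,Dv^{\rm i}\mcdot L\,Dv^{\rm i}\ge 0,$$
and then apply the comparison principle for the explicit quadratic classical solutions. Existence of those solutions is guaranteed by the positivity of $\Delta_{\rm i}$ and $\Delta_{\rm d}$.
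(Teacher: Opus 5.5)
Your Step 1 is the paper's argument: $\Lambda^{\rm i}=\Lambda^{\rm d}+L$, and $L\ge 0$ once $\ell\ge 0$, i.e. once $\eta_a\le\eta^*$. Your viscosity fallback is the paper's Step 2, which consists only of invoking a ``standard comparison result'' for $\Lc(\Lambda,v)=0$. Your main route proves that comparison by hand on the Riccati coefficients. It is correct, but the Lyapunov form you display does not close as written. If you take $2\Lambda^{\rm d}P_{\rm d}$ as the linear coefficient, the leftover is $-2D\Lambda^{\rm d}D$; with $P_{\rm i}$ it is $+2D\Lambda^{\rm d}D$. That is a quadratic term with no definite sign, since $\Lambda^{\rm d}$ is in general indefinite, so your integral formula would not follow. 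The fix is to average the two, using the identity
\[
2\big(P_{\rm d}\Lambda^{\rm d}P_{\rm d}-P_{\rm i}\Lambda^{\rm d}P_{\rm i}\big)=D\,\Lambda^{\rm d}(P_{\rm d}+P_{\rm i})+(P_{\rm d}+P_{\rm i})\Lambda^{\rm d}D .
\]
Then your matrix $D$ solves the exact linear equation $\dot D=D\mathcal{A}+\mathcal{A}^\tp D-2P_{\rm i}LP_{\rm i}$, $D(T)=0$, with $\mathcal{A}:=\Lambda^{\rm d}(P_{\rm d}+P_{\rm i})$. This coefficient is continuous on $[0,T]$ as soon as $\Delta_{\rm i},\Delta_{\rm d}>0$ there. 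Your representation $D(t)=2\int_t^T\Phi(s,t)^\tp P_{\rm i}LP_{\rm i}\Phi(s,t)\,ds\ge 0$ then holds with $\partial_t\Phi(s,t)=\Phi(s,t)\mathcal{A}(t)$ and $\Phi(s,s)=I$. Your trace step for the constant term finishes the proof.

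Compared with the paper, your route is more self-contained. The paper never checks that a comparison principle applies in its setting, where solutions have quadratic growth and the Hamiltonian is quadratic in the gradient with indefinite $\Lambda$. You need only the existence of the two explicit Riccati solutions, and you get an explicit nonnegative formula for the gap $v^{\rm d}-v^{\rm i}$. The two arguments are really the same idea at different levels. The function $w:=v^{\rm d}-v^{\rm i}$ solves the linear equation
\[
\partial_t w+\tfrac12\mathrm{Tr}(AD^2w)-\tfrac12\big(Dv^{\rm d}+Dv^{\rm i}\big)\mcdot\Lambda^{\rm d}Dw+\tfrac12\,Dv^{\rm i}\mcdot L\,Dv^{\rm i}=0,\qquad w(T,\cdot)=0 .
\]
Its drift $-\mathcal{A}x$ is linear and its source is nonnegative, so Feynman--Kac gives $w\ge 0$ directly. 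The averaged gradient there is exactly your averaged coefficient $P_{\rm d}+P_{\rm i}$. With the repair above, the viscosity fallback is unnecessary.
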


The proof relies on a standard comparison result on the PDEs satisfied by the two functions $v^{\rm i}$ and $v^{\rm d}$, where we have $v^{\rm i} \leq v^{\rm d}$ if condition~\eqref{eq:cond-del-int} applies.
A key question explored in this paper pertains to identifying the circumstances under which the principal will favor delegation over in-house production. More precisely, given an agent's participation constraint of $r_0$, the principal, when in state $(s,q)$, will opt for delegation if
\begin{equation}
    \label{eq:PCforDeleg}
v^{\rm i}(s,q)+r_0 \le v^{\rm d}(s,q).
\end{equation}
Thus, delegation is optimal if both conditions \eqref{eq:cond-del-int} and \eqref{eq:PCforDeleg} apply. We observe that the firm can be better off delegating production risk to an agent who is even more risk averse than she is, at least if the participation constraint is not too tight given the market conditions prevailing at the time the contract is signed. All else being equal, profitable delegation is more likely when production technology is not efficient, ($\gamma$ low), or when the market risk is high, ($\sigma$ high). 

Let $v^{\rm d,fb}$ denote the certainty equivalent of the producer under first-best delegation and $v^{\rm i}$ the certainty equivalent under in-house production. The net gain or loss from delegation, which we denote as $\mathcal{B}$, is defined as $\mathcal{B} := v^{\rm d,sb} - r_0 - v^{\rm i}_0$. This gain can be decomposed into two components:
\begin{align*}
\mathcal{B} = \underbrace{v^{\rm d,sb} - v^{\rm d,fb}}_{\text{information rent}} + \underbrace{v^{\rm d,fb} - v^{\rm i}_0}_{\text{risk sharing premium}} - \,\,\, r_0
\end{align*}
It is well-established that the information rent is non-positive and its absolute value is a non-decreasing function of the agent's risk aversion, $\eta_a$. Conversely, the risk-sharing premium is always non-negative and decreases with $\eta_a$. Consequently, as the agent's risk aversion increases, the benefits from risk sharing diminish and are eventually outweighed by the cost of the information rent. Numerical illustration of this comparative static is given in Section~\ref{sec:illustration}.

When  $\rho^2=1$ , the sufficient condition for delegation, as articulated in equation \eqref{eq:cond-del-int}  holds true for agents who are no more risk-averse than the principal. In this specific case of perfect correlation between production and demand shocks, the principal can perfectly infer the agent's effort level by observing the demand shift, $S$. This allows the principal to recover the first-best value, a result confirmed by the solution to the first-best problem detailed in  Appendix \ref{app:first-best}.

As a particular case of the condition \eqref{eq:cond-del-int}, a risk-averse Principal's welfare is enhanced by engaging a risk-neutral Agent for delegated tasks, as opposed to operating under an in-house framework. Consequently, the optimal contractual form is easily derivable, which, as anticipated, provides the Principal with full assurance.
To see this, let us consider the contract defined as
\begin{align}\label{eq:Yetaa0}
Y^*_T = r_0 - \bar v + (S_T - \eps Q_T) Q_T,  \quad
\text{where} \quad
\bar v := \sup_{\alpha} \E\Big[(S_T - \eps Q_T) Q_T - \int_0^T c(\alpha_t) dt\Big].
\end{align}
Clearly, it satisfies the participation constraint of the Agent because $V^{\rm A}(Y^*_T)=\sup_{\alpha} \E\Big[Y^*_T - \int_0^T c(\alpha_t) dt\Big]=r_0$.
On the other hand, take any contract $Y_T^Z$ with $Z=(Z_1,Z_2)$ satisfying the incentive-compatibility and participation constraint conditions. The optimal effort associated to $Y_T^Z$ is $\alpha^*(Z)=Z_2$ and the participation constraint implies
$$
\E\big[ Y_T^Z \big]\ge r_0 +\E\Big[\int_0^T c(\alpha^*_t) dt\Big].
$$
Applying Jensen's inequality, we obtain
\begin{align*}
    V^{\rm P}(Y_T) &= \E\big[U_{\rm P}\big((S_T - \eps Q_T) Q_T - Y_T^Z\big)\big]\\
    &\le U_{\rm P}\Big( \E\big[(S_T - \eps Q_T) Q_T - Y_T^Z\big] \Big)\\
    &\le U_{\rm P}\Big( \E\big[(S_T - \eps Q_T) Q_T - \int_0^T c(\alpha^*_t) dt -r_0\big] \Big)\\
    &\le U_{\rm P}(\bar v - r_0),
\end{align*}
which proves the optimality of the full insurance contract.
It is noteworthy that this full insurance is completely unaffected by the quantity of risk sources and their correlation. Preferences alone determine this perfect hedging result, consequently eliminating the necessity of a forward market. Hereafter, we will assume that $\eta_a>0$ to explore the interplay between the agency problem and the forward market.

\section{Static hedging with forward contracts}\label{sec:stat-hedge}

This section focuses on the optimal production and hedging decisions of a monopolist CARA Principal. The Principal possesses the capability for direct production but also considers the alternative of engaging a CARA Agent. Additionally, the Principal has the opportunity to hedge using a forward contract at time $0$. Following the methodology of the preceding section, we commence by analyzing the case of an in-house producer.

\subsection{In-house producer}\label{sec:stat-hedge-0}

This study considers a monopolistic producer with an initial opportunity (at $t=0$) to trade a quantity $\theta$ of future production at a specified forward price $F$. Engagement in a forward market introduces a strategic incentive for producers to control output, aiming to enhance their forward market stance. For instance, a producer with a short forward position may be motivated to increase production, thereby potentially depressing spot prices. The market-maker, acting as the forward market liquidity provider, will internalize this strategic behavior when establishing the forward price $F$. Consequently, this framework necessitates an examination of market equilibrium.\\

Let us begin by examining the producer's decision problem, assuming the forward price $F$ is given. For a given position $\theta$, the in-house producer faces the following production problem starting from $S_0=s$ and $Q_0=q$,
\begin{align}\label{pb:hedgedemand}
&\sup_{\alpha} \E\Big[ U_P\Big( P(S^s_T,Q^q_T) Q^q_T  + \theta\big[P(S^s_T,Q^q_T) - F\big] - \int_0^T c(\alpha_t)dt  \Big)\Big]\\
&=-U_P(-\theta F)  \sup_{\alpha} \E\Big[ U_P\Big( P(S^s_T,Q^q_T) (Q^q_T + \theta) - \int_0^T c(\alpha_t)dt \Big)\Big].
\end{align}
 We also make the key observation from the inverse demand specification of the model that $$P(S^s_T,Q^q_T) (Q^q_T + \theta) = P(S^{s+\varepsilon \theta}_T,Q^{q+\theta}_T) Q^{q+\theta}_T.$$
Then, we deduce the in-house producer value function has the form
$$
-U_P(-\theta F)V_i(0,s+\eps\theta,q+\theta)
$$
where $V_i$ is the in-house producer value function \eqref{IntegratedValue}.
Consistent with our previous analysis, the certainty equivalent will be utilized, $V_i=U_P(v_i)$.
The in-house producer's demand for optimal hedging is derived by solving the following static problem at time $t=0$, where we denote $\bar\varepsilon=(\eps, 1)^\top$ and $x=(s,q)^\top$:
$$
\sup_\theta \big\{v_i(0,x+\theta\bar \eps)-\theta F\big\}.
$$
From the previous analysis, we know that $v_i$ is a quadratic function. More precisely, $v^{\rm i}(0,x)=K^{\rm i}_0(0)+x\cdot K^{\rm i}_2(0)x$ with
$$K^{\rm i}_2(t)=(I+2(T-t)M\Lambda_{\rm i})^{-1}M.$$ Consider the real-valued differentiable function 
$$
g(\theta)= v_i(0,x+\theta\bar \eps)-\theta F.
$$
The first-order condition gives
$$
2\bar\eps\cdot K^{\rm i}_2(0)(x+\theta\bar\eps)=F
$$
and the second-order condition gives the existence of a maximum for $g$ if $2\bar\eps\cdot K^{\rm i}_2\bar\eps<0$.

Thus, assuming $2\bar\eps\cdot K^{\rm i}_2(0)\bar\eps < 0$, the optimal forward position at time $0$ for the monopolist is
\begin{align}\label{eq:int-hedge}
\theta_i(F) & := \frac{F-2 \bar\varepsilon\cdot K^{\rm i}_2(0) X_0}{2\bar\varepsilon\cdot K^{\rm i}_2(0)\bar\varepsilon}.
\end{align}

Hypothesis $2\bar\eps\cdot K^{\rm i}_2(0)\bar\eps<0$ is indispensable for the producer to enter the forward market.  A straightforward computation reveals that
$
\text{det } K^{\rm i}_2(0)=-\frac{1}{4\Delta_i(0)},
$ 
and thus the function $v_i$ exhibits neither global concavity nor convexity when assuming $\Delta_i(0)>0$. 
This assumption can be more explicitly stated as $\eps^2\gamma - \eta_p\sigma_p^2 <0$, where we recall that $\sigma_p^2:=\epsilon^2\sigma^2+\nu^2-2\epsilon\rho\sigma\nu$ denotes the volatility of the terminal price, $P_T$. The failure of this condition implies that the firm can strategically leverage its market power by selling forward at a high price and subsequently buying back at a depressed spot price. This arbitrage allows the firm to generate profits from its forward positions, even while incurring losses on its production operations. Anticipating this behavior, the market maker becomes unwilling to provide liquidity in forward contracts, thereby precluding the existence of an equilibrium. This finding is consistent with the finance literature on strategic behavior in commodity futures markets, pioneered by the seminal works of \cite{Allaz92} and \cite{Allaz93} in a deterministic setting and later extended to mean-variance preferences of agents in \cite{Hughes97}. Furthermore, our condition for the existence of an equilibrium closely parallels the condition derived in \cite{Hughes97}.
Furthermore, observe that $\theta_i$ is a decreasing function of the forward price. The higher the forward price, the more the producer wants to sell forward.
Consequently, financial profit will materialize under conditions of low spot prices or, equivalently, high production levels. This outcome suggests that producers taking a short position in the forward market will exert greater effort. These insights are formally summarized in the following Proposition.

\begin{Proposition}\label{Integrated-hedge}
Assume $\Delta_i(t)>0$ and $2\bar\eps\cdot K^{\rm i}_2(t)\bar\eps<0$ for $t \in [0,T]$. The dynamics of the optimal production of the in-house monopolist producer when there is a forward market is
\begin{align*}
dQ^*_t & = k_{\rm i}(t)\Big[ \ell_{\rm i}(t) \Big(\frac{S_t + \eps \theta_i(F)}{2\eps}\Big) - Q^*_t - \theta_i(F)  \Big]\,dt+ \sigma \big(\rho dW_t^\perp + \sqrt{1-\rho^2} dW_t\big), \\
 &\text{ with as in Proposition \ref{OptiProd-no-hedge}, }
k_{\rm i}(t)=\gamma\frac{2\eps + \eta_p\nu^2 (T-t)}{\Delta_i(t)},
\ell_{\rm i}(t) := \frac{1 +  \eta_p\rho\sigma\nu(T-t)}{1+\frac{\eta_p\nu^2}{2\eps} (T-t)}.
\end{align*}
The optimal level of effort is
\begin{align*} 
\alpha_{i,h}^*(t)&=\alpha_{i}^*(t,s,q)+ k_{\rm i}(t)\Big(\frac{\ell_{\rm i}(t)}{2}-1\Big)\theta_i(F),
\end{align*}
where $\alpha_{i}^*(t,s,q)$ is given by \eqref{effort_integrated}.
\end{Proposition}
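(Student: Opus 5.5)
Given the forward position $\theta=\theta_i(F)$ fixed at time $0$, I would reduce the hedged production problem to the unhedged one of Proposition \ref{OptiProd-no-hedge} through the shift identity already noted above: $P(S_T,Q_T)(Q_T+\theta)=P(\tilde S_T,\tilde Q_T)\tilde Q_T$ with $\tilde S_t:=S_t+\eps\theta$ and $\tilde Q_t:=Q_t+\theta$. Since $\theta$ is a constant, the shifted state $(\tilde S,\tilde Q)$ has exactly the same controlled dynamics as $(S,Q)$, namely $d\tilde Q_t=\alpha_t dt+\sigma(\rho dW^\perp_t+\sqrt{1-\rho^2}dW_t)$ and $d\tilde S_t=\nu dW^\perp_t$. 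The running cost $\int_0^T c(\alpha_t)dt$ does not depend on the state, and the factor $-U_P(-\theta F)$ is a positive constant. Hence, for every admissible $\alpha$, the hedged objective equals $-U_P(-\theta F)$ times the in-house objective \eqref{IntegratedValue} evaluated at the shifted state. The two problems therefore have the same optimal feedback control, provided it is expressed in the shifted variables.

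Next, I would identify the value function and the feedback. The in-house value satisfies the HJB equation \eqref{HJB}. By the verification argument of the Appendix, under $\Delta_i(t)>0$ on $[0,T]$ it is given by the explicit quadratic certainty equivalent $v^{\rm i}(t,x)=K^{\rm i}_0(t)+x\cdot K^{\rm i}_2(t)x$. The optimal feedback is $\alpha^*_i(t,\tilde s,\tilde q)=k_{\rm i}(t)(\ell_{\rm i}(t)\tilde s/(2\eps)-\tilde q)$. The hedged value function is then $(t,s,q)\mapsto -U_P(-\theta F)V_i(t,s+\eps\theta,q+\theta)$. It solves the same HJB equation in the original variables, because the PDE is translation invariant and only the terminal condition is shifted. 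The verification theorem therefore applies verbatim.

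Substituting $\tilde s=S_t+\eps\theta_i(F)$ and $\tilde q=Q^*_t+\theta_i(F)$ into the feedback gives the stated drift $k_{\rm i}(t)\big[\ell_{\rm i}(t)(S_t+\eps\theta_i(F))/(2\eps)-Q^*_t-\theta_i(F)\big]$. Expanding, the $\theta$-terms combine to $k_{\rm i}(t)\big(\ell_{\rm i}(t)/2-1\big)\theta_i(F)$. The remainder is exactly $\alpha^*_i(t,s,q)$ from \eqref{effort_integrated}, which yields the effort formula.

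The only substantive point concerns the choice of $\theta$. It is made at time $0$ and kept static, so $\theta_i(F)$ in \eqref{eq:int-hedge} is well defined only under the concavity hypothesis $2\bar\eps\cdot K^{\rm i}_2\bar\eps<0$. I would check that this hypothesis, together with $\Delta_i>0$, guarantees that $\theta\mapsto v_i(0,x+\theta\bar\eps)-\theta F$ has a unique maximizer. That justifies optimizing sequentially, first over $\alpha$ given $\theta$ and then over $\theta$. The remaining routine verification is that the closed-loop linear SDE has a unique strong solution and that the exponential-quadratic value function satisfies the integrability conditions needed for verification. The latter follows from the Riccati bounds already established when $\Delta_i>0$.
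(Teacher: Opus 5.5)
Your proposal is correct and follows the paper's route: the paper also gets the hedged feedback by evaluating $\gamma\,\partial_q v_i$ at the shifted state $(s+\eps\theta_i(F),\,q+\theta_i(F))$, then expands to isolate the term $k_{\rm i}(t)\big(\frac{\ell_{\rm i}(t)}{2}-1\big)\theta_i(F)$. The paper also writes out $2\bar\eps\cdot K^{\rm i}_2(0)\bar\eps$ and $\theta_i(F)$ explicitly, while you add the correct remarks on translation invariance and positive homogeneity of the HJB equation and on strict concavity in $\theta$, which the paper leaves implicit.
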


Under the sufficient condition $\gamma > \eta_p\sigma^2$, it is interesting to note that the condition $2\bar\eps\cdot K^{\rm i}_2(0)\bar\eps<0$ implies that $\eps$ must be smaller than a certain threshold $\hat \eps$. Thus, when technology is effective (high $\gamma$), only firms with significant market power find it advantageous to utilize the forward market.
By virtue of the explicit formulas obtained in Propositions \ref{OptiProd-no-hedge} and \ref{Integrated-hedge}, we are able to compare optimal effort levels between producers with ($\alpha_{i,h}^*$) and without ($\alpha_{i}^*$) access to a forward market. The various effects of parameters on optimal effort are illustrated in Section \ref{sec:illustration}.\\

We now direct our attention to the agents operating as counterparties in the forward market. Specifically, we assume the presence of a rational CARA market-maker responsible for liquidity provision, who is modeled as maximizing the expected value of their CARA utility, with a risk aversion parameter $m>0$.
Given a forward price $F$, the rational market maker anticipates the producer demand $\theta_i(F)$ which determines the probability distribution of the pair $(S_T,Q_T)$ through the effort level process $(\alpha_{i,h}^{\ast,F}(t))_t$. To highlight the influence of the forward price $F$ on the probability distribution, we will use the notation $\P^{F}$ for $\P^{\alpha_{i,h}^{\ast,F}(t)}$. Hence, the market maker faces the following optimization problem:

\begin{align*}
&\sup_{\theta \in \R} \E^{F}\Big[ U_m\Big( \theta(P(S_T,Q_T) - F) \Big)\Big], 
\end{align*}
where
\begin{align*}
    dQ^*_t & = k_{\rm i}(t)\Big[ \ell_{\rm i}(t) \Big(\frac{S_t + \eps \theta_i(F)}{2\eps}\Big) - Q^*_t - \theta_i(F)  \Big]\,dt+ \sigma \big(\rho dW_t^\perp + \sqrt{1-\rho^2} dW_t\big), \\
    dS_t&= \nu dW_t^\perp. 
\end{align*}
Since the pair $((S_t,Q^*_t))_{t \le T}$ is a Gaussian process, it follows that $P_T =S_T-\eps Q^*_T$ is a Gaussian random variable, and thus 
\begin{align*}
  \E^F\Big[ U_m\Big( \theta (P_T - F) \Big)\Big] 
  = U_m\Big( \theta (\E^F[P_T] - F) - \frac12 \eta_m \Var^F[P_T] \theta^2 \Big)
\end{align*}
The optimal market maker's position is then:
\begin{align*}
\theta^{(\rm m)}(F) := \frac{\E^F[P_T] - F}{\eta_m \Var[P_T]}.
\end{align*}
where it should be remembered that the distribution of $P_T= S_T-\eps Q_T^*$ depends on the forward position $\theta_i$ of the producer. First, we examine the case of a risk-neutral market-maker, ($\eta_m=0$). In that case, the equilibrium forward price $F$ must satisfy
$$
F=\E^F(P_T)=s-\eps\E^F(Q^*_T).
$$
Given a forward price $F$, the producer selects the optimal level of effort $\alpha^*_{i,h}$ which can be decomposed as
$$
\alpha_{i,h}^*(t,s,q)=k_{\rm i}(t)(\ell_{\rm i}(t)(\frac{s}{2\eps}-q)+\left(\frac{\ell_{\rm i}(t)}{2}-1 \right)\theta_i(F).
$$
Thus, we have
\begin{align*}
    \frac{d}{dt}\E^F(Q_t^*)&=k_{\rm i}(t)(\ell_{\rm i}(t)(\frac{s}{2\eps}-\E^F(Q_t^*))+\left(\frac{\ell_{\rm i}(t)}{2}-1 \right)\theta_i(F)\\
    &=-k_{\rm i}(t)\E^F(Q_t^*)+b_{\rm i}(t),
\end{align*}
where $b_{\rm i}(t) = k_{\rm i}(t)[\ell_{\rm i}(t)\frac{s}{2\eps}+\left(\frac{\ell_{\rm i}(t)}{2}-1 \right)\theta_i(F)]=B_{\rm i}(t,s,q)+k_{\rm i}(t)\left(\frac{\ell_{\rm i}(t)}{2}-1 \right)\frac{F}{2\bar\eps\cdot K^{\rm i}_2(t)\bar\eps}$ where under the assumption $\bar\varepsilon\cdot K^{\rm i}_2(t)\bar\varepsilon < 0$, we have
\begin{align*}
B_{\rm i}(t,s,q) := k_{\rm i}(t)\Big(\ell_{\rm i}(t)\frac{s}{2\eps}-\left(\frac{\ell_{\rm i}(t)}{2}-1 \right) \frac{2 \bar\varepsilon\cdot K^{\rm i}_2(t) X_0}
       {2\bar\varepsilon\cdot K^{\rm i}_2(t)\bar\varepsilon}\Big)
\end{align*}

 Solving the ODE, we obtain
$$
\E^F(Q_T^*)=\Big(q+\int_0^T B_{\rm i}(t,s,q)e^{\int_0^t k_{\rm i}(u)\,du}\,dt+\int_0^T e^{\int_0^t k_{\rm i}(u)\,du}k_{\rm i}(t)\left(\frac{\ell_{\rm i}(t)}{2}-1 \right)\frac{F}{2\bar\eps \cdot K^{\rm i}_2(0)\bar\eps} \,dt\Big)e^{-\int_0^T k_{\rm i}(t)\,dt}
$$
We thus derive the following result whose proof is immediate because the function 
$$g(F)=\E^F(P_T)-F=s-\eps \E^F(Q_T^*)-F$$ is affine.
\begin{Proposition}\label{prop:existence-F-integrated}
    If $\eta_m=0$ and  $\eps\int_0^T  \left(\frac{\ell_{\rm i}(t)}{2}-1 \right)k_{\rm i}(t)e^{-\int_t^T k_{\rm i}(u)\,du}\,dt \neq -2\bar\eps\cdot K^{\rm i}_2(0)\bar \eps$ then there exists an equilibrium forward price $F^*(s,q)$.
\end{Proposition}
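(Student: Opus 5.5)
The plan is to reduce the equilibrium condition to a scalar affine equation in $F$ and solve it. With $\eta_m=0$ the market maker's demand is not a finite function of $F$ unless $F=\E^F[P_T]$. So an equilibrium forward price is exactly a root of
$$
g(F)=\E^F(P_T)-F=s-\eps\,\E^F(Q^*_T)-F .
$$
At such a root the market maker is indifferent between all positions. In particular he can absorb $-\theta_i(F)$, so the market-clearing condition of the Definition holds.

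The first step is to compute $\E^F(Q^*_T)$ as an explicit affine function of $F$, using Proposition \ref{Integrated-hedge}. Taking expectations in the dynamics of $Q^*$ kills the stochastic integrals, since the Brownian terms are martingales under $\P^F$ and $S$ is a martingale with $\E^F S_t=s$. This gives a linear ODE $\frac{d}{dt}\E^F Q^*_t=-k_{\rm i}(t)\E^F Q^*_t+b_{\rm i}(t)$. Here $b_{\rm i}$ depends on $F$ only through $\theta_i(F)$, which by \eqref{eq:int-hedge} is affine in $F$. Its slope is $1/(2\bar\eps\cdot K^{\rm i}_2(0)\bar\eps)$, and this quantity is well defined and negative by the standing assumption.

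By variation of constants,
$$
\E^F(Q^*_T)=c_0+F\,\frac{1}{2\bar\eps\cdot K^{\rm i}_2(0)\bar\eps}\int_0^T\Big(\frac{\ell_{\rm i}(t)}{2}-1\Big)k_{\rm i}(t)e^{-\int_t^Tk_{\rm i}(u)du}\,dt,
$$
where $c_0$ collects the $F$-independent terms, namely $q$ and the $B_{\rm i}$ contribution. One point needs care here. The position $\theta_i(F)$ is chosen once at time $0$ and held fixed, so the coefficient entering $b_{\rm i}(t)$ must be $2\bar\eps\cdot K^{\rm i}_2(0)\bar\eps$ for every $t$, not $K^{\rm i}_2(t)$. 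I would state this explicitly so that the slope is constant in $F$.

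It follows that $g(F)=a+\beta F$ with
$$
\beta=-1-\frac{\eps\,I}{2\bar\eps\cdot K^{\rm i}_2(0)\bar\eps},\qquad I:=\int_0^T\Big(\frac{\ell_{\rm i}(t)}{2}-1\Big)k_{\rm i}(t)e^{-\int_t^Tk_{\rm i}(u)du}dt .
$$
Multiplying by $2\bar\eps\cdot K^{\rm i}_2(0)\bar\eps\neq0$ shows that $\beta\neq0$ if and only if $\eps I\neq-2\bar\eps\cdot K^{\rm i}_2(0)\bar\eps$, which is exactly the hypothesis. Hence $F^*=-a/\beta$ is the unique root, and it depends on $(s,q)$ through $a$.

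I expect the only real obstacle to be bookkeeping rather than analysis. It has two parts: checking that the mean ODE is well posed, and checking that differentiating under the expectation is legitimate. Both hold because $k_{\rm i}$ and $\ell_{\rm i}$ are continuous on $[0,T]$ under $\Delta_{\rm i}>0$, and the state process is Gaussian. The remaining check is the consistency noted above, that $\theta_i$ uses the time-$0$ kernel.
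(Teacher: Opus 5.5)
Your proof is correct and follows the paper's own argument. You take expectations in the dynamics of Proposition~\ref{Integrated-hedge} to get a linear ODE for $\E^F(Q^*_t)$ and solve it by variation of constants; then $g(F)=s-\eps\,\E^F(Q^*_T)-F$ is affine with slope $-1-\eps I/(2\bar\eps\cdot K^{\rm i}_2(0)\bar\eps)$, where $I$ is the integral in the hypothesis, and this slope is nonzero exactly under the stated condition. Your remark that the static position forces $K^{\rm i}_2(0)$ rather than $K^{\rm i}_2(t)$ in $b_{\rm i}$ is right, and it corrects an inconsistency in the paper's displayed formula.
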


\paragraph{Risk-averse market maker}

We now shift our focus back to the more general scenario involving a risk-averse market-maker. The existence of an equilibrium price in the forward market hinges on verifying that the aggregate demand for the forward contract is zero. Mathematically, this means we must find a solution to the equation $\theta_i(F)+\theta^{(\rm m)}(F)=0$, where $\theta_i(F)$ represents the demand from the owner and $\theta^{(\rm m)}(F)$ represents the demand from the market-maker, both as functions of the forward price $F$.

A crucial step in establishing this equilibrium is to demonstrate that this aggregate demand equation is affine in~F. If this linearity holds, it immediately implies not only the existence of an equilibrium forward price but also its uniqueness. This is a powerful result, as it guarantees that a single, well-defined market-clearing price will emerge.

The key insight underpinning this affine relationship, and therefore the existence and uniqueness of the equilibrium, is the observation that the variance of the spot price $P_T$ is independent of the forward price $F$. This independence simplifies the utility maximization problems for both the producer and the market-maker, allowing their individual demand functions, $\theta_i(F)$ and $\theta^{(\rm m)}(F)$, to be expressed linearly in $F$. When these linear functions are summed to form the market-clearing equation, the resulting equation remains affine, thus securing the existence and the uniqueness of the equilibrium.\\
Due to our specification of the inverse demand function, we have $$\Var(P_T)=\Var(S_T)-2\eps \Cov  (S_T,Q_T)+\eps^2 \Var(Q_T).$$
From Proposition \ref{Integrated-hedge}, we have
$$
Q_T^*=q_0+\int_0^T \alpha_{i,h}^*(t,S_t,Q_t)\,dt+\sigma(\rho W_T^\perp+\sqrt{1-\rho^2}W_T).
$$
Taking expectations, we obtain
$$
\E^F(Q_T^*)=q_0+\int_0^T \E^F(\alpha_{i,h}^*(t,S_t,Q_t))\,dt.
$$
Therefore,
\begin{align*}
Q_T^*-\E^F(Q_T^*)&=\int_0^T \Big(\alpha_{i,h}^*(t,S_t,Q^*_t)-\E^F(\alpha_{i,h}^*(t,S_t,Q^*_t))\Big)\,dt +\sigma(\rho W_T^\perp+\sqrt{1-\rho^2}W_T)\\
&=\int_0^T k_{\rm i}(t)\ell_{\rm i}(t)\Big(\frac{S_t-s}{2\eps}-(Q^*_t-\E^F(Q^*_t))\Big)+\sigma(\rho W_T^\perp+\sqrt{1-\rho^2}W_T)
\end{align*}
We then observe that the difference $Q_T^*-\E^F(Q_T^*)$ is independent of the forward price $F$ which implies that the variance of $P_T$ is independent of the forward price $F$ as expected. With these remarks, we are in a position to state an equilibrium  existence and uniqueness result.

Define
\[
\Gamma_i := \int_0^T \Big(\frac{\ell_{\rm i}(t)}{2}-1\Big)k_{\rm i}(t) \exp\Big(-\int_t^T k_{\rm i}(u)\,du\Big)\,dt.
\]
\begin{Proposition}\label{prop:in-house-eq-averse}
Assume $\Delta_i(t)>0$ for all $t\in[0,T]$ and $2\bar\eps\cdot K_2^i(0)\bar\eps<0.$ 
 If $\eta_m\Var(P_T)\neq 2\bar\eps\cdot K_2^i(0)\bar\eps + \eps\,\Gamma_i,$ then there exists a unique equilibrium forward price $F_i^*(s,q)$.
\end{Proposition}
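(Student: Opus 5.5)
The plan is to write the market-clearing condition $\theta_i(F)+\theta^{(\rm m)}(F)=0$ as an affine equation in $F$ and check that its slope is nonzero exactly under the stated non-degeneracy condition. First, $\Var^F(P_T)$ does not depend on $F$: the centred variable $Q_T^*-\E^F(Q_T^*)$ computed just before the statement contains no $\theta_i(F)$. So, with $V:=\Var(P_T)$, which is strictly positive because $\nu>0$ and $|\rho|<1$ make $P_T$ non-degenerate Gaussian, the market maker's demand is $\theta^{(\rm m)}(F)=(\E^F[P_T]-F)/(\eta_m V)$. The clearing condition is then equivalent to
\[
\eta_m V\,\theta_i(F) + \E^F[P_T] - F = 0 .
\]
(If $\eta_m=0$ we fall back on Proposition~\ref{prop:existence-F-integrated}; I will treat $\eta_m>0$ and multiply through by $\eta_m V$.)

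Second, I make each term explicit in $F$. Put $D:=2\bar\eps\cdot K^{\rm i}_2(0)\bar\eps<0$. By \eqref{eq:int-hedge}, $\theta_i(F)=(F-2\bar\eps\cdot K^{\rm i}_2(0)X_0)/D$, which is affine with slope $1/D$. For $\E^F[P_T]=s-\eps\E^F(Q_T^*)$, I take expectations in the dynamics of Proposition~\ref{Integrated-hedge}. This gives the linear ODE
\[
\frac{d}{dt}\E^F(Q^*_t)=-k_{\rm i}(t)\E^F(Q^*_t)+k_{\rm i}(t)\Big[\ell_{\rm i}(t)\frac{s}{2\eps}+\Big(\frac{\ell_{\rm i}(t)}{2}-1\Big)\theta_i(F)\Big],
\]
where the state $S_t$ has been replaced by its mean $s$. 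The position $\theta_i(F)$ is fixed at time $0$ and does not depend on $t$, so variation of constants yields
\[
\E^F(Q_T^*)=c_0+\Gamma_i\,\theta_i(F),
\]
with $c_0$ independent of $F$ and $\Gamma_i$ exactly the constant defined before the statement. Hence $\E^F[P_T]=s-\eps c_0-\eps\Gamma_i\theta_i(F)$.

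Third, substituting these expressions gives the clearing equation
\[
(\eta_m V-\eps\Gamma_i)\,\theta_i(F)-F+\text{const}=0 .
\]
Using $\theta_i(F)=(F-\beta)/D$ with $\beta=2\bar\eps\cdot K^{\rm i}_2(0)X_0$, the coefficient of $F$ is
\[
\frac{\eta_m V-\eps\Gamma_i}{D}-1=\frac{\eta_m V-\eps\Gamma_i-D}{D}.
\]
This is nonzero exactly when $\eta_m V\neq D+\eps\Gamma_i$, which is the hypothesis. An affine equation with nonzero slope has a unique root $F_i^*(s,q)$, and since $\theta_i$ is a bijection of $\R$ this root is the unique equilibrium.

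The main obstacle is the bookkeeping in the second step. One has to confirm that $\E^F(Q_T^*)$ depends on $F$ only through the constant $\theta_i(F)$, with the coefficient exactly $\Gamma_i$ and using $K^{\rm i}_2(0)$ rather than $K^{\rm i}_2(t)$. One also has to confirm that the feedback term $k_{\rm i}(t)\ell_{\rm i}(t)(S_t-s)/(2\eps)$ is mean-zero, so that it enters only the variance and not the mean. Both hold because the effort is affine in $(s,q)$ and $\theta_i$ is chosen statically at $t=0$. I would verify this carefully and treat everything else as algebra.
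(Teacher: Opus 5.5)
Your proposal is correct and follows the paper's own argument. The paper likewise reduces existence and uniqueness to the clearing condition being affine in $F$: it uses that $Q_T^*-\E^F(Q_T^*)$, and hence $\Var(P_T)$, does not depend on $F$, while $\E^F(Q_T^*)$ is affine in $\theta_i(F)$ through the mean ODE. Your explicit slope $\bigl(\eta_m\Var(P_T)-\eps\Gamma_i-2\bar\eps\cdot K_2^i(0)\bar\eps\bigr)/\bigl(2\bar\eps\cdot K_2^i(0)\bar\eps\bigr)$ makes precise what the paper leaves implicit, and you rightly use $K_2^i(0)$ where the paper's display for $b_{\rm i}(t)$ writes $K_2^i(t)$.
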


\hs

This equilibrium arises from the interplay of two fundamental opposing forces. On one side, the producer, having taken a short position in the forward market, is incentivized to exert effort. Their objective is to strategically influence and reduce the eventual spot price $P_T$, thereby maximizing their financial gain from the forward contract. Essentially, a lower spot price translates to a larger profit from the prior forward sale.

Conversely, the market-maker, as the counterparty providing liquidity, is acutely aware of this incentive. Anticipating the monopolistic producer's potential for price manipulation, the market-maker will resist setting the forward price $F$ too low. Their risk aversion, coupled with their foresight into the producer's strategic behavior, prevents them from accepting forward contracts at prices that do not adequately compensate for the risk of a manipulated, depressed spot price.

It is precisely this inherent tension—the producer's drive to depress spot prices for forward gains versus the market-maker's defensive pricing against such manipulation—that ultimately establishes and sustains the unique equilibrium in the forward market. This dynamic balance of interests ensures that a specific forward price $F$ emerges where neither party has an incentive to unilaterally deviate from their chosen strategy.

\subsection{Second-best delegation}\label{ssec:sb-delegation}
 
This subsection examines the influence of the forward market on the design of the optimal delegation contract. By exploiting the inherent symmetry between the in-house producer's problem and that of the delegating producer, we deduce the delegated producer value function in the presence of a forward market has the form
$$
-U_P(-\theta F)V_d(0,s+\eps\theta,q+\theta)
$$
where $V_d$ is the delegated producer value function. Again, the certainty equivalent will be utilized, $V_d=U_P(v_d)$.
The delegated producer's demand for optimal hedging is derived by solving the following static problem at time $t=0$, where we denote $\bar\varepsilon=(\eps, 1)^\top$ and $x=(s,q)^\top$:
$$
\sup_\theta \big\{v_d(0,x+\theta\bar \eps)-\theta F\big\}.
$$
From the previous analysis, we know that $v_d$ is also a quadratic function. Namely,
\begin{align*}
v^{\rm d}(0,x)=K^{\rm d}_0(0)+x\cdot K^{\rm d}_2(0)x, \quad K^{\rm d}_2(t)=(I+2(T-t)M\Lambda_{\rm d})^{-1}M.
\end{align*}
The first-order condition gives
$$
2\bar\eps\cdot K^{\rm d}_2(0)(x+\theta\bar\eps)=F
$$
and the second-order condition gives the existence of a maximum for $g$ if $2\bar\eps\cdot K^{\rm d}_2\bar\eps<0$. Thus, assuming $2\bar\eps\cdot K^{\rm d}_2(0)\bar\eps<0$, the optimal forward position at time $0$ for the delegated producer is
\begin{align}\label{eq:del-hedge}
\theta_{\rm d}(F) & :=  \frac{F-2 \bar\varepsilon\cdot K^{\rm d}_2(0) X_0}{2\bar\varepsilon\cdot K^{\rm d}_2(0)\bar\varepsilon}.
\end{align}

Relying once again on the symmetry argument between the two organizational structures, we can state the following proposition which provides the dynamics of optimal production under delegation (see Appendix~\ref{proofdelegationdhedge} for an argument).

\begin{Proposition}\label{Delegation-hedge}
Assume $\Delta_d(t)>0$ and $2\bar\eps\cdot K^{\rm d}_2(t)\bar\eps<0$ for $t \in [0,T]$. The dynamics of the optimal production of the delegation monopolist producer when there is a forward market is
\begin{align*}
dQ^*_t & = k_{\rm d}(t)\Big[ \ell_{\rm d}(t) \Big(\frac{S_t + \eps \theta_d(F)}{2\eps}\Big) - Q^*_t - \theta_d(F)  \Big]\,dt+ \sigma \big(\rho dW_t^\perp + \sqrt{1-\rho^2} dW_t\big), 
\end{align*}
where $k_{\rm d}$ and $\ell_{\rm d}$ are defined in~\eqref{eq:k-ell-delegation}. The optimal level of effort is
\begin{align*} 
\alpha_{d,h}^*(t)&=\alpha_{d}^*(t,s,q)+ k_{\rm d}(t)\Big(\frac{\ell_{\rm d}(t)}{2}-1\Big)\theta_d(F),
\end{align*}
where $\alpha_{d}^*(t,s,q)$ is given by \eqref{effort_delegated}.
\end{Proposition}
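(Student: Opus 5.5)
The plan is to reduce the hedged delegation problem to the unhedged one of Proposition~\ref{Delegation-no-hedge} through the shift identity used for the in-house case, and then transcribe the feedback control back to the original coordinates.

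\textbf{Step 1: fixing $\theta$ does not change the agent's problem.} Fix $F$ and the principal's forward position $\theta=\theta_{\rm d}(F)$, which is chosen at time $0$ and then held constant. The agent's problem depends only on the contract $\xi=Y^Z_T$. By the representation of \cite{SchattlerSung,Cvitanic18} recalled in Section~\ref{sec:no-hedge-2}, the agent's best reply is still $\alpha=\gamma Z_2$ for every $\theta$.

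\textbf{Step 2: reduction to the unhedged problem.} Write $\bar X_t:=X_t+\theta\bar\eps$, with $\bar\eps=(\eps,1)^\top$, so that $\bar S_t=S_t+\eps\theta$ and $\bar Q_t=Q_t+\theta$. The principal's terminal income is
\[
\pi_T+\theta(P_T-F)-Y^Z_T=(\bar S_T-\eps\bar Q_T)\bar Q_T-\theta F-Y^Z_T=\bar X_T\mcdot M\bar X_T-\theta F-Y_T^Z .
\]
This uses the identity $P(S,Q)(Q+\theta)=P(S+\eps\theta,Q+\theta)(Q+\theta)$ noted in Section~\ref{sec:stat-hedge-0}. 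The shifted state $\bar X$ has exactly the same controlled dynamics as $X$: after the Girsanov change to $\Q$, $d\bar X_t=C_pZ_t\,dt+\Sigma\,d\mathbb B_t$, because the shift is constant. Hence the principal's problem for fixed $\theta$ equals $-U_P(-\theta F)$ times the linear--quadratic problem of Section~\ref{sec:no-hedge-2} started from $x+\theta\bar\eps$. This justifies the value-function form stated before the proposition. It also shows that the optimal $Z^*$ is the optimal feedback of Proposition~\ref{Delegation-no-hedge}, evaluated at the shifted state $(t,\bar S_t,\bar Q_t)$.

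\textbf{Step 3: reading off effort and dynamics.} By Proposition~\ref{Delegation-no-hedge}, the optimal effort is
\[
\alpha^*_{d,h}(t)=k_{\rm d}(t)\Big(\ell_{\rm d}(t)\frac{S_t+\eps\theta}{2\eps}-Q_t-\theta\Big).
\]
Expanding gives $\alpha_d^*(t,S_t,Q_t)+k_{\rm d}(t)\big(\tfrac{\ell_{\rm d}(t)}{2}-1\big)\theta$, which is the claimed formula. Substituting into $dQ_t=\alpha_t\,dt+\sigma(\rho\,dW^\perp_t+\sqrt{1-\rho^2}\,dW_t^{*})$ yields the stated SDE. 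Finally, $\theta=\theta_{\rm d}(F)$ is the maximizer of $\theta\mapsto v_{\rm d}(0,x+\theta\bar\eps)-\theta F$. It is well defined and unique by strict concavity, which is guaranteed by the assumption $2\bar\eps\cdot K^{\rm d}_2(0)\bar\eps<0$, and the first-order condition gives \eqref{eq:del-hedge}.

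\textbf{Main obstacle.} The delicate step is verification in Step 2. We must check that the solution of the Riccati system underlying Proposition~\ref{Delegation-no-hedge}, namely $K^{\rm d}_2(t)=(I+2(T-t)M\Lambda_{\rm d})^{-1}M$, exists on all of $[0,T]$; this is where $\Delta_{\rm d}(t)>0$ enters. We must also check that the verification theorem still applies, since the admissible $Z$ and the associated exponential-moment/Girsanov conditions are unchanged under a deterministic state shift. A second point is that the optimal contract depends on $\theta$ only through the state shift, so the joint optimization over $(\theta,Z)$ splits: first the inner supremum over $Z$, then the static supremum over $\theta$. This split is legitimate because $\theta$ is a time-$0$ constant and the inner value is available in closed form.
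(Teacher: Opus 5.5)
Your proposal is correct and follows the paper's route. The paper's proof just refers back to Proposition~\ref{Integrated-hedge}: it uses the shift identity $P(S,Q)(Q+\theta)=P(S+\eps\theta,Q+\theta)(Q+\theta)$, evaluates the unhedged optimal feedback at the shifted state $(s+\eps\theta_{\rm d}(F),\,q+\theta_{\rm d}(F))$, and expands exactly as you do. Your Step~2 check that the constant shift commutes with the Girsanov change to $\Q$, leaving the agent's best reply $\alpha=\gamma Z_2$ unchanged, spells out what the paper leaves to its ``symmetry argument''.
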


\hs

We now take the point of view of the counterparty. We posit that the market-maker possesses a degree of rationality sufficient to fully anticipate the structure and parameters of the optimal delegation contract between the principal and the agent. This crucial assumption implies that the market-maker is not operating under informational asymmetry regarding the internal arrangements of the producer. Consequently, based on their understanding of this optimal contract, they can accurately foresee the stochastic properties of the production process. This foresight directly extends to the probability distribution of the spot price $P_T$ at the terminal date.

In the context of contract theory, particularly when considering the presence of a forward market, this anticipation is vital. The market-maker, being sophisticated, understands that the optimal contract itself will influence the agent's effort and, by extension, the firm's output and the resulting spot price. Therefore, by correctly anticipating the contract, they can form precise expectations about the distribution of future prices. This comprehensive understanding of the underlying uncertainty – the anticipated spot price distribution – is a prerequisite for a market-maker to rationally determine and adjust their own hedging position or liquidity provision in the forward market. Without this ability to foresee the outcome influenced by the optimal contract, their actions in the forward market would be purely speculative rather than based on informed risk assessment and equilibrium seeking. As a consequence, the optimal market maker's position is
\begin{align*}
\theta^{(\rm m)}(F) := \frac{\E^F[P_T] - F}{\eta_m \Var[P_T]},
\end{align*}
where $\P^{F}$ stands for $\P^{\alpha_{d,h}^{\ast,F}(t)}$. A key observation is that the existence and uniqueness of an equilibrium forward price critically depends on the linearity of the relevant functions $\theta_{\rm d}$ and $\theta^{(\rm m)}$ with respect to the forward price $F$. This is an immediate consequence of the symmetry arguments between the two cases elaborated upon previously which allows us to state the counterpart of the Proposition~\ref{prop:existence-F-integrated}, namely:

\begin{Proposition}\label{prop:existence-F-delegation}
    If $\eta_m=0$ and  $\eps\int_0^T  \left(\frac{\ell_{\rm d}(t)}{2}-1 \right)k_{\rm d}(t)e^{-\int_t^T k_{\rm d}(u)\,du}\,dt \neq -2\bar\eps\cdot K^{\rm d}_2(0)\bar \eps$ then there exists an equilibrium forward price $F^*(s,q)$.
\end{Proposition}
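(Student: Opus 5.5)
With $\eta_m=0$, market clearing $\theta_{\rm d}(F)+\theta^{(\rm m)}(F)=0$ reduces to the rational-expectations condition $F=\E^F[P_T]=s-\eps\,\E^F[Q^*_T]$. Here $Q^*$ is the optimally delegated production of Proposition~\ref{Delegation-hedge}. The plan mirrors the in-house case of Proposition~\ref{prop:existence-F-integrated}. We show that $g(F):=s-\eps\E^F[Q^*_T]-F$ is an affine function of $F$ whose slope is nonzero exactly under the stated condition. The equilibrium price is then the root of $g$.

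\textbf{Step 1: compute $\E^F[Q^*_t]$ from the drift.} By Proposition~\ref{Delegation-hedge}, the drift of $Q^*$ is affine in $(S_t,Q^*_t)$ with deterministic coefficients. The noise is a martingale and $S$ is a driftless martingale, so $\E^F[S_t]=s$. Taking expectations gives the linear ODE
\[
\frac{d}{dt}\E^F[Q^*_t]=-k_{\rm d}(t)\E^F[Q^*_t]+k_{\rm d}(t)\Big[\ell_{\rm d}(t)\frac{s}{2\eps}+\Big(\frac{\ell_{\rm d}(t)}{2}-1\Big)\theta_{\rm d}(F)\Big],\qquad \E^F[Q^*_0]=q.
\]
Variation of constants then gives
\[
\E^F[Q^*_T]=q\,e^{-\int_0^Tk_{\rm d}}+\int_0^T e^{-\int_t^Tk_{\rm d}(u)du}k_{\rm d}(t)\Big[\ell_{\rm d}(t)\frac{s}{2\eps}+\Big(\frac{\ell_{\rm d}(t)}{2}-1\Big)\theta_{\rm d}(F)\Big]dt .
\]
This requires $k_{\rm d}$ and $\ell_{\rm d}$ to be continuous and bounded on $[0,T]$, which follows from $\Delta_{\rm d}(t)>0$ on $[0,T]$, as assumed in Proposition~\ref{Delegation-hedge}.

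\textbf{Step 2: substitute the hedge and isolate $F$.} The hedge $\theta_{\rm d}(F)$ in \eqref{eq:del-hedge} is a constant chosen at time $0$ and is affine in $F$, with slope $1/(2\bar\eps\cdot K^{\rm d}_2(0)\bar\eps)$; this slope is finite and negative by the standing assumption. Writing $\Gamma_{\rm d}:=\int_0^T(\frac{\ell_{\rm d}(t)}{2}-1)k_{\rm d}(t)e^{-\int_t^Tk_{\rm d}(u)du}dt$, we obtain $g(F)=c_0(s,q)-F\big(1+\eps\Gamma_{\rm d}/(2\bar\eps\cdot K^{\rm d}_2(0)\bar\eps)\big)$. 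Here $c_0$ collects every $F$-independent term, including the $X_0$ part of $\theta_{\rm d}$.

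\textbf{Step 3: solve.} The slope of $g$ is nonzero if and only if $\eps\Gamma_{\rm d}\neq-2\bar\eps\cdot K^{\rm d}_2(0)\bar\eps$, which is exactly the hypothesis. In that case $F^*(s,q)=c_0(s,q)\,2\bar\eps\cdot K^{\rm d}_2(0)\bar\eps/(2\bar\eps\cdot K^{\rm d}_2(0)\bar\eps+\eps\Gamma_{\rm d})$ is the unique root. This root is the equilibrium price, and uniqueness comes for free.

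\textbf{Main obstacle.} The delicate point is justifying Step 1 under delegation. Proposition~\ref{Delegation-hedge} is written under the agent's measure $\P^{F}$, induced by the agent's best reply to the optimal contract. We must check that under this measure $S$ keeps zero drift and the noise driving $Q^*$ is a genuine Brownian motion, so that expectations of the stochastic integrals vanish. This holds because the optimal contract's incentive component $Z_2$ enters only through the effort $\gamma Z_2$, and that effort is already captured by the drift in Proposition~\ref{Delegation-hedge}. The $\Q$-change of measure used for the principal's control problem is auxiliary and plays no role in the market maker's expectation. Everything else is the affine computation above.
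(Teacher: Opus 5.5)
Your proposal is correct and follows the paper's route. The paper gets this proposition by symmetry with Proposition~\ref{prop:existence-F-integrated}, and the argument is the same as yours: with $\eta_m=0$, equilibrium means $F=\E^F[P_T]$, and the mean of $Q^*_T$ solves a linear ODE whose forcing is affine in $\theta_{\rm d}(F)$, so $g(F)=s-\eps\E^F[Q^*_T]-F$ is affine with slope $-\big(1+\eps\Gamma_d/(2\bar\eps\cdot K^{\rm d}_2(0)\bar\eps)\big)$, which is nonzero exactly under the stated hypothesis. You also spell out details the paper leaves implicit: the variation-of-constants formula, and the fact that $\P^F$ rather than the auxiliary measure $\Q$ is the relevant measure.
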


The equilibrium existence condition  $2\bar\eps\cdot K^{\rm d}_2(0)\bar\eps<0$ of equation~\eqref{eq:del-hedge} is more intricate than its counterpart in equation~\eqref{eq:int-hedge}, which models the case without delegation. Nonetheless, it can be simplified and restated as follows:
\begin{align}\label{eq:cond-eq-delegation}
& 0 < \eta_p (1-\rho^2)\sigma^2 \sigma_p^2 \eta_a^2 + \big[\eta_p\sigma_p^2\gamma_p - \eps^2\gamma (\gamma_p + \eta_p(1-\rho^2)\sigma^2) \big]\eta_a - \eta_p \eps^2\gamma \gamma_p, \\
& \text{with}\, \gamma_p := \gamma + \eta_p(1-\rho^2)\sigma^2.
\end{align}
The condition above can be analyzed as a second-order polynomial in the agent's risk aversion parameter, $\eta_a$.  Note that when the channel from production to price vanishes, i.e. $\epsilon=0$, the equilibrium condition above is always satisfied. On the contrary, as soon as $\epsilon>0$, the condition is demonstrably not satisfied for a risk-neutral agent ($\eta_a=0$). Therefore, the positive root of this polynomial defines a critical threshold for the agent's risk aversion above which an equilibrium exists. Thus, an equilibrium on the forward market can only exist if the principal delegates production to a sufficiently risk-averse agent. When the agent exhibits low risk aversion, the principal's incentive to engage in external hedging via the forward market is attenuated. In this case, the delegation contract itself serves as a sufficiently robust internal mechanism for risk-sharing between the principal and the agent. Consequently, the forward market's instrumental value to the principal-agent relationship emerges only when their collective risk aversion is high, providing a complementary channel for risk transfer beyond what can be optimally achieved through the incentive contract alone.

\paragraph{Risk-averse market maker} We conclude this section by stating the analog of the Proposition~\ref{prop:in-house-eq-averse} in the present case of delegation with risk-averse market maker.
Define
\[
\Gamma_d := \int_0^T \Big(\frac{\ell_{\rm d}(t)}{2}-1\Big)k_{\rm d}(t) \exp\Big(-\int_t^T k_{\rm d}(u)\,du\Big)\,dt.
\]
\begin{Proposition}\label{prop:delegation-eq-averse}
Assume $\Delta_d(t)>0$ for all $t\in[0,T]$ and $2\bar\eps\cdot K_2^d(0)\bar\eps<0.$  If $\eta_m\Var(P_T)\neq 2\bar\eps\cdot K_2^d(0)\bar\eps + \eps\,\Gamma_d,$ then there exists a unique equilibrium forward price $F_d^*(s,q)$.
\end{Proposition}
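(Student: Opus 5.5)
The plan is to follow exactly the structure used for Proposition~\ref{prop:in-house-eq-averse}, replacing the in-house objects $(k_{\rm i},\ell_{\rm i},K^{\rm i}_2,\Gamma_i)$ by their delegation counterparts $(k_{\rm d},\ell_{\rm d},K^{\rm d}_2,\Gamma_d)$. The goal is to show that the aggregate excess demand $F\mapsto \theta_{\rm d}(F)+\theta^{(\rm m)}(F)$ is an affine function of $F$ whose slope is nonzero exactly under the stated non-degeneracy condition. Existence and uniqueness of $F_d^*(s,q)$ then follow at once.

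\textbf{Step 1: the principal's demand.} By \eqref{eq:del-hedge}, under $2\bar\eps\cdot K^{\rm d}_2(0)\bar\eps<0$ we have $\theta_{\rm d}(F)=(F-2\bar\eps\cdot K^{\rm d}_2(0)X_0)/(2\bar\eps\cdot K^{\rm d}_2(0)\bar\eps)$. This is affine in $F$ with slope $1/(2\bar\eps\cdot K^{\rm d}_2(0)\bar\eps)$.

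\textbf{Step 2: the mean of $P_T$.} Using Proposition~\ref{Delegation-hedge}, the effort is $\alpha^*_{d,h}=\alpha^*_d+k_{\rm d}(\ell_{\rm d}/2-1)\theta_{\rm d}(F)$. Taking $\P^F$-expectations of the linear dynamics gives the ODE $\frac{d}{dt}\E^F(Q^*_t)=-k_{\rm d}(t)\E^F(Q^*_t)+b_{\rm d}(t)$, where $b_{\rm d}$ is affine in $\theta_{\rm d}(F)$ and hence in $F$. Variation of constants yields
\[
\E^F(Q_T^*)=c_0+\Gamma_d\,\theta_{\rm d}(F),
\]
where $c_0$ does not depend on $F$. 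Consequently $\E^F[P_T]=s-\eps c_0-\eps\Gamma_d\theta_{\rm d}(F)$ is affine in $F$.

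\textbf{Step 3: the variance of $P_T$.} As in the in-house case, I will subtract expectations. The $\theta_{\rm d}(F)$ term in the drift is deterministic, so it cancels. The centered process $Q^*_t-\E^F Q^*_t$ then solves a linear SDE driven by $S_t-s$ and the Brownian motions, with coefficients $k_{\rm d},\ell_{\rm d}$ and no dependence on $F$. Hence $\Var(P_T)=\Var(S_T)-2\eps\Cov(S_T,Q_T)+\eps^2\Var(Q_T)$ is a constant $V>0$ independent of $F$. Positivity follows from the nondegenerate $\nu dW^\perp$ component. One point needs care: the Brownian motion $W^*$ in Proposition~\ref{Delegation-hedge} is a Brownian motion under $\P^F$ for every $F$, so the law of the noise does not depend on $F$.

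\textbf{Step 4: market clearing.} Writing $\kappa:=2\bar\eps\cdot K^{\rm d}_2(0)\bar\eps$ and multiplying the clearing equation by $\eta_m V\kappa\neq0$ turns it into
\[
\eta_m V(F-2\bar\eps\cdot K^{\rm d}_2(0)X_0)+\kappa\big(s-\eps c_0-F\big)-\eps\Gamma_d(F-2\bar\eps\cdot K^{\rm d}_2(0)X_0)=0 .
\]
The coefficient of $F$ is $\eta_m V-\kappa-\eps\Gamma_d$. I expect the main obstacle to be bookkeeping: matching this coefficient, with its signs, to the condition as stated. Depending on the sign conventions in $\theta^{(\rm m)}$ and in the definition of $\Gamma_d$, the condition should be exactly $\eta_m\Var(P_T)\neq 2\bar\eps\cdot K^{\rm d}_2(0)\bar\eps+\eps\Gamma_d$. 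I will double-check it against Proposition~\ref{prop:in-house-eq-averse}, whose proof has the identical form. When that coefficient is nonzero, the affine equation has the unique root $F_d^*(s,q)$, which proves the claim.
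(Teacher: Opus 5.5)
Your proposal is correct and follows the paper's route. The paper gives no separate proof and simply invokes the in-house argument: both demands are affine in $F$, and $\Var(P_T)$ is independent of $F$ because the $\theta_{\rm d}(F)$ drift term is deterministic. Your Step 4 coefficient $\eta_m V-\kappa-\eps\Gamma_d$ reproduces the stated non-degeneracy condition exactly, so the sign worry is unfounded.

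One small point concerns positivity of $\Var(P_T)$. It is better justified by the independent $\sigma\sqrt{1-\rho^2}\,dW^*$ component of $Q_T$, which contributes $\eps^2\sigma^2(1-\rho^2)\int_0^T e^{-2\int_t^T k_{\rm d}(u)\,du}\,dt>0$ when $\eps>0$. The $\nu\,dW^\perp$ component alone does not settle it, since it can partially cancel against the $W^\perp$ loading of $\eps Q_T$.
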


\hs

\subsection{The case of constant demand}

The comparison of the contracts with and witout hedging can be made transparent when considering the absence of demand shift. When \(\nu=0\), the demand process is constant, \(S_t\equiv S_0\), so the state variable reduces to \(q\) and \(\rho\) drops out of the formulas. Let
\begin{align*}
& 
\Phi(t):=\exp\!\left(-\int_0^t k_d(u)\,du\right), \quad k_d(t):=\gamma\lambda_d\frac{2\eps}{\Delta_d(t)}, \quad
\lambda_d :=\frac{\gamma+\eta_p\sigma^2}{\gamma+(\eta_a+\eta_p)\sigma^2}.
\end{align*}
When $\nu = 0$ the function $\Delta_d(t)$ defined in \eqref{eq:k-ell-delegation} reduces to
\[
\Delta_d(t):=1-2\eps\beta_d(T-t), \quad
\beta_d:=\bar\eta\sigma^2-\gamma_d,
\qquad
\gamma_d = \gamma\left(1-\frac{\eta_a^2\sigma^2}{(\eta_a+\eta_p)\bigl(\gamma+(\eta_a+\eta_p)\sigma^2\bigr)}\right).
\]
 The condition for a forward equilibrium~\eqref{eq:cond-eq-delegation} reduces to $\eta_p\eta_a\sigma^2>\gamma(\eta_a+\eta_p),$
or equivalently,
\begin{align}\label{eq:cond-eq-nu=0}
\eta_p\sigma^2>\gamma,
\quad \text{and} \quad
\eta_a>\frac{\gamma\eta_p}{\eta_p\sigma^2-\gamma}.
\end{align}

\begin{Proposition}\label{prop:nu=0}
Assume \(\nu=0\), \(\Delta_d(t)>0\) for all \(t\in[0,T]\), and that the conditions for a forward equilibrium to exist~\eqref{eq:cond-eq-nu=0} are satisfied.
Then\begin{enumerate}[\upshape (i)]

\item The equilibrium position and forward price are given by
\begin{align}
 \theta^\ast & =\frac{(q_m-q_0)\big(1-\Delta_d(0)\Phi(T)\big)}{\eps\beta_d T + \Delta_d(0)\Big[ \frac{1}{2}\big(1-\Phi(T)\big) + \eta_m\eps\Var(Q_T\Big]}, \\
F^\ast & = \frac{S_0}{2} +\eps \Phi(T)\big(q_m -  q_0\big) +\Big[\frac{\eps}{2}\big(1-\Phi(T)\big) + \eta_m\eps^2\Var(Q_T)\Big]\theta^\ast,
\end{align}
where
\[
\Var(Q_T)=\sigma^2\Phi(T)^2\int_0^T \Phi(t)^{-2}\,dt.
\]

\item  The incentive sensitivities $\hat z(t,q)$ without forward position and $\hat z^{\rm h}(t,q)$ under a static forward position \(\theta\) are respectively given by
\begin{align}
\hat z(t,q)=\lambda_d\frac{2\eps}{\Delta_d(t)}(q_m-q), \quad
\hat z^{\rm h}(t,q)=\lambda_d\frac{2\eps}{\Delta_d(t)}\left(q_m-q-\frac{\theta}{2}\right).
\end{align}

\item The expected difference in compensation between the hedged case and the unhedged case is given by
\begin{align}
\E[Y_T^{\rm h}] - \E[Y_T] = \frac12(\gamma+\eta_a\sigma^2)\left( \frac14 (\theta^\ast)^2 - \theta^\ast(q_m-q_0) \right) \int_0^T \Big(\lambda_d\frac{2\eps}{\Delta_d(t)} \Phi(t) \Big)^2\,dt < 0. 
\end{align}
\end{enumerate}
\end{Proposition}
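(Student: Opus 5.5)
The proof specialises the general delegation formulas of Subsection~\ref{ssec:sb-delegation} to $\nu=0$. In that case everything reduces to scalar linear ODEs in $q$ with the common integrating factor $\Phi$.

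\emph{Step 1 (optimal sensitivities, item (ii)).} With $\nu=0$ the state is $q$ alone and $M$ collapses to the scalar terminal payoff $(S_0-\eps q)q$. The Riccati solution $K_2^{\rm d}(t)=(I+2(T-t)M\Lambda_{\rm d})^{-1}M$ becomes the scalar $-\eps/\Delta_d(t)$ with $\Delta_d(t)=1-2\eps\beta_d(T-t)$. Since the certainty equivalent is quadratic, $v^{\rm d}_q(t,q)=\frac{2\eps}{\Delta_d(t)}(q_m-q)$. The pointwise maximiser in the HJB of the LQ problem with cost $\frac12 z\cdot\bar C z$ and drift $C_pz$ gives $\hat z=\lambda_d v^{\rm d}_q$. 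Here $\lambda_d=(\gamma+\eta_p\sigma^2)/(\gamma+(\eta_a+\eta_p)\sigma^2)$ is the ratio $(C_p)_{22}/\bar C_{22}$ after $\nu=0$ removes the first component. This yields $\hat z$.

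For the hedged case, use the shift identity $P(S,Q)(Q+\theta)=P(S+\eps\theta,Q+\theta)(Q+\theta)$. It amounts to replacing $(s,q)$ by $(S_0+\eps\theta,q+\theta)$, so $q_m-q$ becomes $\frac{S_0+\eps\theta}{2\eps}-q-\theta=q_m-q-\theta/2$. This gives $\hat z^{\rm h}$, consistent with Proposition~\ref{Delegation-hedge} with $\ell_d\equiv1$.

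\emph{Step 2 (equilibrium, item (i)).} The optimal effort is $\gamma\hat z^{\rm h}$, i.e.\ $k_d(t)(q_m-\theta/2-Q_t)$. Hence $m(t)=\E^F[Q_t]$ solves $m'=k_d(q_m-\theta/2-m)$, so $\E[Q_T]=q_m-\theta/2-\Phi(T)(q_m-\theta/2-q_0)$. The centred part is an OU process with variance $\sigma^2\Phi(T)^2\int_0^T\Phi^{-2}$, independent of $F$. Then $\E[P_T]=S_0-\eps\E[Q_T]$ and $\Var(P_T)=\eps^2\Var(Q_T)$. The producer's demand $\theta_{\rm d}(F)$ from \eqref{eq:del-hedge} with scalar $K_2^{\rm d}(0)=-\eps/\Delta_d(0)$ reads $F=\frac{2\eps}{\Delta_d(0)}(q_m-q_0-\theta/2)-\eps\theta\cdot(\ldots)$; more precisely, use $2\bar\eps\cdot K_2(x+\theta\bar\eps)=F$. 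Impose market clearing $\theta_{\rm m}(F)=-\theta$, i.e.\ $F=\E^F[P_T]+\eta_m\eps^2\Var(Q_T)\theta$. The second equation gives the stated $F^\ast$ directly. Equating it with the producer's FOC gives a linear equation in $\theta$, whose solution is $\theta^\ast$. To reach the displayed denominator one writes $1/\Delta_d(0)-1=2\eps\beta_dT/\Delta_d(0)$ and multiplies through by $\Delta_d(0)$. This bookkeeping is routine but sign-sensitive.

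\emph{Step 3 (compensation, item (iii)), the main obstacle.} Taking expectations in the dynamics of $Y$ under $\P^{Z_2}$, the martingale part vanishes. This leaves $\E[Y_T]=Y_0+\frac12(\gamma+\eta_a\sigma^2)\int_0^T\E[\hat z_t^2]dt$, since with $\nu=0$ only $Z_2$ matters and $Z\cdot C_aZ=(\gamma+\eta_a\sigma^2)z^2$. The binding participation constraint fixes $Y_0=r_0$ in both cases. Write $D_t=q_m-Q_t$, with mean $\Phi(t)(q_m-q_0)$ unhedged and $\Phi(t)(q_m-q_0-\theta/2)+\theta/2$ hedged. The difference $\E[(D^{\rm h}_t-\theta/2)^2]-\E[D_t^2]$ has identical variance parts, since the OU noise is the same. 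The means differ, giving $\Phi(t)^2[(q_m-q_0-\theta/2)^2-(q_m-q_0)^2]=\Phi(t)^2(\theta^2/4-\theta(q_m-q_0))$, which yields the formula.

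Negativity requires $0<\theta^\ast<4(q_m-q_0)$, or the reversed signs when $q_m<q_0$. This is where the real work lies. I would use the explicit $\theta^\ast$ from (i): under \eqref{eq:cond-eq-nu=0} we have $\beta_d>0$, so $\Delta_d$ is decreasing with $\Delta_d(0)<1$, and $\Phi(T)\in(0,1)$. Hence $1-\Delta_d(0)\Phi(T)\in(0,1)$, so $\theta^\ast$ has the sign of $q_m-q_0$. Comparing $\frac14\theta^\ast$ with $q_m-q_0$ then reduces to showing $1-\Delta_d(0)\Phi(T)<4\big[\eps\beta_dT+\frac{\Delta_d(0)}2(1-\Phi(T))+\ldots\big]$. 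The proof of this would use $\Delta_d(0)+2\eps\beta_dT=1$, which converts the right side into a quantity dominating $2(1-\Delta_d(0)\Phi(T))$. The case $q_m=q_0$ gives equality and must be excluded or noted.
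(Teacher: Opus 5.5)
Your proposal is correct and follows the paper's proof essentially step by step: the shifted-state first-order condition is equated with the market maker's inverse demand, $\E[(Z^{\rm h}_t)^2]-\E[Z_t^2]$ is computed by comparing two OU processes with the same noise, and the bound $0<C<4$ comes from $\Delta_d(0)=1-2\eps\beta_dT$. In the producer's first-order condition, use the full $2\times 2$ matrix $K_2^{\rm d}(0)$ rather than the scalar $-\eps/\Delta_d(0)$, so that the term $\eps\,\partial_s v_d$ is included and you get $F=\frac{(1-\eps\beta_dT)S_0-\eps q_0}{\Delta_d(0)}-\frac{\eps^2\beta_dT}{\Delta_d(0)}\theta$; your remark that the strict inequality in (iii) fails when $q_m=q_0$ is a valid caveat that the paper omits.
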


The proof is given in Appendix~\ref{app:nu=0}. In the absence of demand shift uncertainty, the capacity to hedge reduces the agent's expected compensation.  This result has a simple intuition. In the case \(\nu=0\), the principal uses two instruments to deal with the same distortion: the incentive contract and the forward position. Without hedging, the contract must provide strong incentives $\hat z(t,q)$ when production is far from the target \(q_m\). With hedging, the principal can already reduce part of this mismatch through the forward market position $\theta$. As a result, the agent's action becomes less valuable at the margin, and the principal does not need to rely as much on incentive pay. In this sense, static hedging and incentive provision are partial substitutes. At the optimum, the reduction in incentive intensity dominates, so the agent is paid less in expectation under hedging than without hedging.

When demand is uncertain ($\nu>0$), the extension of this comparative static is not amenable to analytical solutions. For this reason, we provide in Section~\ref{sec:illustration} a numerical anlysis of this result in the presence of demand shift. It shows that our result is robust in the sense that it remains valid in the vicinity of the optimal monopoly production level $q_m$ and for a relevent range of risk-aversion level of the agent relative to the principal.

\section{Illustrations}\label{sec:illustration}

To begin, we recall the baseline parameter values used in the illustrations. The exogenous parameters of the model are the volatility parameters $\sigma,\nu$, the correlation parameter $\rho$, the preference parameters $\eta_p,\eta_a,\eta_m$, market power $\eps$, production efficiency $\gamma$, maturity $T$, and the initial conditions $q_0,s_0$. Values are gathered in Table~\ref{tab:params}. Observe that we consider the case of an initial condition $q_0 = q_m = s_0/(2\eps),$ which corresponds to the optimal target at time \(0\).

We numerically illustrate the principal results of our paper. For these illustrations, we have chosen to plot the agent's preference parameter $\eta_a$ on the x-axis. 

\begin{table}[!hbt]
\begin{center}
\begin{tabular}{l  cccccccccc}
& & & & & & & & & &  \\ \hline
Parameter & $s_0$ & $q_0$ & $\sigma$ & $\nu$ & $\rho$ & $\eps$ & $\gamma$ & $\eta_p$ & $\eta_m$ & $T$ \\
& & & & & & & & & &  \\
Value & $100$ & $5000$ & $500$ & $5$ & $0$ & $10^{-2}$ & $20$ & $10^{-4}$ & $\eta_p$ & $1$ \\ \hline
\hline
\end{tabular}
\caption{Parameters values of the model.}\label{tab:params}
\end{center}
\end{table}

\paragraph{Principal Values}

\begin{figure}[!hbt]
\begin{center}
\begin{tabular}{ccc}
(a) & (b) & (c) \\
\includegraphics[width=0.3\textwidth]{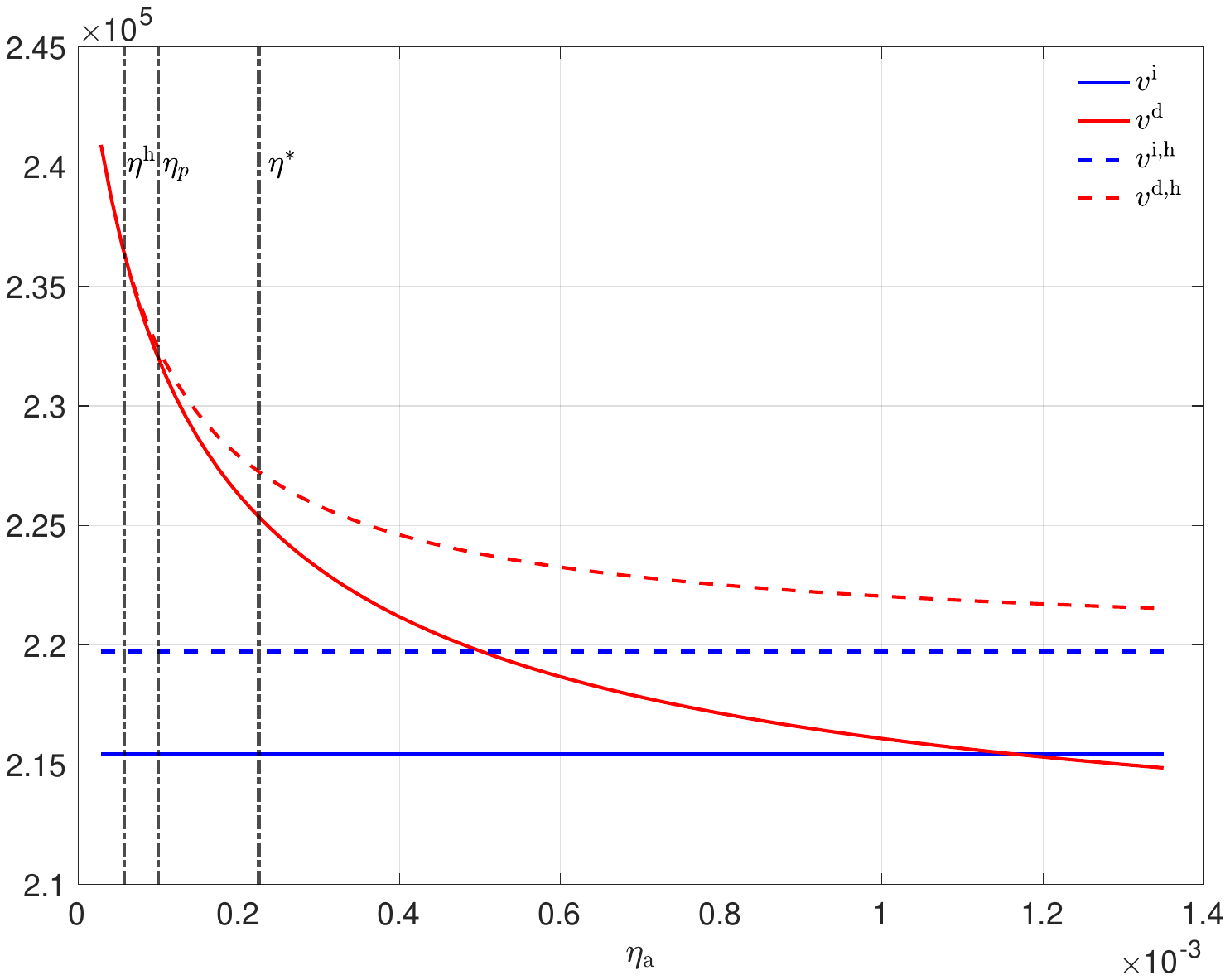} &
\includegraphics[width=0.3\textwidth]{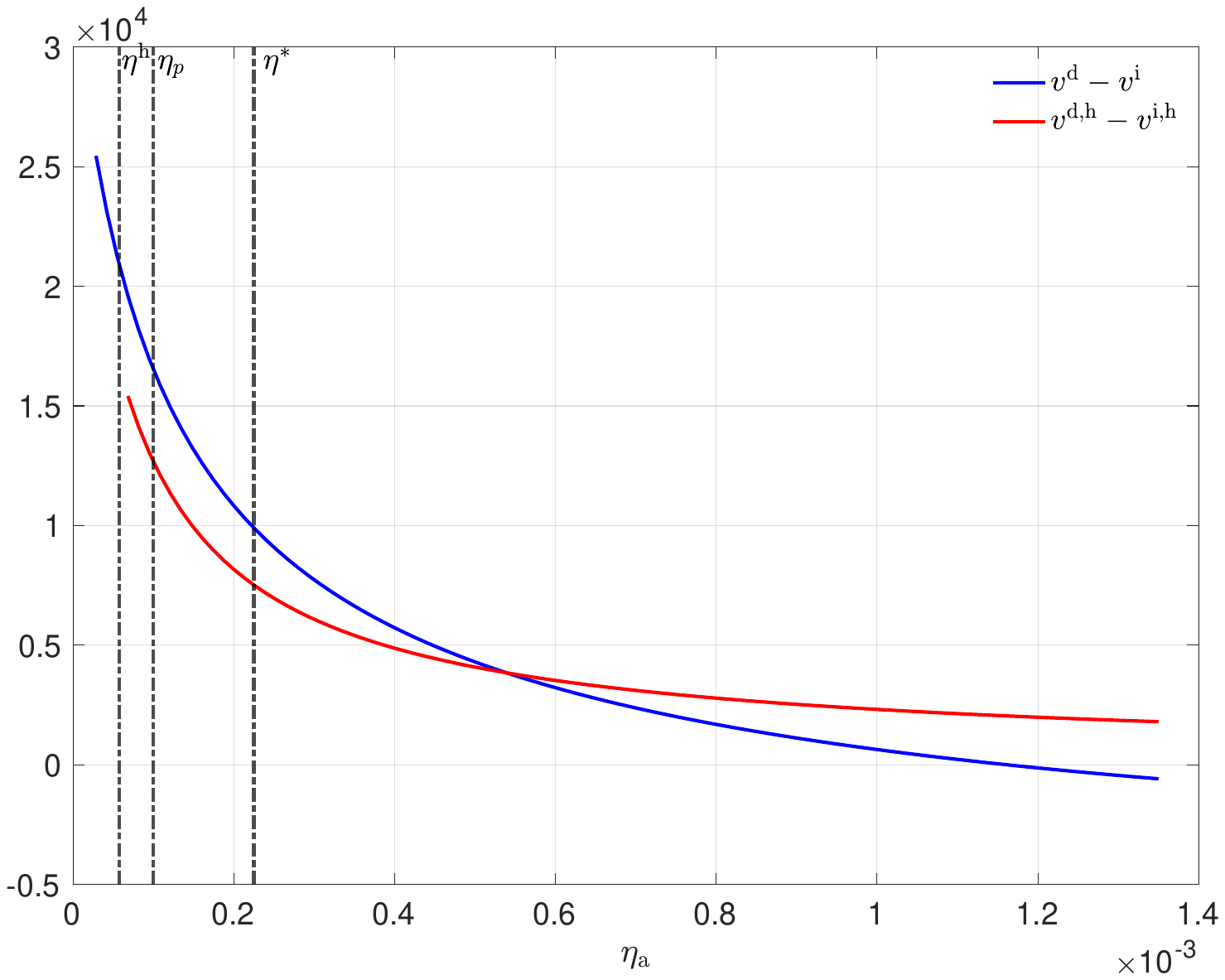}&
 \includegraphics[width=0.3\textwidth]{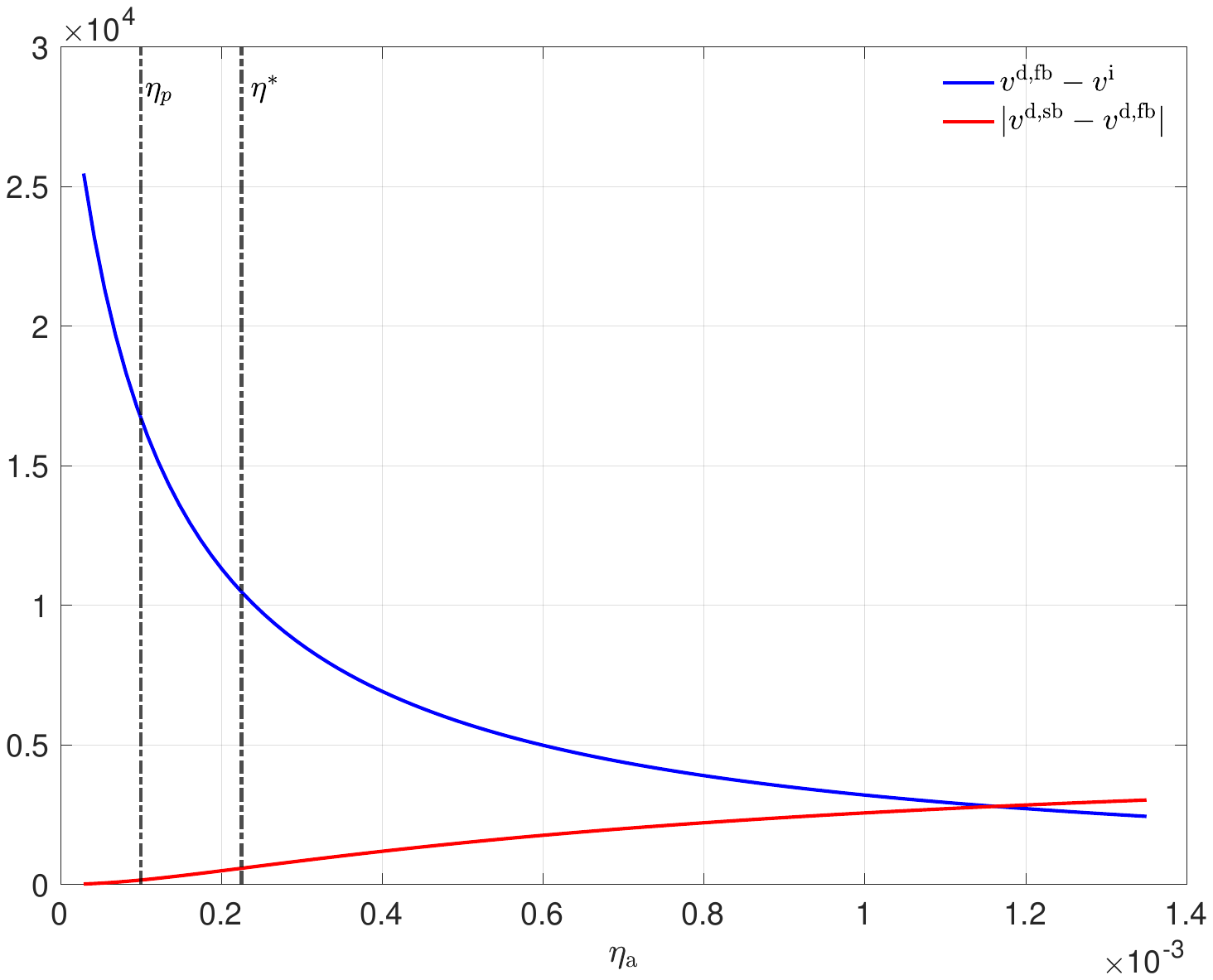}
 \end{tabular}
\end{center}
\caption{{\small In-house production versus delegation}}
\label{fig:etaA}
\end{figure}

Panel (a) illustrates the producer's certainty equivalent across four distinct scenarios, encompassing combinations of delegation versus integration, and hedging versus no-hedging. Our analysis confirms that hedging consistently improves the producer's welfare. Furthermore, the producer benefits from delegating production to an agent with a risk aversion up to $\eta^*=\eta_p\Big(1+\frac{(1-\rho^2)\eta_p\sigma^2}{\gamma} \Big)$. In this specific parameterization, $\eta^*$ is twice the producer's risk aversion. We find that the condition $\eta_a\le \eta^*$ is sufficient for delegation to yield a higher certainty equivalent compared to integration. However, for certain initial conditions, we observe that delegation can also be beneficial for agents with risk aversion greater than $\eta^*$. Crucially, we also note that the gain from hedging $v^{d,h}-v^d$ increases with the agent's risk aversion. This phenomenon is attributable to the rising cost of the information rent captured by a more risk-averse agent, a point we elaborate on further. Panel (a) further illustrates that this underlying mechanism persists even when the producer has access to a hedging market, although an explicit threshold equivalent to $\eta^*$ for optimal delegation is not accessible in the presence of a forward market.

The benefits of delegation, as illustrated in Panel (b), are evaluated under two distinct scenarios: with and without a forward market. The relative size of these gains is contingent on the agent's level of risk aversion. Due to the sensitivity of this ranking to various other parameters, it is not possible to derive a clear, general comparative static result.

Panel (c) elucidates why delegating to an agent, even with identical technological capabilities, can benefit the producer. We decompose the gain from delegation $v^d-v^i$ into two key components. The first is the risk-sharing premium $v^{d,fb}-v^i$, representing the difference between the first-best delegation value and the integration value. This term is consistently positive. The second component is the information rent $v^d-v^{d,fb}$, which is, by definition, always negative. Panel (c) demonstrates that as the agent's risk aversion increases, the risk-sharing premium diminishes, while the absolute value of the information rent concurrently rises. Eventually, the cost associated with this information rent outweighs the benefits derived from risk-sharing, rendering delegation no longer advantageous.

\paragraph{Optimal Production} The optimal control for the in-house producer is given by
$\alpha^\ast_t = 2\gamma (K^{\rm i}_2(t) X_t)_2$, i.e. the second component of the vector $(K^{\rm i}_2(t) X_t)$. Denoting $K^{\rm i}_{ij}$ the $i,j$ coefficient of the matrix $K^{\rm i}_2$ and dropping the deterministic dependence on time for these matrix coefficients, it writes 
$\alpha^\ast_t = 2 \gamma  (K^{\rm i}_{21} S_t + K^{\rm i}_{22} Q_t)$. Therefore, the computation of the first two moments of the optimal production as well as its covariance with the spot price is explicit and given by the solutions of the following three odes.

\begin{align*}
d\E[Q_t]       & = 2 \gamma \big( K^{\rm i}_{21} S_0 + K^{\rm i}_{22} \E[Q_t] \big) dt, \\
d\E[Q_t^2]   & = 4\gamma \big( K^{\rm i}_{21} \E[S_t Q_t] + K^{\rm i}_{22} \E[Q_t^2] \big) dt + \sigma^2 dt\\
d\E[S_t Q_t] & = 2\gamma \big(  K^{\rm i}_{21} (S_0^2 + \nu^2 t) + K^{\rm i}_{22} \E[S_t Q_t]  \big) dt + \rho\sigma\nu dt.
\end{align*}

Figure \ref{fig:optProd} below shows the evolution of the optimal expected production as a function of the agent's risk-aversion parameter $\eta_a$.

\begin{figure}[!thb]
\begin{center}
\begin{tabular}{c}
 (d)\\
 \includegraphics[width=0.45\textwidth]{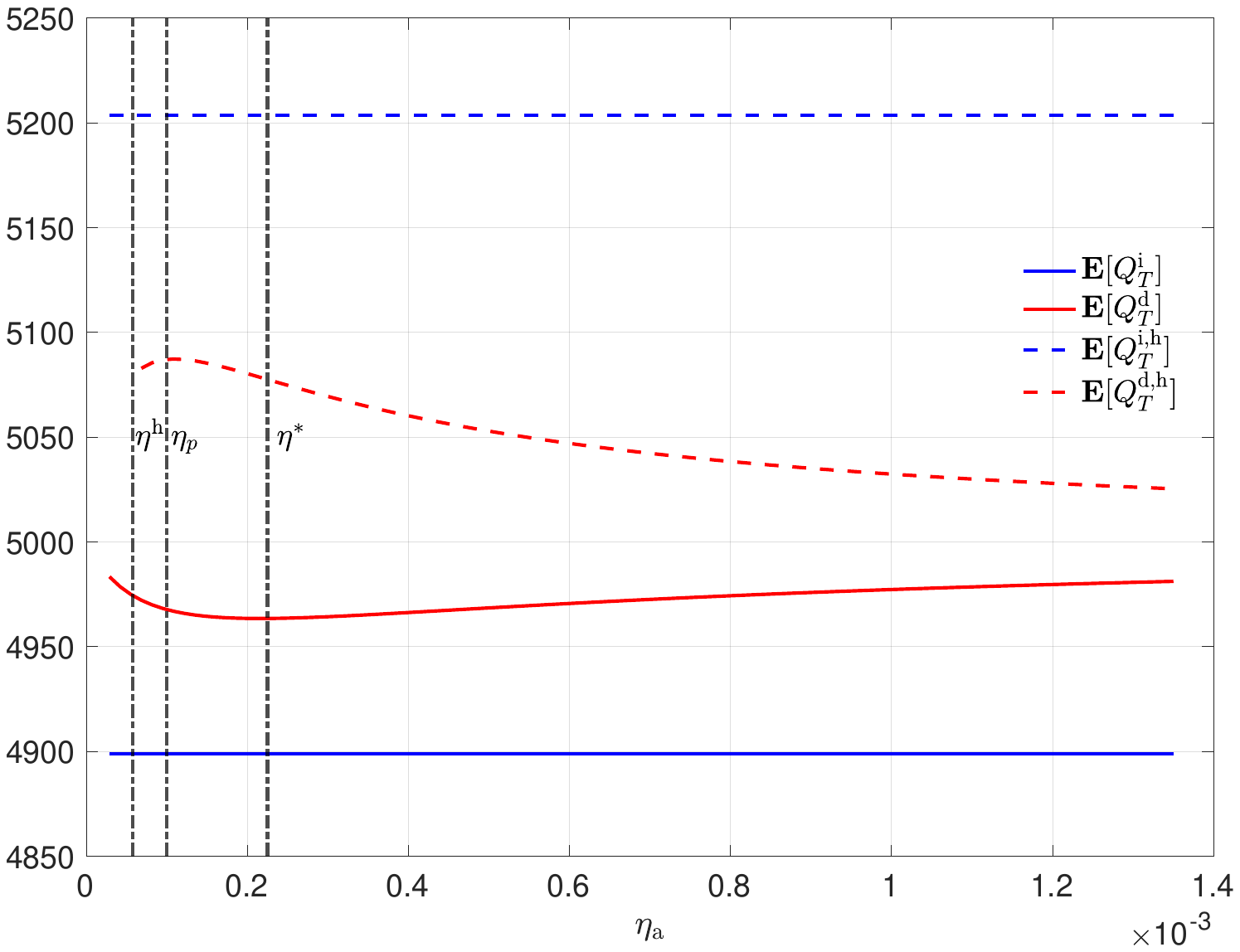}
\end{tabular}
\caption{{\small Optimal expected production as a function of the agent's risk-aversion parameter $\eta_a$. }}
\label{fig:optProd}
\end{center}
\end{figure}

The introduction of a forward market consistently leads to higher production levels (the dotted curves are always above the solid curves), a finding that is unambiguous. However, the magnitude of this positive effect is diminished when the principal delegates production. The principal's access to a forward market results in a lower optimal production level under delegation compared to in-house production. Thus, delegation negatively impacts the expected optimal output. The agent's risk aversion parameter has a distinct and opposite effect depending on the availability of hedging. When the principal cannot hedge, production rises with the agent's risk aversion; conversely, when the principal can hedge in the forward market, production falls as the agent's risk aversion increases.

\paragraph{Spot and Equilibrium forward prices}

Figure \ref{fig:prices} below illustrates the evolution of both spot and forward prices under two distinct scenarios: in-house and delegated production.

\begin{figure}[!thb]
\begin{center}
\begin{tabular}{c}
 (e) \\
 \includegraphics[width=0.45\textwidth]{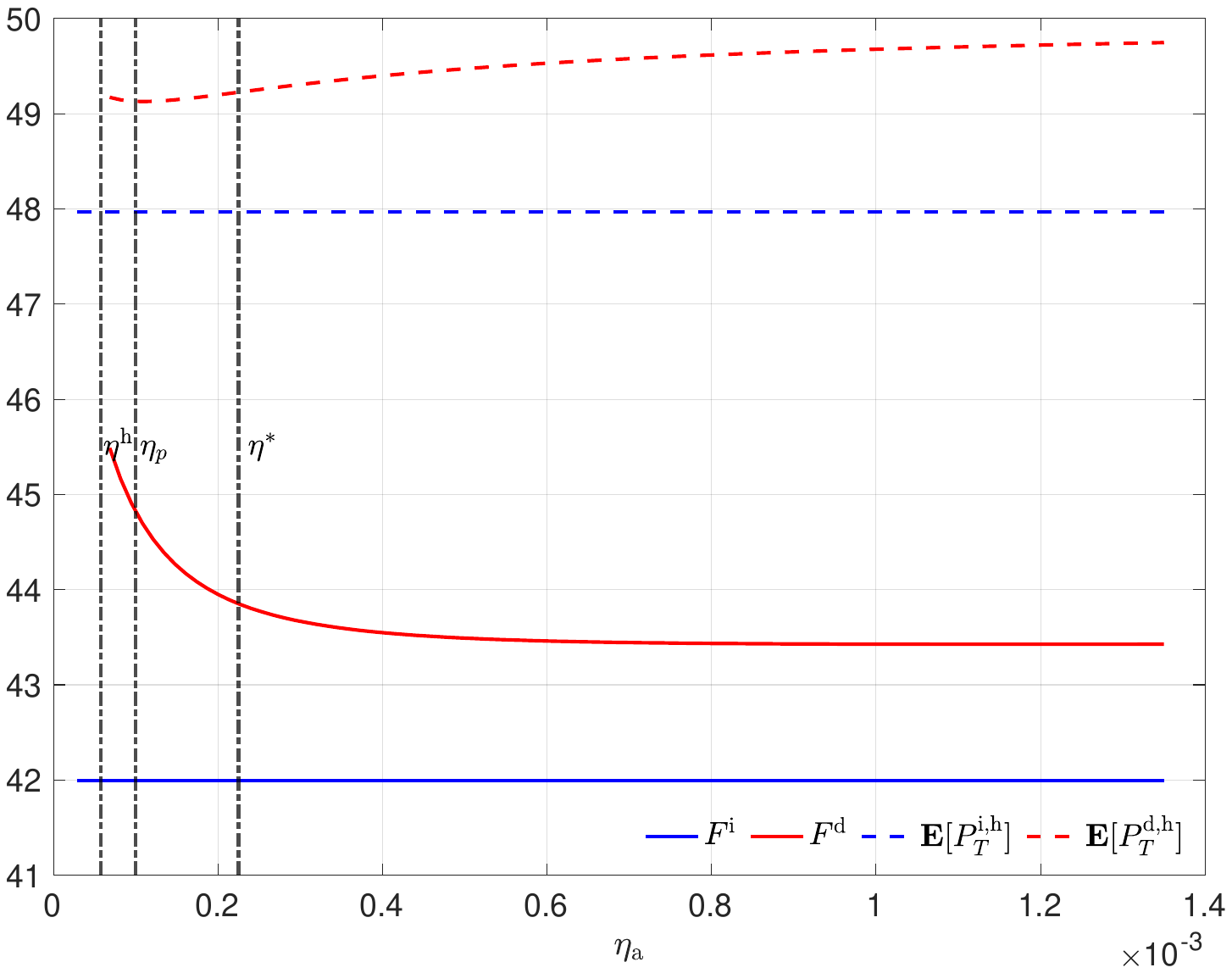}
\end{tabular}
\caption{{\small Spot and Forward prices as a function of the agent's risk-aversion parameter $\eta_a$}}
\label{fig:prices}
\end{center}
\end{figure}

We find that the impact of forward market access on spot prices is, as expected, the inverse of its effect on production. Specifically, delegation raises expected spot prices, with the magnitude of this increase being positively correlated with the agent's risk aversion parameter. Conversely, the equilibrium forward price is negatively correlated with this same parameter. This leads to an increase in the risk premium demanded by the market maker as the agent's risk aversion rises. The market maker's rational expectations regarding the optimal contract mean that the forward price remains highly sensitive to the agent's risk aversion, even though the contract is signed with the principal.

\paragraph{Forward Position}

Figure \ref{fig:position} below illustrates the evolution of the forward positions under two distinct scenarios: in-house and delegated production.

\begin{figure}[!thb]
\begin{center}
\begin{tabular}{c}
 (f) \\
 \includegraphics[width=0.45\textwidth]{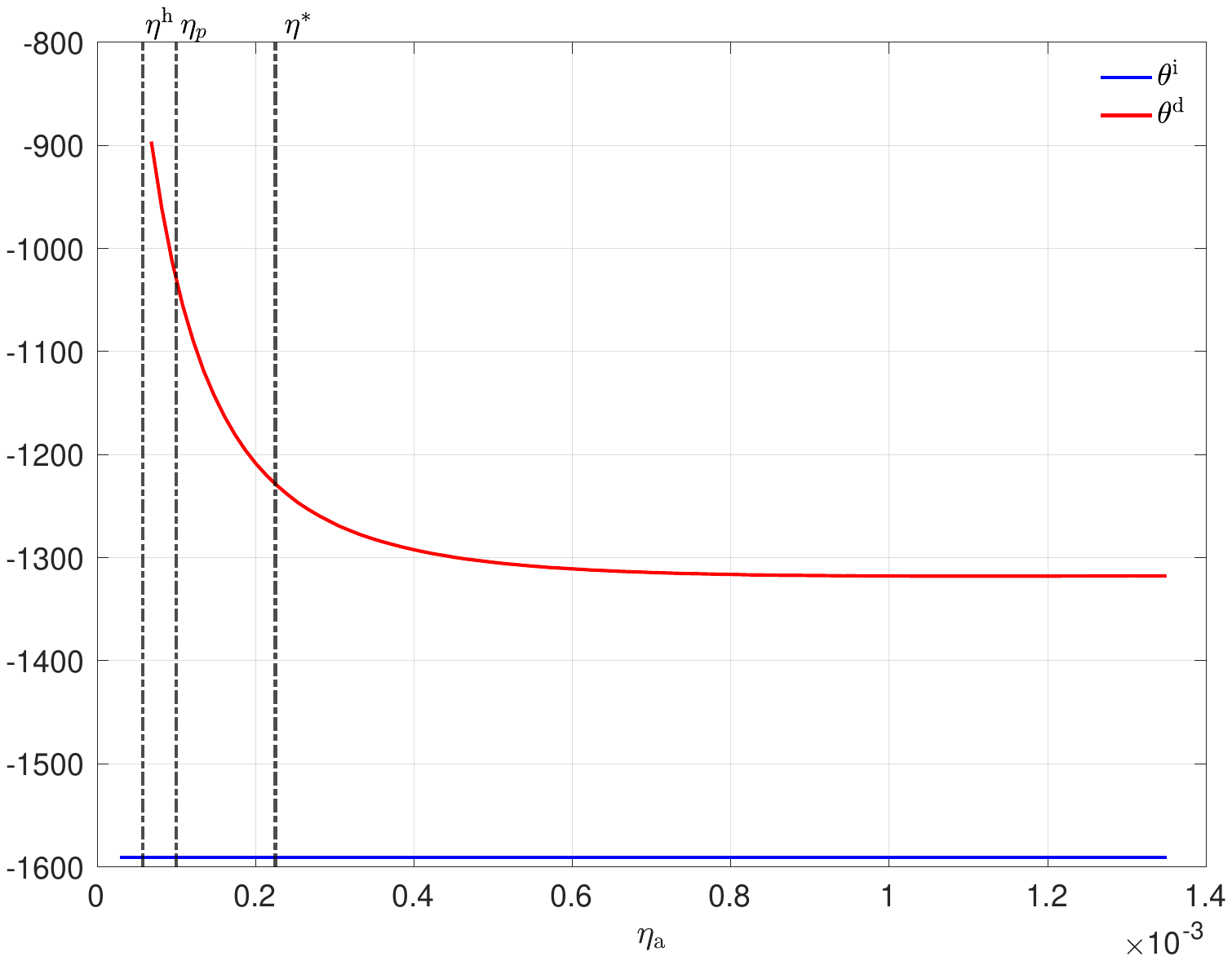}
\end{tabular}
\caption{{\small Forward positions as a function of the agent's risk-aversion parameter $\eta_a$}}
\label{fig:position}
\end{center}
\end{figure}

\hs

An in-house owner's forward market sales exceed those of an owner who delegates. When the principal delegates, the quantity sold on the forward market is positively correlated with the agent's risk aversion parameter. This holds true even as the risk premium increases (as seen in Figure \ref{fig:prices}), because the benefits of risk sharing with the agent become less pronounced with higher agent risk aversion.

\hs

\paragraph{Contract}

In this paragraph, we investigate whether the theoretical result established in Section~4.3---namely, that in the absence of demand uncertainty a hedged principal pays the agent less in expectation---remains robust when demand risk is reintroduced (as $\nu = 5>0$), at least locally around the monopoly production level $q_m$ and for values of $\eta_a$ such that $\eta_p<\eta_a<\eta^\star$. Figure~\ref{fig:compensation} provides clear support for this conclusion.

\begin{figure}[!thb]
\begin{center}
\begin{tabular}{ccc}
(a) & (b) & (c) \\
\includegraphics[width=0.3\textwidth]{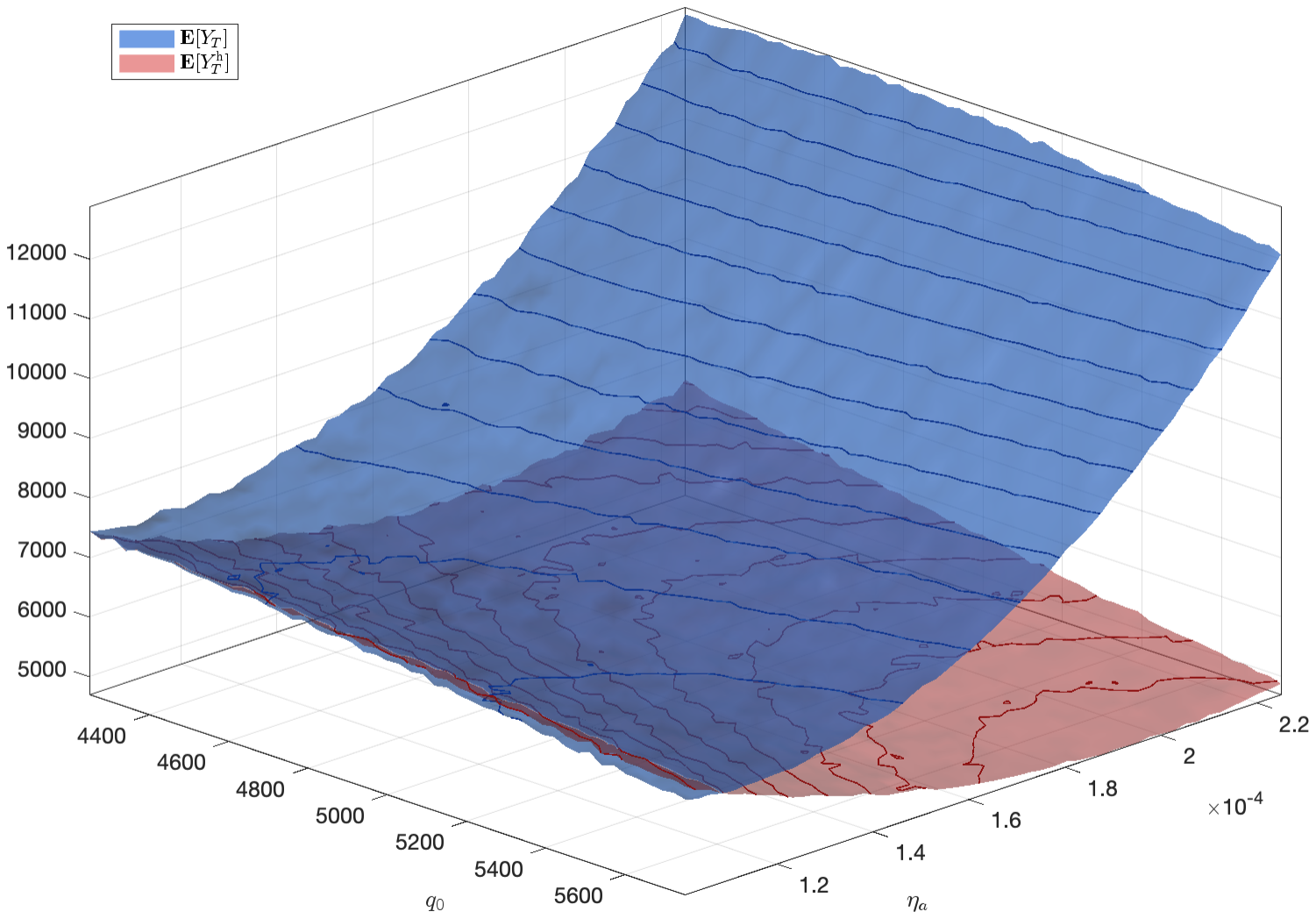} & \includegraphics[width=0.3\textwidth]{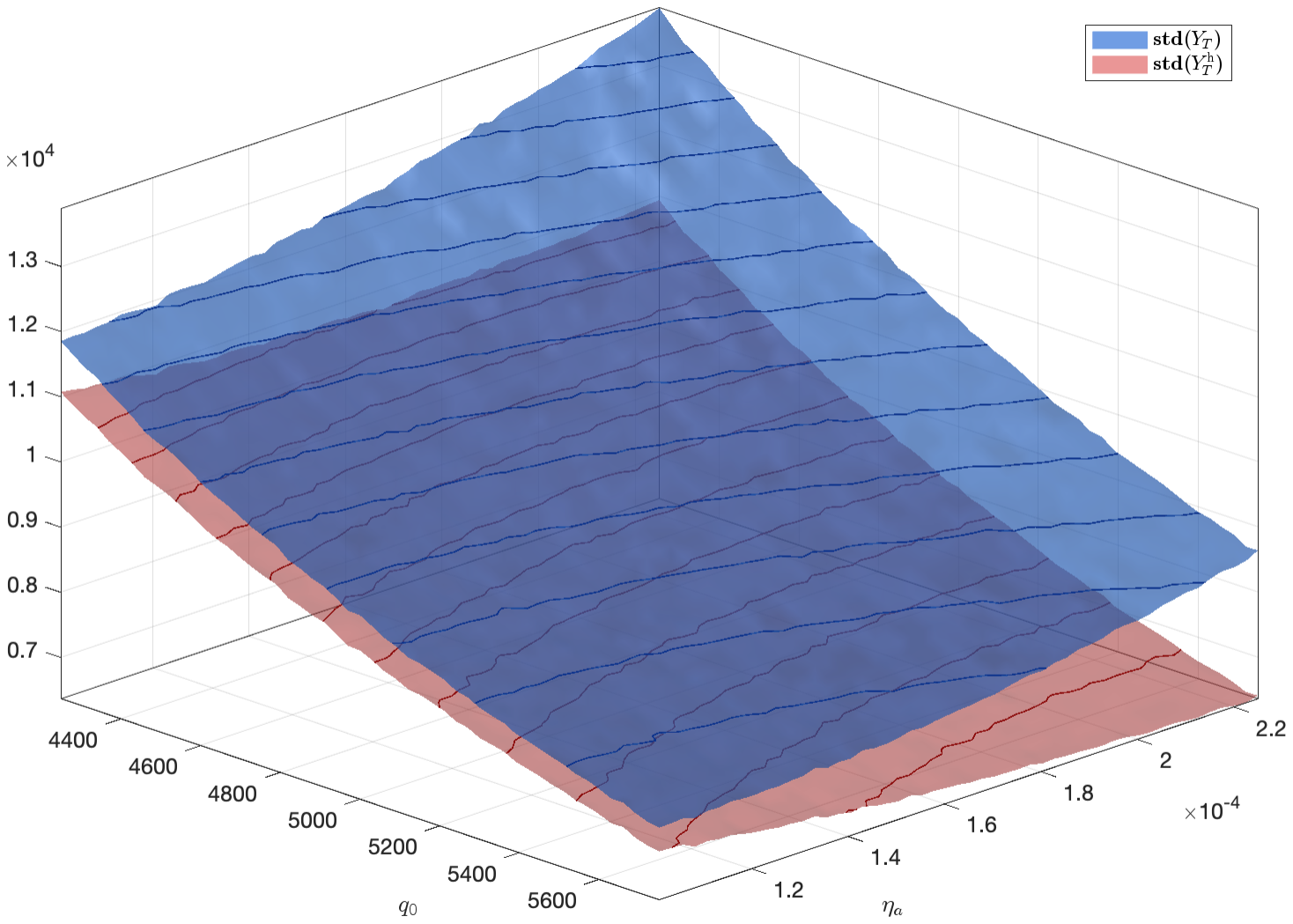} &
\includegraphics[width=0.3\textwidth]{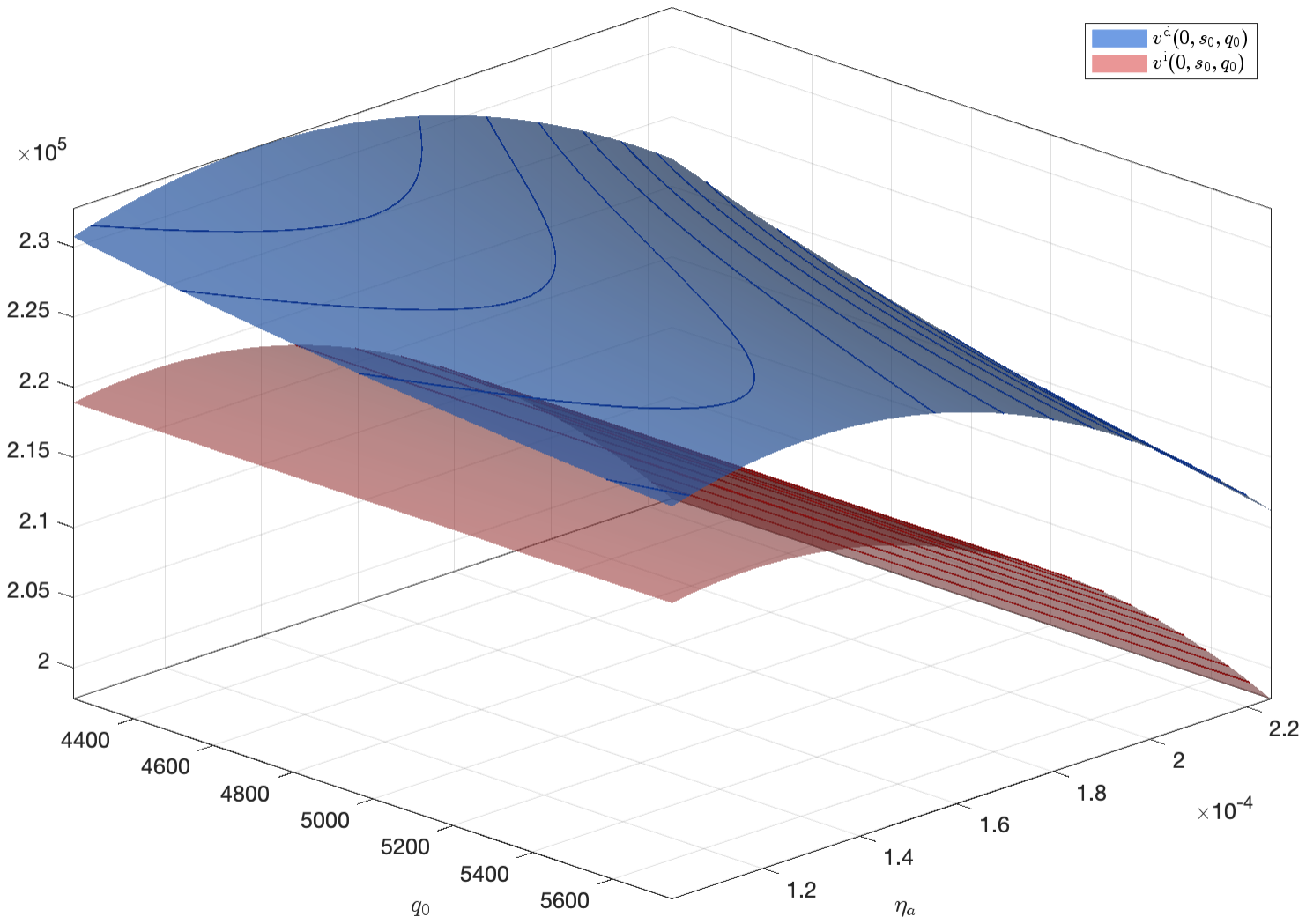} 
\end{tabular}
\caption{{\small As function of the agent's risk-aversion parameter $\eta_a$ and initial condition on $q_0$} (a) the expectations of agent's compensation with and without hedge (b) their standard deviations (c) the principal certainty equivalent values $v^{\rm i}(0,x_0)$ and $v^{\rm d}(0,x_0)$.}
\label{fig:compensation}
\end{center}
\end{figure}

Panel~(a) shows that, in this region of parameters, the hedged principal continues to pay less in expectation than the unhedged principal. This difference becomes small for sufficiently low values of $\eta_a$, but the ordering remains the same. The availability of a forward market for the principal leads to a reduction in the agent's expected compensation. This is due to the principal's ability to hedge a fraction of the output, which diminishes the degree of risk-sharing with the agent.  

As shown on Panel~(b) this hedging activity also results in a lower variance of the agent's compensation, an outcome that is welfare-enhancing for a risk-averse agent. This panel shows that the reduction in expected compensation does not come at the expense of higher payment risk for the agent. Finally, panel~(c) confirms that, for the parameter values under consideration, delegation is indeed valuable for the principal, as the delegated certainty equivalent dominates the in-house benchmark, i.e. $v^{\rm i}(0,x_0)<v^{\rm d}(0,x_0)$.

Taken together, these numerical results indicate that the compensation effect highlighted in Section~4.3 is not confined to the constant-demand benchmark, but remains robust in the economically relevant region where initial production is close to $q_m$ and for a pertinent range of risk-aversion parameter of the agent relative to the principal.

\section{Conclusion}

This paper studies how a principal’s access to a futures market interacts with delegation, risk sharing, and equilibrium pricing in a monopolistic production problem under demand uncertainty. Within a tractable continuous-time CARA framework, we compare in-house production with second-best delegation and characterize the producer’s optimal production and hedging decisions in equilibrium.

Our analysis delivers three main messages. First, delegation may remain optimal even when the agent is more risk-averse than the principal, provided that the participation constraint is not too tight. The key mechanism is the trade-off between a positive risk-sharing benefit and a negative information-rent component, which together determine the range of parameters for which delegation is valuable. Second, access to the forward market modifies both the firm’s internal allocation of risk and the equilibrium forward price. Because delegation already provides an internal risk-sharing channel, the principal hedges less under delegation than under in-house production, and this lower hedging demand translates into higher forward prices in equilibrium. Third, the availability of hedging weakens the risk-sharing role of the compensation scheme. In the constant-demand case, we show that this is associated with lower expected compensation for the agent under delegation. Our numerical results further indicate that this mechanism remains robust in the presence of demand uncertainty.

More broadly, our results show that external risk transfer through financial markets feeds back into internal organizational design. In our framework, the possibility of hedging part of the firm’s exposure on the futures market reduces the amount of price risk that must be borne within the principal-agent relationship, thereby changing both compensation design and market outcomes. This highlights a tight interaction between corporate risk management, incentive provision, and firm boundaries.

These results also contribute to the literature on corporate hedging and agency costs (e.g., \cite{Nance}) by showing how access to futures markets reshapes internal risk allocation and compensation design in a delegated production environment.

We finally examine a dynamic extension in which the principal can hedge continuously on the futures market. Under a linear-equilibrium ansatz, the numerical analysis indicates that the term structure of futures-price volatility depends on the correlation between demand and supply shocks. In this sense, the model is consistent with the view that the Samuelson effect is a conditional phenomenon driven by the resolution of uncertainty over time rather than a universal feature of commodity markets.

Several extensions appear natural. On the theoretical side, it would be useful to establish general existence and uniqueness results for the dynamic linear equilibrium. On the empirical side, the model suggests testable implications for the joint determination of firms’ hedging activity, compensation policies, and futures risk premia.

\bibliographystyle{plain}

\section{Appendix}
In this section, we denote 
$$
A := \begin{pmatrix} \nu^2 & \rho\sigma\nu \\ \rho\sigma\nu & \sigma^2 \end{pmatrix}, \,  M:=  \begin{pmatrix}  0 & \frac12 \\ \frac12 & -\varepsilon \end{pmatrix}.
$$

\subsection{Unhedgeable Risk-in-house Producer} \label{proofoptiprodnohedge}

The proof consists in building a smooth solution to the HJB equation \eqref{HJB}. To alleviate notation, we use $V$ for $V^{\rm i}$. Let us define the certainty equivalent $v(t,s,q)$, defined by $V(t,s,q) =: U_P(v(t,s,q))$, which satisfies
\begin{align*}
-\partial_tv &= \frac12 \sigma^2 \big( v_{qq} - \eta_pv_q^2 \big) + \frac12 \nu^2 \big( v_{ss} - \eta_pv_s^2 \big) 
 + \rho\sigma\nu(v_{qs} - \eta_p v_qv_s) + \sup_{a} \big\{a v_q - c(a)\big\}, \\ 
& = \frac12 \sigma^2 \big( v_{qq} - \eta_pv_q^2 \big) + \frac12 \nu^2 \big( v_{ss} - \eta_pv_s^2 \big) +  \rho\sigma\nu(v_{qs} - \eta_p v_qv_s) 
+  \frac12   \gamma  v_q^2,   \\
& = \frac12 \sigma^2   v_{qq}  + \frac12 \nu^2 \big( v_{ss} - \eta_pv_s^2 \big) +  \rho\sigma\nu(v_{qs} - \eta_p v_qv_s) 
-  \frac12   (\eta_p \sigma^2 -\gamma) v_q^2, 
 \end{align*}
together with the terminal condition $v(T,s,q) = \pi(s,q)q$. From the first line, we furthermore deduce the optimal production effort $a^*(t,s,q) := \gamma  v_q(t,s,q).$ Plugging the optimal control in the PDE above, we obtain the more compact form
\begin{align}\label{eq:v}
&0 = \partial_t v + \frac12 \Tr{ A D^2v} 
-\frac12 Dv \mcdot \Lambda_{\rm i}Dv, \text{ with }v(T,x) = x \mcdot M x\\
& \text{where}\quad x := \begin{pmatrix}s \\ q \end{pmatrix}, 
\Lambda_{\rm i} := \eta_p A - \gamma E_{22} 
= \begin{pmatrix} \eta_p \nu^2 & \eta_p \rho\sigma\nu \\ \eta_p \rho\sigma\nu & \eta_p\sigma^2-\gamma \end{pmatrix}.
\end{align}
Facing a standard linear-quadratic problem, we guess that $v$ has the quadratic form 
\be\label{v:quadra}
v(t,x) 
&=& 
K_0(t) + K_1(t)\cdot x + x \cdot K_2(t) x
\ee 
with $K_1(t)$ a vector of dimension~2 and $K_2$ a symmetric $2\times 2$ matrix. Then, it holds that $Dv = K_1  + 2 K_2(t) x$ and $D^2v = 2 K_2$. Plugging the guess into Equation \eqref{eq:v}, we get
\begin{align*}
0& = \dot K_0 + \dot K_1 \cdot x + x\cdot \dot K_2 x  \\
& + \Tr{A K_2} 
- \frac12 K_1 \mcdot \Lambda_{\rm i} K_1 
- 2x \mcdot K_2\Lambda_{\rm i} K_1 
- 2 x  \mcdot K_2 \Lambda_{\rm i} K_2 x.
\end{align*}
Identification and a little matrix algebra lead to a system of three equations:
\begin{align}\label{eq:K} 
\dot K_2& = 2 K_2 \Lambda_{\rm i} K_2, \quad K_2(T) = M, \\
\dot K_1& = 2K_2 \Lambda_{\rm i} K_1, \quad K_1(T) =0, \text{ which implies } K_1=0,\\
K_0(t)& = \int_t^T \Tr{ A K_2(u)} du.
\end{align}
To solve the first equation, let us assume that the matrix $K_2$ is invertible for all $t \in [0,T]$. Observing that $K_2^{-1} \dot K_2 = - \dot{(K_2^{-1})} K_2$, we obtain 
$$
\dot{K_2^{-1}}=-2\Lambda_{\rm i}.
$$
By integrating, we obtain  
\b*
K_2(t) & =& \big( M^{-1} + 2(T-t)\Lambda_{\rm i}\big)^{-1} = \big( I + 2(T-t)M\Lambda_{\rm i}\big)^{-1}M. 
\e*
It remains to give a set of sufficient conditions that ensures the matrix $I+2(T-t) M\Lambda_{\rm i}$ is invertible. Setting $\tau := T-t$, we introduce 
\begin{align*}
N_i(\tau) := I + 2\tau M\Lambda_{\rm i}= 
\begin{pmatrix}
1 + \eta_p\rho\sigma\nu \tau & \gamma_i \tau \\
&\\
(\eta_p\nu^2 - 2\eps\eta_p\rho\sigma\nu)\tau & 1 + (\eta_p\rho\sigma\nu - 2\eps\gamma_i)\tau
\end{pmatrix}
\end{align*}
which admits the determinant
\begin{align*}
\Delta_i(t) := \det{N_i(\tau)} & = 1 + 2\big[ \eta_p\rho\sigma\nu - \eps\gamma_i\big]\tau
+\big[ \eta_p\rho\sigma\nu(\eta_p\rho\sigma\nu - 2\eps\gamma_i) -\gamma_i(\eta_p\nu^2 - 2\eps\eta_p\rho\sigma\nu) \big]\tau^2, \\
& = 1 + 2\big[ \eta_p\rho\sigma\nu - \eps\gamma_i\big]\tau
+ \eta_p\nu^2\big[\eta_p \rho^2\sigma^2 -  \gamma_i\big]\tau^2\\
&=1+ 2\big[ \eta_p\rho\sigma\nu + \eps(\gamma-\eta_p\sigma^2)\big]\tau
+ \eta_p\nu^2\big[\gamma-\eta_p(1- \rho^2)\sigma^2\big]\tau^2
\end{align*}
Thus, a sufficient condition to have a strictly positive determinant is $\gamma > \eta_p \sigma^2$. Under this condition, we have
\begin{align*}
N_i(\tau)^{-1} = \frac1{\Delta_i(t)}
\begin{pmatrix}
 1 + (\eta_p\rho\sigma\nu - 2\eps\gamma_i)\tau  & -\gamma_i \tau \\
 &\\
(2\eps\eta_p\rho\sigma\nu - \eta_p\nu^2)\tau &1 + \eta_p\rho\sigma\nu \tau 
\end{pmatrix}
\end{align*}
and thus
\begin{align*}
K_2^i(t) &= N_i(\tau)^{-1} 
\begin{pmatrix}
0 & \frac12 \\
\frac12 &-\eps 
\end{pmatrix}
=
 \frac1{\Delta_i(t)}
 \begin{pmatrix}
-\frac12 \gamma_i \tau & \frac12 + \frac12 (\eta_p\rho\sigma\nu - 2\eps\gamma_i)\tau  + \eps\gamma_i\tau\\
&\\
\frac12 + \frac12 \eta_p\rho\sigma\nu \tau & \frac12(2\eps\eta_p\rho\sigma\nu - \eta_p\nu^2)\tau -\eps - \eps\eta_p\rho\sigma\nu \tau 
\end{pmatrix}, \\
&\\
& = 
 \frac1{\Delta_i(t)}
 \begin{pmatrix}
-\frac12 \gamma_i \tau & \frac12 + \frac12 \eta_p\rho\sigma\nu \tau\\
&\\
\frac12 + \frac12 \eta_p\rho\sigma\nu \tau & -\frac12 \eta_p\nu^2 \tau -\eps  
\end{pmatrix},\\
&=\frac1{\Delta_i(t)} (M+\frac{\eta_p}{2}\tau A^{-1}).
\end{align*}
Finally, we have an explicit formulation for the optimal level of effort given by $\gamma v_q(t,s,q)$ where
\begin{align*}
v_q(t,s,q) & = 
\frac1{\Delta_i(t)} \big[ (1 +  \eta_p\rho\sigma\nu(T-t)) s - (\eta_p\nu^2 (T-t) + 2\eps) q\big], \\
& = \frac{2\eps + \eta_p\nu^2 (T-t)}{\Delta_i(t)}\Big[ \frac{1 +  \eta_p\rho\sigma\nu(T-t)}{1+\frac{\eta_p\nu^2}{2\eps} (T-t)} \frac{s}{2\eps} - q  \Big]
\end{align*}

\subsection{Unhedgeable Risk -- Delegation}\label{proofdelegationnohedge}

We follow the same method as in Proposition~\ref{OptiProd-no-hedge}. Let $C_p:=\eta_p A+\gamma E_{22},$
$\bar C := (\eta_a+\eta_p)A+\gamma E_{22}.$ The principal's value function solves
\begin{align*}
-\partial_t V = \frac12 Tr(AD^2V) + \sup_{z}\Big\{ C_p z \cdot DV + \frac12 \eta_p\, z\cdot \bar C z\, V \Big\},
\qquad
V(T,x)=U_P(x\cdot Mx).
\end{align*}
Writing \(V=U_P(v)\), the certainty equivalent \(v\) satisfies
\begin{align*}
-\partial_t v = \frac12 Tr \Big(A(D^2v-\eta_p Dv Dv^\top)\Big) + \sup_{z}\Big\{ C_p z\cdot Dv-\frac12 z\cdot \bar C z \Big\},
\qquad
v(T,x)=x\cdot Mx.
\end{align*}
The maximizer is
\[
\hat z=\bar C^{-1}C_p Dv.
\]
A direct computation gives
\begin{align*}
&\bar C^{-1} = \frac1{\tilde\gamma} \begin{pmatrix}
\frac{\bar\gamma}{(\eta_a+\eta_p)\nu^2} & -\frac{\rho\sigma}{\nu}\\[0.3em]
-\frac{\rho\sigma}{\nu} & 1
\end{pmatrix},
\qquad
\bar C^{-1}C_p = \frac1{\tilde\gamma}
\begin{pmatrix}
\frac{\eta_p}{\eta_a+\eta_p}\tilde\gamma
&
-\frac{\rho\sigma}{\nu}\frac{\eta_a}{\eta_a+\eta_p}\gamma\\[0.3em]
0
&
\gamma+\eta_p(1-\rho^2)\sigma^2
\end{pmatrix}, \\
\text{where} & \quad \tilde\gamma := \gamma + (\eta_a+\eta_p)(1-\rho^2)\sigma^2,
\end{align*}
and therefore
\[
C_p\bar C^{-1}C_p = \frac1{\tilde\gamma}
\begin{pmatrix}
\frac{\eta_p^2\nu^2}{\eta_a+\eta_p}\tilde\gamma
&
\frac{\eta_p^2}{\eta_a+\eta_p}\rho\sigma\nu\,\tilde\gamma\\[0.3em]
\frac{\eta_p^2}{\eta_a+\eta_p}\rho\sigma\nu\,\tilde\gamma
&
(\gamma+\eta_p\sigma^2)(\gamma+\eta_p(1-\rho^2)\sigma^2)
-\frac{\eta_a\eta_p}{\eta_a+\eta_p}\rho^2\sigma^2\gamma
\end{pmatrix}.
\]
Introduce
\[
\Lambda_d:=\eta_p A-C_p\bar C^{-1}C_p.
\]
Then
\[
\Lambda_d=\bar\eta A-\gamma_d E_{22},
\qquad
\bar\eta:=\frac{\eta_a\eta_p}{\eta_a+\eta_p},
\qquad
\gamma_d
:=
\gamma\Bigg(
1-\frac{(1-\rho^2)\sigma^2\eta_a^2}{\tilde\gamma(\eta_a+\eta_p)}
\Bigg).
\]
Hence \(v\) solves
\begin{align*}
-\partial_t v = \frac12 Tr(AD^2v)-\frac12 Dv\cdot \Lambda_d Dv, \qquad v(T,s,q)=q(s-\eps q).
\end{align*}
This is the same equation as \eqref{eq:v}, with \(\Lambda_d\) in place of \(\Lambda_i\). We therefore look for a quadratic solution of the form
\[
v(t,x)=K_0(t)+x\cdot K_2(t)x.
\]
Set \(\tau:=T-t\). Then
\[
K_2(t)=\big(I+2\tau M\Lambda_d\big)^{-1}M.
\]
Writing
\begin{align*}
N(\tau):=I+2\tau M\Lambda_d =
\begin{pmatrix}
1+\bar\eta\rho\sigma\nu\tau & \beta\tau\\
(\bar\eta\nu^2-2\eps\bar\eta\rho\sigma\nu)\tau &
1+(\bar\eta\rho\sigma\nu-2\eps\beta)\tau
\end{pmatrix},
\quad \text{with} \quad \beta:=\bar\eta\sigma^2-\gamma_d,
\end{align*}
its determinant is
\[
\Delta_d(t) := \det N(\tau) =
1+2(\bar\eta\rho\sigma\nu-\eps\beta)\tau
+\bar\eta\nu^2(\bar\eta\rho^2\sigma^2-\beta)\tau^2.
\]
Under the sufficient condition \(\gamma_d>\bar\eta\sigma^2\), one has \(\Delta_d(t)>0\) on \([0,T]\), and
\[
N(\tau)^{-1} = \frac1{\Delta_d(t)}
\begin{pmatrix}
1+(\bar\eta\rho\sigma\nu-2\eps\beta)\tau & -\beta\tau\\
(2\eps\bar\eta\rho\sigma\nu-\bar\eta\nu^2)\tau & 1+\bar\eta\rho\sigma\nu\tau
\end{pmatrix}.
\]
Therefore
\[
K_2(t) = \frac1{\Delta_d(t)}
\begin{pmatrix}
-\frac12\beta\tau & \frac12+\frac12\bar\eta\rho\sigma\nu\tau\\[0.3em]
\frac12+\frac12\bar\eta\rho\sigma\nu\tau & -\eps-\frac12\bar\eta\nu^2\tau
\end{pmatrix}.
\]
We deduce
\[
v_q(t,s,q) = \frac{2\eps+\bar\eta\nu^2(T-t)}{\Delta_d(t)} \Big[\ell_d(t)\,q_m(s)-q\Big],
\qquad
\ell_d(t) := \frac{1+\bar\eta\rho\sigma\nu(T-t)}{1+\frac{\bar\eta\nu^2}{2\eps}(T-t)},
\qquad
q_m(s):=\frac{s}{2\eps}.
\]
Since
\[
\hat z=\bar C^{-1}C_p Dv = 2\bar C^{-1}C_pK_2(t)X,
\]
the second component is
\[
\hat z_q(t,s,q) = \frac1{\Delta_d(t)} \frac{\gamma+\eta_p(1-\rho^2)\sigma^2}{\gamma+(\eta_a+\eta_p)(1-\rho^2)\sigma^2}
\Big[ (1+\bar\eta\rho\sigma\nu(T-t))\,s - (2\eps+\bar\eta\nu^2(T-t))\,q \Big].
\]
Using \(\hat\alpha(z)=\gamma z_q\), the optimal production dynamics are
\[
dQ_t = k_d(t)\Big[\ell_d(t)\,q_m(S_t)-Q_t\Big]dt + \sigma\,dW_t^\perp,
\]
where
\[
k_d(t) = \gamma \frac{\gamma+\eta_p(1-\rho^2)\sigma^2}{\gamma+(\eta_a+\eta_p)(1-\rho^2)\sigma^2} \frac{2\eps+\bar\eta\nu^2(T-t)}{\Delta_d(t)}.
\]
This proves Proposition~\ref{Delegation-no-hedge}.

\subsection{Proof of Proposition \ref{Integrated-hedge}} \label{proofintegratedhedge}

From Appendix~\ref{proofoptiprodnohedge}, we have
\begin{align*}
K_2^i(t)
=
\frac{1}{\Delta_i(t)}
\begin{pmatrix}
-\frac12(\eta_p\sigma^2-\gamma)(T-t) &
\frac12\big(1+\eta_p\rho\sigma\nu(T-t)\big)\\
\frac12\big(1+\eta_p\rho\sigma\nu(T-t)\big) &
-\epsilon-\frac12\eta_p\nu^2(T-t)
\end{pmatrix}.
\end{align*}
Let \(\bar\epsilon:=(\epsilon,1)^\top\). Then
\begin{align*}
2\bar\epsilon\cdot K_2^i(0)\bar\epsilon
&=
\frac{1}{\Delta_i(0)}
\Big(
-\epsilon^2(\eta_p\sigma^2-\gamma)T
+2\epsilon\big(1+\eta_p\rho\sigma\nu T\big)
-2\epsilon-\eta_p\nu^2T
\Big)\\
&=
\frac{T}{\Delta_i(0)}
\Big(
\epsilon^2\gamma-\eta_p(\epsilon^2\sigma^2+\nu^2-2\epsilon\rho\sigma\nu)
\Big)
=
\frac{T}{\Delta_i(0)}\big(\epsilon^2\gamma-\eta_p\sigma_p^2\big),
\end{align*}
where $\sigma_p^2:=\epsilon^2\sigma^2+\nu^2-2\epsilon\rho\sigma\nu$
is the variance coefficient of \(P_t=S_t-\epsilon Q_t\). Similarly,
\begin{align*}
2\bar\epsilon\cdot K_2^i(0)X_0
&=
\frac{1}{\Delta_i(0)}
\Big(
-\epsilon(\eta_p\sigma^2-\gamma)TS_0
+\epsilon\big(1+\eta_p\rho\sigma\nu T\big)Q_0\\
&\hspace{5em}
+\big(1+\eta_p\rho\sigma\nu T\big)S_0
-(2\epsilon+\eta_p\nu^2T)Q_0
\Big)\\
&=
\frac{1}{\Delta_i(0)}
\Big(
\big(1+(\eta_p\rho\sigma\nu-\epsilon(\eta_p\sigma^2-\gamma))T\big)P_0
+\big(\epsilon^2\gamma-\eta_p\sigma_p^2\big)Q_0
\Big),
\end{align*}
with \(P_0:=S_0-\epsilon Q_0\).

By equation~\eqref{eq:int-hedge},
\begin{align*}
\theta_i(F)
=
\frac{F-2\bar\epsilon\cdot K_2^i(0)X_0}{2\bar\epsilon\cdot K_2^i(0)\bar\epsilon}
=
\frac{\Delta_i(0)F-\big(1+(\eta_p\rho\sigma\nu-\epsilon(\eta_p\sigma^2-\gamma))T\big)P_0}
{T(\epsilon^2\gamma-\eta_p\sigma_p^2)}
-Q_0.
\end{align*}

The optimal effort in the hedged problem is obtained by shifting the state variable:
\begin{align*}
\hat\alpha(t,s,q)
=
\gamma\,\partial_q v_i\big(t,s+\epsilon\theta_i(F),q+\theta_i(F)\big).
\end{align*}
Using the expression of \(\partial_q v_i\) from Appendix~\ref{proofoptiprodnohedge},
\begin{align*}
\hat\alpha(t,s,q)
&=
\frac{\gamma}{\Delta_i(t)}
\Big(
\big(1+\eta_p\rho\sigma\nu(T-t)\big)\big(s+\epsilon\theta_i(F)\big)
-\big(2\epsilon+\eta_p\nu^2(T-t)\big)\big(q+\theta_i(F)\big)
\Big)\\
&=
\frac{\gamma\big(2\epsilon+\eta_p\nu^2(T-t)\big)}{\Delta_i(t)}
\left[
\frac{1+\eta_p\rho\sigma\nu(T-t)}{2\epsilon+\eta_p\nu^2(T-t)}
\big(s+\epsilon\theta_i(F)\big)-q-\theta_i(F)
\right].
\end{align*}
Therefore, with
\begin{align*}
k_i(t):=\gamma\frac{2\epsilon+\eta_p\nu^2(T-t)}{\Delta_i(t)},
\qquad
\ell_i(t):=\frac{1+\eta_p\rho\sigma\nu(T-t)}
{1+\frac{\eta_p\nu^2}{2\epsilon}(T-t)},
\qquad
q_m(s,\theta):=\frac{s+\epsilon\theta}{2\epsilon},
\end{align*}
we obtain
\begin{align*}
\hat\alpha(t,s,q)
=
k_i(t)\Big(\ell_i(t)\,q_m(s,\theta_i(F))-q-\theta_i(F)\Big).
\end{align*}
Since \(dQ_t=\hat\alpha(t,S_t,Q_t)\,dt+\sigma\big(\rho\,dW_t^\perp+\sqrt{1-\rho^2}\,dW_t\big)\), it follows that
\begin{align*}
dQ_t^*
=
k_i(t)\Big(\ell_i(t)\frac{S_t+\epsilon\theta_i(F)}{2\epsilon}-Q_t^*-\theta_i(F)\Big)\,dt
+\sigma\big(\rho\,dW_t^\perp+\sqrt{1-\rho^2}\,dW_t\big),
\end{align*}
or equivalently,
\begin{align*}
dQ_t^*
=
k_i(t)\Big(\ell_i(t)\Big(\frac{S_t}{2\epsilon}+\frac{\theta_i(F)}{2}\Big)-Q_t^*-\theta_i(F)\Big)\,dt
+\sigma\big(\rho\,dW_t^\perp+\sqrt{1-\rho^2}\,dW_t\big).
\end{align*}
Hence
\begin{align*}
\alpha_{i,h}^*(t,s,q)
&=
k_i(t)\Big(\ell_i(t)\Big(\frac{s}{2\epsilon}+\frac{\theta_i(F)}{2}\Big)-q-\theta_i(F)\Big)\\
&=
k_i(t)\Big(\ell_i(t)\frac{s}{2\epsilon}-q\Big)+k_i(t)\Big(\frac{\ell_i(t)}{2}-1\Big)\theta_i(F)\\
&=
\alpha_i^*(t,s,q)+k_i(t)\Big(\frac{\ell_i(t)}{2}-1\Big)\theta_i(F).
\end{align*}

\subsection{Proof of Proposition \ref{Delegation-hedge}} \label{proofdelegationdhedge}

Relying on the symmetry argument of the two organizational structure, the proof follows the same lines as Proposition~\ref{Integrated-hedge}.

\hs

\subsection{Case of constant demand}\label{app:nu=0}

In this subsection, we prove the Proposition~\ref{prop:nu=0}.

\noindent{\rm (i)}  The delegated producer's certainty equivalent reduces to
\[
v_d(0,s,q)
=
\frac{s^2}{4\eps}
-\frac{\eps}{\Delta_d(0)}
\left(q-\frac{s}{2\eps}\right)^2
+\frac{\sigma^2}{2\beta_d}\log \Delta_d(0),
\]
with $\Delta_d(0)=1-2\eps\beta_d T.$ Hence the producer solves
\[
\sup_{\theta\in\R}\Big\{v_d(0,S_0+\eps\theta,q_0+\theta)-\theta F\Big\}.
\]
Differentiating with respect to \(\theta\) yields
\[
F
=
\frac{(1-\eps\beta_d T)S_0-\eps q_0}{\Delta_d(0)}
-
\frac{\eps^2\beta_d T}{\Delta_d(0)}\,\theta.
\]
On the other hand, under a static position \(\theta\), the optimal production satisfies
\[
dQ_t^{\rm h}
=
k_d(t)\left(q_m-\frac{\theta}{2}-Q_t^{\rm h}\right)\,dt+\sigma\,dW_t,
\qquad
k_d(t):=\gamma\lambda_d\frac{2\eps}{\Delta_d(t)},
\]
so that
\[
Q_T^{\rm h}
=
\Phi(T)q_0+\bigl(1-\Phi(T)\bigr)\left(q_m-\frac{\theta}{2}\right)
+\sigma\Phi(T)\int_0^T \Phi(t)^{-1}\,dW_t,
\]
where
\[
\Phi(t):=\exp\!\left(-\int_0^t k_d(u)\,du\right).
\]
Therefore,
\[
\E[P_T]
=
S_0-\eps\E[Q_T^{\rm h}]
=
\frac{S_0}{2}
+\Phi(T)\left(\frac{S_0}{2}-\eps q_0\right)
+\frac{\eps}{2}\bigl(1-\Phi(T)\bigr)\theta,
\]
and
\[
\Var(P_T)=\eps^2\Var(Q_T),
\qquad
\Var(Q_T)=\sigma^2\Phi(T)^2\int_0^T \Phi(t)^{-2}\,dt.
\]
The market maker's inverse demand is thus
\[
F
=
\E[P_T]+\eta_m\Var(P_T)\theta
=
\frac{S_0}{2}
+\Phi(T)\left(\frac{S_0}{2}-\eps q_0\right)
+
\left[
\frac{\eps}{2}\bigl(1-\Phi(T)\bigr)+\eta_m\eps^2\Var(Q_T)
\right]\theta.
\]
Equating the producer's and market maker's inverse demands gives
\[
\theta^\ast
=
\frac{
\eps(q_m-q_0)\bigl(1-\Delta_d(0)\Phi(T)\bigr)
}{
\eps^2\beta_d T
+
\Delta_d(0)\left[
\frac{\eps}{2}\bigl(1-\Phi(T)\bigr)+\eta_m\eps^2\Var(Q_T)
\right]
},
\]
and then
\[
F^\ast
=
\frac{S_0}{2}
+\Phi(T)\left(\frac{S_0}{2}-\eps q_0\right)
+
\left[
\frac{\eps}{2}\bigl(1-\Phi(T)\bigr)+\eta_m\eps^2\Var(Q_T)
\right]\theta^\ast.
\]

\noindent{\rm (ii)-(iii)} In the no-hedging case,
\[
\hat z(t,q)=\lambda_d\frac{2\eps}{\Delta_d(t)}(q_m-q),
\qquad
dQ_t=k_d(t)(q_m-Q_t)\,dt+\sigma\,dW_t,
\]
whereas under a static forward position \(\theta\),
\[
\hat z^{\rm h}(t,q)=\lambda_d\frac{2\eps}{\Delta_d(t)}
\left(q_m-q-\frac{\theta}{2}\right),
\qquad
dQ_t^{\rm h}=k_d(t)\left(q_m-\frac{\theta}{2}-Q_t^{\rm h}\right)\,dt+\sigma\,dW_t.
\]
Let
\[
M_t:=q_m-Q_t,
\qquad
M_t^{\rm h}:=q_m-\frac{\theta}{2}-Q_t^{\rm h}.
\]
Then
\[
dM_t=-k_d(t)M_t\,dt-\sigma\,dW_t,
\qquad
dM_t^{\rm h}=-k_d(t)M_t^{\rm h}\,dt-\sigma\,dW_t,
\]
with
\[
M_0=q_m-q_0,
\qquad
M_0^{\rm h}=q_m-q_0-\frac{\theta}{2}.
\]
Hence
\[
M_t=\Phi(t)\left(M_0-\sigma\int_0^t \Phi(u)^{-1}\,dW_u\right),
\qquad
M_t^{\rm h}=\Phi(t)\left(M_0^{\rm h}-\sigma\int_0^t \Phi(u)^{-1}\,dW_u\right),
\]
and therefore
\[
\E\big[(M_t^{\rm h})^2\big]-\E[M_t^2]
=
\Phi(t)^2\Big((M_0^{\rm h})^2-M_0^2\Big)
=
\Phi(t)^2\left(\frac{\theta^2}{4}-\theta(q_m-q_0)\right).
\]
Since
\[
Z_t=\hat z(t,Q_t)=\lambda_d\frac{2\eps}{\Delta_d(t)}M_t,
\qquad
Z_t^{\rm h}=\hat z^{\rm h}(t,Q_t^{\rm h})=\lambda_d\frac{2\eps}{\Delta_d(t)}M_t^{\rm h},
\]
it follows that
\[
\E\big[(Z_t^{\rm h})^2\big]-\E[Z_t^2]
=
\left(\lambda_d\frac{2\eps}{\Delta_d(t)}\right)^2
\Phi(t)^2
\left(\frac{\theta^2}{4}-\theta(q_m-q_0)\right).
\]
Now,  \(Y_0^{\rm h}=Y_0 = r_0\), then
\[
dY_t=\frac12(\gamma+\eta_a\sigma^2)Z_t^2\,dt+\sigma Z_t\,dW_t,
\qquad
dY_t^{\rm h}=\frac12(\gamma+\eta_a\sigma^2)(Z_t^{\rm h})^2\,dt+\sigma Z_t^{\rm h}\,dW_t,
\]
so that
\[
\E[Y_T^{\rm h}-Y_T]
=
\frac12(\gamma+\eta_a\sigma^2)
\int_0^T\Big(\E[(Z_t^{\rm h})^2]-\E[Z_t^2]\Big)\,dt.
\]
Therefore
\[
\E[Y_T^{\rm h}-Y_T]
=
\frac12(\gamma+\eta_a\sigma^2)
\left(
\frac{\theta^2}{4}-\theta(q_m-q_0)
\right)
\int_0^T
\left(\lambda_d\frac{2\eps}{\Delta_d(t)}\right)^2
\Phi(t)^2\,dt.
\]
Evaluating this identity at \(\theta=\theta^\ast\) gives
\[
\E[Y_T^{\rm h}-Y_T]
=
\frac12(\gamma+\eta_a\sigma^2)
\left(
\frac{(\theta^\ast)^2}{4}-\theta^\ast(q_m-q_0)
\right)
\int_0^T
\left(\lambda_d\frac{2\eps}{\Delta_d(t)}\right)^2
\Phi(t)^2\,dt.
\]

Furthermore, under the condition $\eta_p\eta_a\sigma^2>\gamma(\eta_a+\eta_p),$ one has \(\beta_d>0\). Moreover, the non-degeneracy condition \(\Delta_d(0)>0\) implies \(\Delta_d(0)\in(0,1)\). As shown above, $1-\Delta_d(0)\Phi(T)>0,$
hence the coefficient
\[
C:=
\frac{
\eps\bigl(1-\Delta_d(0)\Phi(T)\bigr)
}{
\eps^2\beta_d T
+
\Delta_d(0)\left[
\frac{\eps}{2}\bigl(1-\Phi(T)\bigr)+\eta_m\eps^2\Var(Q_T)
\right]
}
\]
is strictly positive, and
\[
\theta^\ast=C(q_m-q_0).
\]
Substituting this identity into
\[
\frac{(\theta^\ast)^2}{4}-\theta^\ast(q_m-q_0)
\]
yields
\[
\frac{(\theta^\ast)^2}{4}-\theta^\ast(q_m-q_0)
=
C\left(\frac{C}{4}-1\right)(q_m-q_0)^2.
\]
It remains to show that \(C<4\). Since the denominator defining \(C\) is positive, the inequality \(C<4\) is equivalent to
\[
1-\Delta_d(0)\Phi(T)
<
4\eps\beta_d T
+
2\Delta_d(0)\bigl(1-\Phi(T)\bigr)
+
4\Delta_d(0)\eta_m\eps\Var(Q_T).
\]
Using
\[
\Delta_d(0)=1-2\eps\beta_d T,
\]
the difference between the right-hand side and the left-hand side simplifies to
\[
1-\Delta_d(0)\Phi(T)+4\Delta_d(0)\eta_m\eps\Var(Q_T),
\]
which is strictly positive. Therefore $0<C<4.$ Consequently, $ \E[Y_T^{\rm h}-Y_T]<0.$ This proves the result.

\subsection{First-best delegation} \label{app:first-best}

The first-best problem of production delegation in the absence of forward hedging is the following:
\begin{align*}
& V^{\rm d, fb} := \sup_{\alpha,\xi} \E\big[ U_{\rm P}\big( - \xi   + P(S_T,Q_T) Q_T \big)\big], \quad
\text{s.t.} \quad \E\Big[ U_{\rm A}\Big(\xi  -\int_0^T c(\alpha_t) dt  \Big)\Big] \ge U_{\rm A}(r_0) =: R_0.
\end{align*}

The problem is solved using Lagrangian relaxation of the inequality constraint. The first-best problem is then equivalent to
\begin{align*}
& V^{\rm d, fb} = \inf_{\ell \ge 0} \Big\{ - \ell R_0 + \sup_{\alpha,\xi} \E\big[ U_{\rm P}\big( - \xi   + \Gc^\alpha_T \big) + \ell U_{\rm A}\big(\xi  - \Kc_T^\alpha  \big)\big] \Big\}, \quad
\Gc^\alpha_T := P(S_T,Q_T) Q_T, \quad \Kc_T^\alpha := \int_0^T c(\alpha_t) dt.
\end{align*}
This problem is exactly the same as the first-best problem in \cite{Aid22}. Applying the same computations, it holds that
\begin{align*}
& V^{\rm d, fb}  = R_0 \Big( \frac{\bar V}{R_0} \Big)^{1+ \frac{\eta_a}{\eta_p}}, \quad 
\text{with} \quad 
\bar V := \sup_{\alpha} \E\big[ \bar U\big( \Gc^\alpha_T - \Kc^\alpha_T\big) \big], \quad
\bar U(x) = -e^{-\bar \eta x}, \quad \bar \eta := \frac{\eta_a \eta_p}{\eta_a + \eta_p}
\end{align*}
The optimal control problem with value $\bar V$ is the same as the production problem without delegation but where the risk-aversion of the producer has been reduced from $\eta_p$ to $\bar\eta$. Thus, we get immediately that $\bar V = \bar U\big(\bar v(0,s,q)\big)$ where the certainty equivalent $\bar v(t,s,q)$ satisfies the PDE
\begin{align}\label{eq:vbar}
&0 = \partial_t \bar v + b \cdot D\bar v + \frac12 \Tr{ A D^2\bar v} 
-\frac12 D\bar v \mcdot \bar \Lambda D\bar v, \quad \bar v(T,x) = x \mcdot M x, \\
& \text{with}\quad
 \bar \Lambda   = \begin{pmatrix} \bar \eta \nu^2 & \bar \eta \rho\sigma\nu \\ \bar \eta \rho\sigma\nu & \bar\eta \sigma^2 -\gamma,  \end{pmatrix}. \nonumber
\end{align}
Thus we have
\begin{align*}
 V^{\rm d, fb}  
 = - e^{-\eta_a r_0} e^{(1+\frac{\eta_p}{\eta_a}) (-\frac{\eta_a\eta_p}{\eta_a+\eta_p} \bar v + \eta_a r_0)}
= - e^{ -\eta_p \bar v + \eta_p r_0} = U_{\rm P}(\bar v - r_0).
\end{align*}

Note that when $\rho^2=1$, it holds that $\bar\Lambda = \Lambda_d$ inducing $\bar v = v^{\rm sb}$.

\section{Online Appendix -- Extension to Dynamic Hedging}

This section considers a dynamic hedging extension of the model in which the principal can trade continuously in the futures market. Our objective is to study the implications of dynamic hedging for the equilibrium price dynamics and, in particular, for the term structure of futures-price volatility. To keep the analysis tractable, we focus on a linear market equilibrium in which the futures price process is affine in the state variables and clears the market at each date. Within this linear-equilibrium framework, we derive the system that characterizes equilibrium price dynamics and solve it numerically.

To mitigate demand and market fluctuations, we assume that the firm owner can trade her production on the futures market at any time. The futures price process \(F=\{F_t,\ t\in[0,T]\}\) is determined endogenously through the interaction between the Principal and a market maker, and must satisfy the terminal condition $F_T=P(S_T,Q_T).$ The Principal's financial revenues from dynamic hedging are $H_T^p:=\int_0^T \theta_t^p\, dF_t,$ where \(\theta_t^p\) is the quantity of goods that the producer buys (if positive) or sells (if negative) forward. Similarly, the market maker's financial revenues are $H_T^m:=\int_0^T \theta_t^m\, dF_t.$

We describe below the firm's problems in the in-house and delegation cases and introduce the linear-equilibrium concept used throughout this section.

\subsection{Linear market equilibrium}

As in the static hedging analysis, we study separately the in-house and delegated cases. The In-house producer's problem is defined by:
\begin{align}\label{DynIntegratedValue}
V^{\rm i}(0,s,q) & := \sup_{\alpha, \theta^{\rm i}} \E\Big[ U_P\Big( (S_T-\eps Q_T) Q_T -  \int_0^T c(\alpha_t) dt+ \int_0^T \theta^{\rm i}_t dF_t\Big)\Big], 
\end{align}
with controlled production and demand shift process dynamics:
\begin{align*}
dQ_t  &= \alpha_t dt + \sigma \big(\rho dW_t^\perp + \sqrt{1-\rho^2} dW_t\big), \quad dS_t =  \nu dW_t^\perp,\nonumber
\end{align*}
Given the producer's optimal response $\alpha^\ast(F)$, the market maker faces the optimization problem\footnote{ Recall that we use the notation $\P^{F}$ for $\P^{\alpha^\ast(F)}$}:
\begin{align}\label{DynMMValue}
V^{\rm m}=\sup_{\theta^{\rm m}}\E^F\left[U_m\Big(  \int_0^T \theta^{\rm m}_t dF_t\Big) \right].
\end{align}
The linear equilibrium condition for the futures market is formally defined as follows.

\begin{Definition}[In-House production futures market linear equilibrium]
   A quadruplet $(\alpha^\star,\theta^{\rm i}_\star,\theta^{\rm m}_\star,F_t^\star)$ is a linear equilibrium if
   \begin{itemize}
       \item Given the dynamics of futures prices $F^\star_t$, the pair $(\alpha^\star,\theta^{\rm i}_\star)$ is optimal for the producer,
       \item Given the dynamics of futures prices $F^\star_t$ and the pair $(\alpha^\star,\theta^{\rm i}_\star)$, $\theta^{\rm m}_\star$ is optimal for the market maker,
       \item The market clears at each date $t \le T$,i.e. $\theta^{\rm i}_\star+\theta^{\rm m}_\star=0$
       \item The dynamics of futures prices is linear, i.e. $F^\star_t=F_0(t)+F_1(t)\cdot X_t$ where $F_0,F_1$ are deterministic functions and $X_t=(S_t \, , \, Q_t)^\perp$.
   \end{itemize}
\end{Definition}

\hs

Under delegation, the Principal's problem of anticipating the price dynamics in the futures market $F_t$ transforms into:
\begin{align*}
 & V^{\rm P}:= \sup_{Z,\theta} J ^{\rm P}(Z,\theta), \quad \text{where} \quad J^{\rm P}(Z,\theta) := \E\Big[ U_{\rm P}\Big( g(X_T)  - Y_T + \int_0^T \theta_t dF_t \Big)\Big],  \quad g(x):= x \mcdot M x,\\
& dY_t = \frac12 Z_t \mcdot C_a  Z_tdt + Z_t \mcdot \Sigma d\bar W_t^\perp, \quad
dX_t = \gamma E_{22} Z_t dt + \Sigma d\bar W_t^\perp,
\end{align*}
where we recall that $C_i := \gamma E_{22}+\eta_i A, \quad  i\in \{ a,p\},$ $\bar{C} := \gamma E_{22} + (\eta_a+\eta_p) A$.
The incentive mechanism, $Y_T$, retains its functional form despite the dynamic hedging opportunity, as the latter remains solely under the Principal's control. Consequently, the Agent's best-response function remains unaffected at $\hat \alpha(z) = \gamma z_2$.
The market maker ensures the provision of hedging by solving the portfolio optimization problem specified in Equation \eqref{DynMMValue}. Accordingly, we seek an equilibrium in the futures market defined by the following definition.

\begin{Definition}[Second-best delegation futures market linear equilibrium]
A quadruplet $(Z^\star,\theta^{\rm d}_\star,\theta^{\rm m}_\star,F_t^\star)$ is a linear equilibrium if
   \begin{enumerate}[\upshape (i)]
       \item Given the dynamics of futures prices $F^\star_t$, the pair $(Z^\star,\theta^{\rm d}_\star)$ is optimal for the Principal,
       \item Given the dynamics of futures prices $F^\star_t$ and the pair $(Z^\star,\theta^{\rm d}_\star)$, $\theta^{\rm m}_\star$ is optimal for the market maker,
       \item The market clears at each date $t \le T$,i.e. $\theta^{\rm d}_\star+\theta^{\rm m}_\star=0$
       \item The dynamics of futures prices is linear, i.e. $F^\star_t=F_0(t)+F_1(t)\cdot X_t$ where $F_0,F_1$ are deterministic functions and $X_t=(S_t \, , \, Q_t)^\perp$.
   \end{enumerate}
\end{Definition}

\hs

\hs

The derivation of the system of ODEs that defines the linear market equilibrium is provided in the Online Appendix. The resolution strategy, which mirrors the methodology of the static hedging framework, is outlined below. We observe that we need to impose the two terminal conditions $F_0(T)=0$ and $F_1(T)=( 1, -\eps)^\tp$ to ensure the convergence of the forward price, $P_T$, to the spot price, $S_T - \eps Q_T$, at maturity $T$ across both scenarios: in-house and delegated production.

\begin{Proposition}\label{prop:Dyn-Eq}
Assume that the futures price process is of the affine form $F_t = F_0(t)+F_1(t)\cdot X_t,$ and $X_t=(S_t,Q_t)^\top,$
with terminal conditions $F_0(T)=0,$ $F_1(T)=(1,-\eps)^\tp.$ Then, in both the in-house and delegated cases:
\begin{enumerate}[\upshape (i)]
    \item the principal's certainty equivalent is quadratic in the state variable \(X\);
    \item the producer's optimal trading rate and the market maker's optimal trading rate are affine functions of \(X\);
    \item the linear market equilibrium condition reduces to a coupled system of ordinary differential equations for the coefficient functions \(F_0\) and \(F_1\).
\end{enumerate}
The explicit form of this ODE system, together with the derivation of the optimal coefficients, is provided in the Online Appendix.
\end{Proposition}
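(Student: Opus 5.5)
Under the affine ansatz $F_t=F_0(t)+F_1(t)\cdot X_t$, I would reduce every layer of the problem to a linear–quadratic one and read off the three claims in turn.

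\emph{Futures dynamics.} Apply It\^o's formula to get
\[
dF_t=\big(\dot F_0(t)+\dot F_1(t)\cdot X_t+F_1(t)\cdot b_t\big)dt+F_1(t)\cdot\Sigma\,d\mathbb{W}_t,
\]
where $b_t$ is the drift of $X$: $(0,\alpha_t)^\top$ in-house, $\gamma E_{22}Z_t$ under delegation. The diffusion coefficient $\Sigma^\top F_1(t)$ is deterministic, and the drift is affine in $X_t$ and in the control. The terminal conditions give $F_T=P(S_T,Q_T)$.

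\emph{Principal's problem.} The principal controls $(\alpha,\theta)$, or $(Z,\theta)$ under delegation. For delegation, I would first use the Girsanov reduction of Section~\ref{sec:no-hedge-2}: it rewrites the utility under $\mathbb{Q}$ with drift $C_pZ$, running cost $\tfrac12 Z\cdot\bar CZ$, and now also the gain $\int\theta\,dF$. Then I add the wealth variable and write the HJB in certainty-equivalent form $V=U_P(w+v(t,x))$. CARA utility removes the wealth dependence. The Hamiltonian is jointly quadratic in the control $(a,\theta)$ or $(z,\theta)$, with cross terms coming from the covariance between $\theta F_1\cdot\Sigma\,dW$ and $Dv\cdot\Sigma\,dW$. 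Under a concavity condition in $\theta$, namely $\eta_p|\Sigma^\top F_1|^2>0$ together with the positive-definiteness used earlier, the maximiser is an affine function of $Dv$ and $x$. Plugging it back in gives a PDE
\[
\partial_t v+\tfrac12\mathrm{Tr}(AD^2v)+Q_t(Dv,x)+\ell_t(x)=0,\qquad v(T,x)=x\cdot Mx,
\]
where $Q_t$ is quadratic with deterministic coefficients. Guessing $v=K_0+K_1\cdot x+x\cdot K_2x$, as in Appendix~\ref{proofoptiprodnohedge}, yields a matrix Riccati equation for $K_2$, a linear ODE for $K_1$, which no longer vanishes because of $F_0$, and a quadrature for $K_0$. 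This gives (i), provided the Riccati equation has a solution on $[0,T]$. I would argue that by comparison with the static case, or simply assume it, as the paper does with $\Delta_{\rm i},\Delta_{\rm d}>0$. The optimal $\theta$ is then affine in $X$, giving the producer half of (ii).

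\emph{Market maker.} Given the producer's affine feedback, $X$ is a Gaussian linear SDE under $\P^F$, and $dF_t=\mu_F(t,X_t)dt+F_1\cdot\Sigma\,d\mathbb{W}_t$ has affine drift. The CARA market maker has no state-dependent payoff, so the standard Merton argument applies: the value is $U_m(w+\phi(t,x))$, and the myopic-plus-hedging demand is
\[
\theta^{\rm m}_t=\frac{\mu_F(t,X_t)}{\eta_m|\Sigma^\top F_1(t)|^2}-\frac{F_1\cdot A D\phi}{|\Sigma^\top F_1|^2}.
\]
Since the drift is affine, $\phi$ is quadratic with Riccati coefficients, so this demand is affine as well. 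That completes (ii).

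\emph{Market clearing.} Imposing $\theta^{\rm p}_t+\theta^{\rm m}_t=0$ for all $X_t$ and identifying the constant and linear coefficients yields equations involving $\dot F_0$, $\dot F_1$, $K_1$, $K_2$ and the coefficients of $\phi$. Together with the Riccati ODEs, these form a coupled ODE system with terminal data $F_0(T)=0$, $F_1(T)=(1,-\eps)^\top$, $K_2(T)=M$, $K_1(T)=0$, and the market maker's terminal data. This is (iii).

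\emph{Main obstacle.} The hard part is the coupling. $\dot F_1$ appears inside the drift that enters both demands, and $F_1$ enters nonlinearly through $|\Sigma^\top F_1|^2$ in the denominators. So the clearing identity must be solved for $(\dot F_0,\dot F_1)$ before one has an explicit ODE system, which requires a non-degeneracy condition analogous to the one in Proposition~\ref{prop:in-house-eq-averse}. I would impose it as a hypothesis. Since the proposition only asserts reduction to an ODE system, I would not attempt global existence, which the paper leaves open.
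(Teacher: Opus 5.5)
Your route is the paper's: the affine ansatz for $F$, a certainty-equivalent HJB in which CARA removes the wealth variable, and maximisation of a quadratic Hamiltonian. The quadratic ansatz then gives a Riccati equation for $K_2$, a linear ODE for $K_1$ and a quadrature for $K_0$; the market maker is treated the same way, and the clearing identity is matched coefficient by coefficient.

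The only structural difference is under delegation. You first pass to $\mathbb{Q}$, whereas the paper writes the HJB directly under $\mathbb{P}^{Z}$, with $Y$ and the forward gain folded into the wealth process $L$. The two Hamiltonians coincide, provided you account for two things. Under $\mathbb{Q}$ the forward gain picks up the extra drift $\eta_p\theta\,F_1\cdot AZ$. The market maker, by contrast, must use the drift of $F$ under the physical measure.

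\textbf{The concavity condition you state is not enough.} You attribute the cross terms to the covariance of $\theta F_1\cdot\Sigma\,dW$ with $Dv\cdot\Sigma\,dW$. That term, $-\eta_p\theta\,AF_1\cdot Dv$, is linear in $\theta$ and plays no role in concavity. What matters is the cross term between the two controls, which arises because effort moves the drift of $F$. In-house the Hamiltonian is
\[
a\,e_2\cdot Dv-\frac{a^2}{2\gamma}+\theta\big(\bar\beta+a\,F_1\cdot e_2-\eta_p AF_1\cdot Dv\big)-\tfrac12\eta_p(F_1\cdot AF_1)\theta^2 .
\]
Joint concavity in $(a,\theta)$ therefore requires $\eta_p\,F_1\cdot AF_1-\gamma(F_1\cdot e_2)^2>0$. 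This does not follow from $F_1\cdot AF_1>0$ and $\gamma>0$.

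Under delegation the cross term is $\theta\,C_pF_1\cdot z$: the manipulation channel $\gamma E_{22}F_1\cdot z$ plus the covariance of the forward gain with the agent's pay. Here you need the Schur complement
\[
\Gamma=\bar C-\frac{1}{\eta_p F_1\cdot AF_1}\,C_pF_1F_1^\top C_p
\]
to be positive definite; $\bar C>0$ is not sufficient. When these conditions fail, the Hamiltonian is generically unbounded above: a large forward position combined with output manipulation pays off without limit, and claim (i) breaks down.

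At $t=T$, with $F_1=(1,-\eps)^\top$, the in-house condition reads $\eta_p\sigma_p^2>\eps^2\gamma$. This is exactly the static requirement $2\bar\eps\cdot K^{\rm i}_2(0)\bar\eps<0$, so it is the natural hypothesis to impose along $F_1(t)$; the paper imposes it as its $\gamma_{\rm i}>0$ and $\Gamma$ positive definite.

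On your non-degeneracy hypothesis: once the correct in-house condition holds, the coefficient of $(\dot F_0,\dot F_1)$ in the in-house clearing identity is
\[
1+\frac{\gamma(F_1\cdot e_2)^2}{\eta_pF_1\cdot AF_1-\gamma(F_1\cdot e_2)^2}\ \ge 1,
\]
so the extra hypothesis is automatic there. Under delegation the paper simply divides by the scalars $m_0$ and $m_1$ without comment, so keeping it as an explicit assumption in that case is a reasonable precaution.
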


\hs

\begin{enumerate}

\item \textbf{Linear Price Assumption.} We postulate a linear structure for the futures price process, $F_t$, dependent on the state variable $X_t$:
$$F_t = F_0(t) + F_1(t) \cdot X_t$$

\item \textbf{Solving Agent and Market Maker Problems.}
The problems of the Principal and the Market Maker are solved via stochastic control. Both agents seek to maximize their respective certainty equivalents, $v$ and $v^m$.
\begin{itemize}
    \item HJB Equation and Quadratic Solution. The Principal's certainty equivalent, $v(t,x)$, is shown to satisfy a Hamilton-Jacobi-Bellman (HJB) equation of the generic quadratic form
    \begin{align}\label{gen-pde}
&0=\partial_t v +  \big( B_0 + B_1 x\big) \mcdot Dv + \frac12 A\mddot D^2v  + \frac12  Dv \mcdot \Lambda Dv  + H_0 + H_1 \mcdot x +  x \mcdot H_2 x, \\
& v(T,x) = g_0 + g_1 \mcdot x + x \mcdot g_2 x. 
\end{align}
    Due to this structure, the solution is quadratic, $v(t,x) = K_0 + K_1 \cdot x + x \cdot K_2 x$. The matrix coefficient $K_2$ satisfies a Riccati equation, while $K_1$ and $K_0$ are governed by linear ODEs.
    \item Linear Best-Response. Crucially, the optimal trading rates for both the Producer, $\theta^{\rm i}$, and the Market Maker, $\theta^{\rm m}$, are found to be linear functions of the state variable $X$, with their coefficients depending explicitly on the assumed price functions $(F_0, F_1)$.
\end{itemize}

\item \textbf{Equilibrium Closure.}
The equilibrium is defined by the market clearing condition, which mandates that the aggregate optimal trading rate is zero at all times $t$ and states $x$:
$$\theta^{\rm i}(t,x) + \theta^{\rm m}(t,x) = 0$$
Since the optimal trading rates are linear in $X$, this condition provides a system of ODEs that uniquely characterizes the equilibrium futures price coefficients, the pair $(F_0, F_1)$.

\item \textbf{System Complexity and Existence.}
We emphasize that the final system of ODEs for the pair $(F_0, F_1)$ is non-trivial (non-linear), as its coefficients are intricately linked to the solutions of the Riccati systems, $K_j^{\rm i}$ and $K_j^{\rm m}$, which themselves depend on $(F_0, F_1)$. We do not formally address the delicate question of existence and uniqueness of this complex system. Instead, we provide numerical approximation results which suggest the system is well-posed, deferring a rigorous exploration of this technical issue to future research.

\end{enumerate}

\hs

\begin{figure}[!thb]
\begin{center}
\begin{tabular}{ccc}
(a) & (b) & (c) \\
\includegraphics[width=0.3\textwidth]{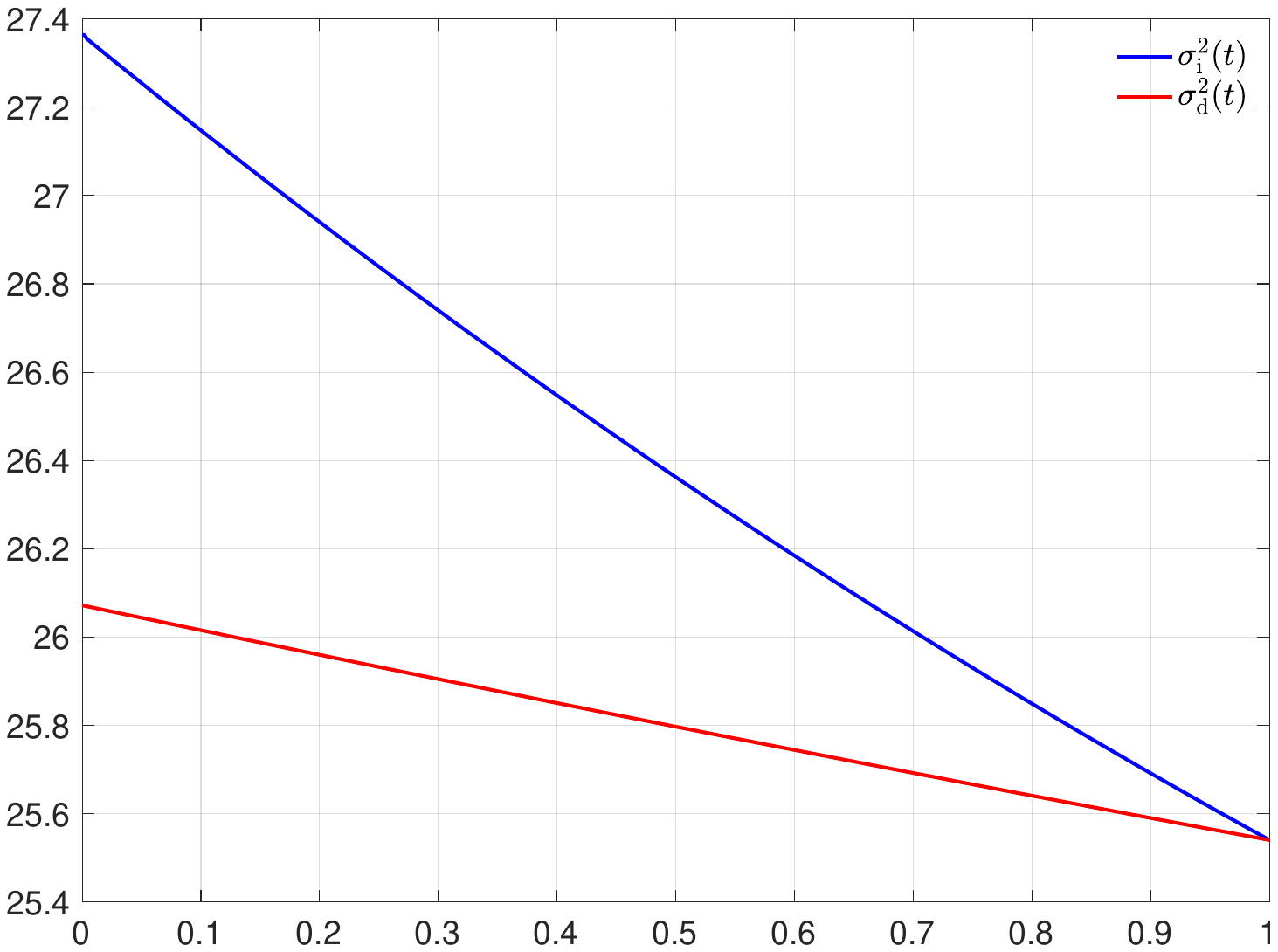} &
\includegraphics[width=0.3\textwidth]{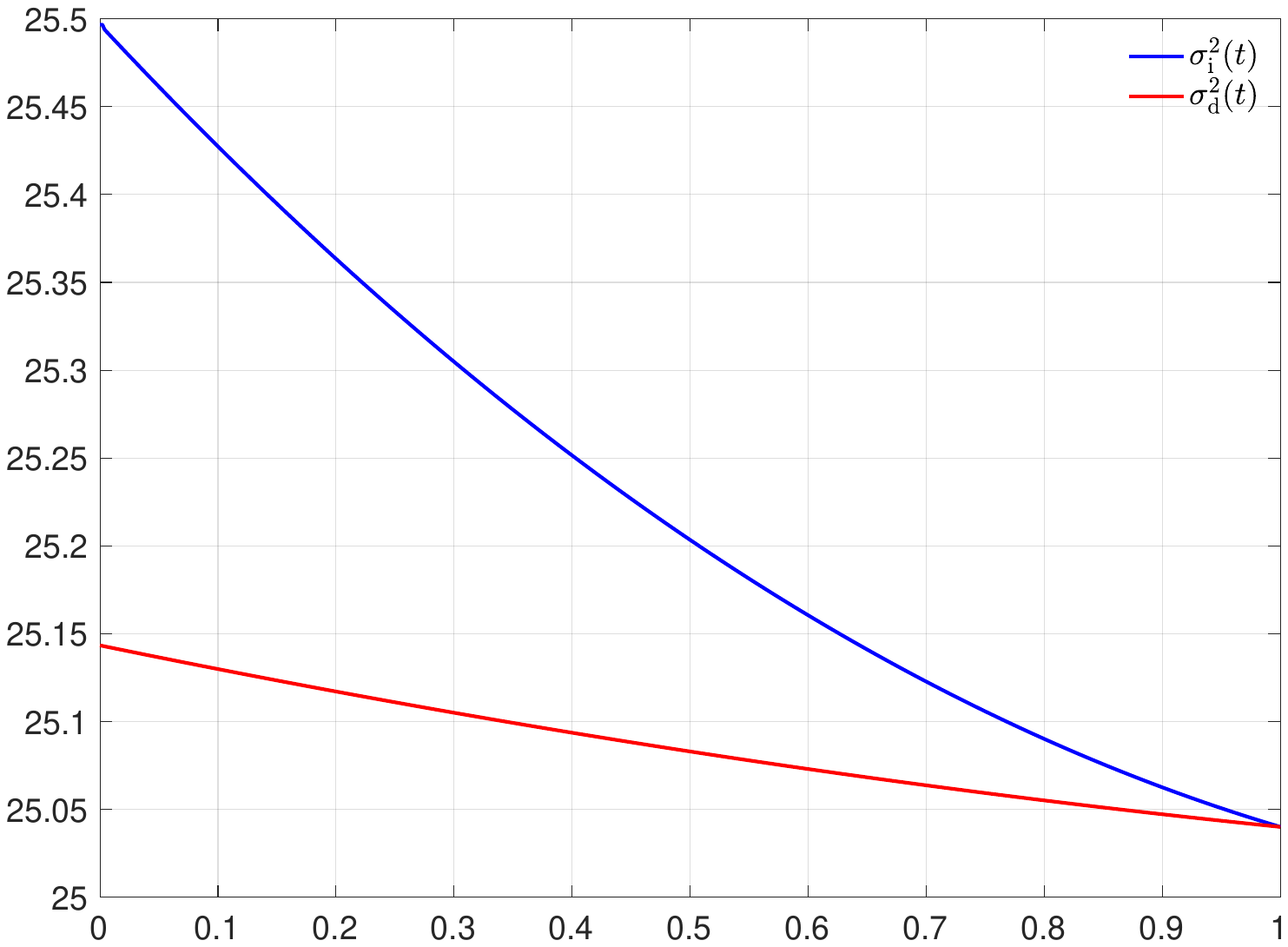} &
\includegraphics[width=0.3\textwidth]{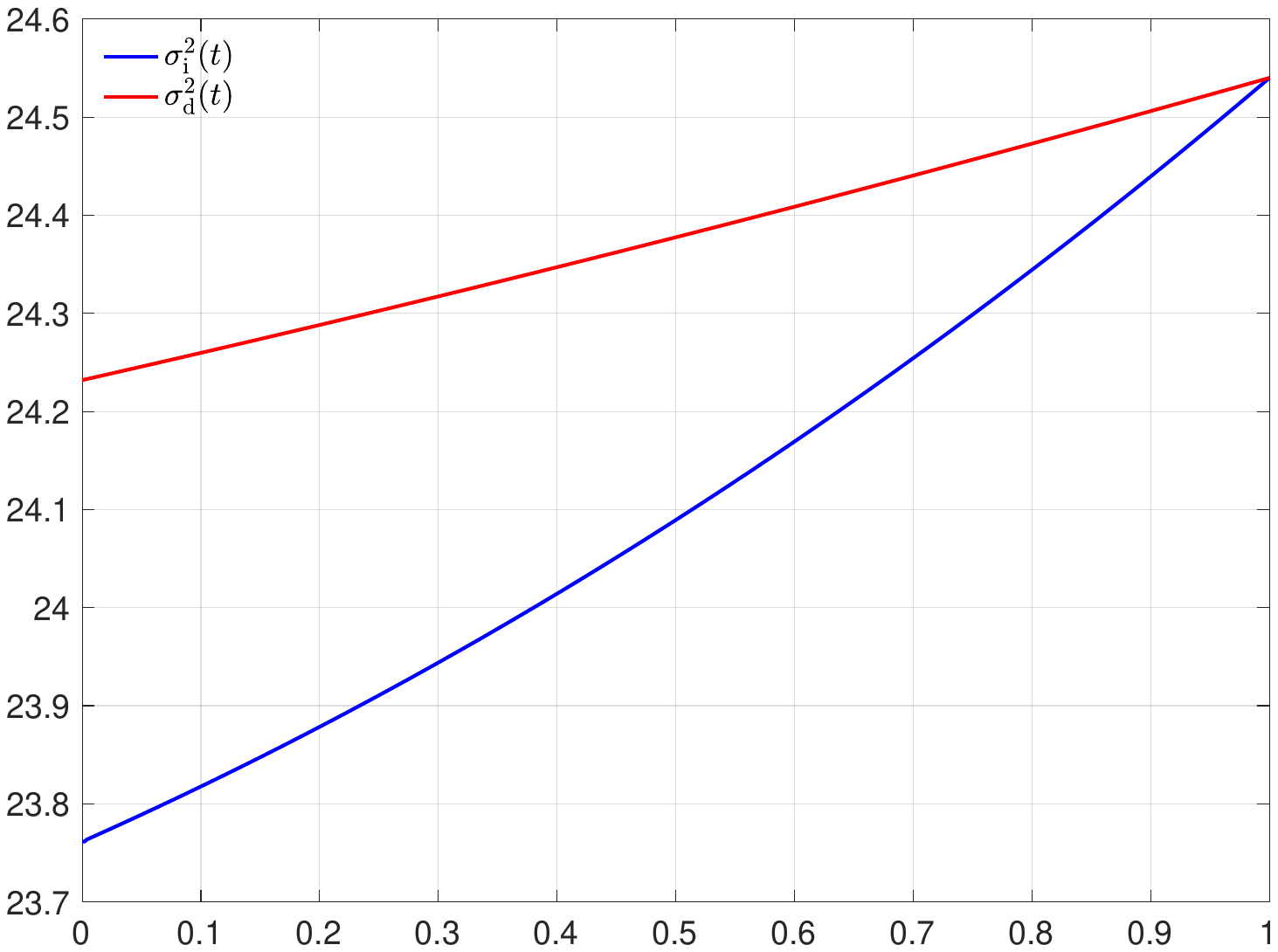} \\
\includegraphics[width=0.3\textwidth]{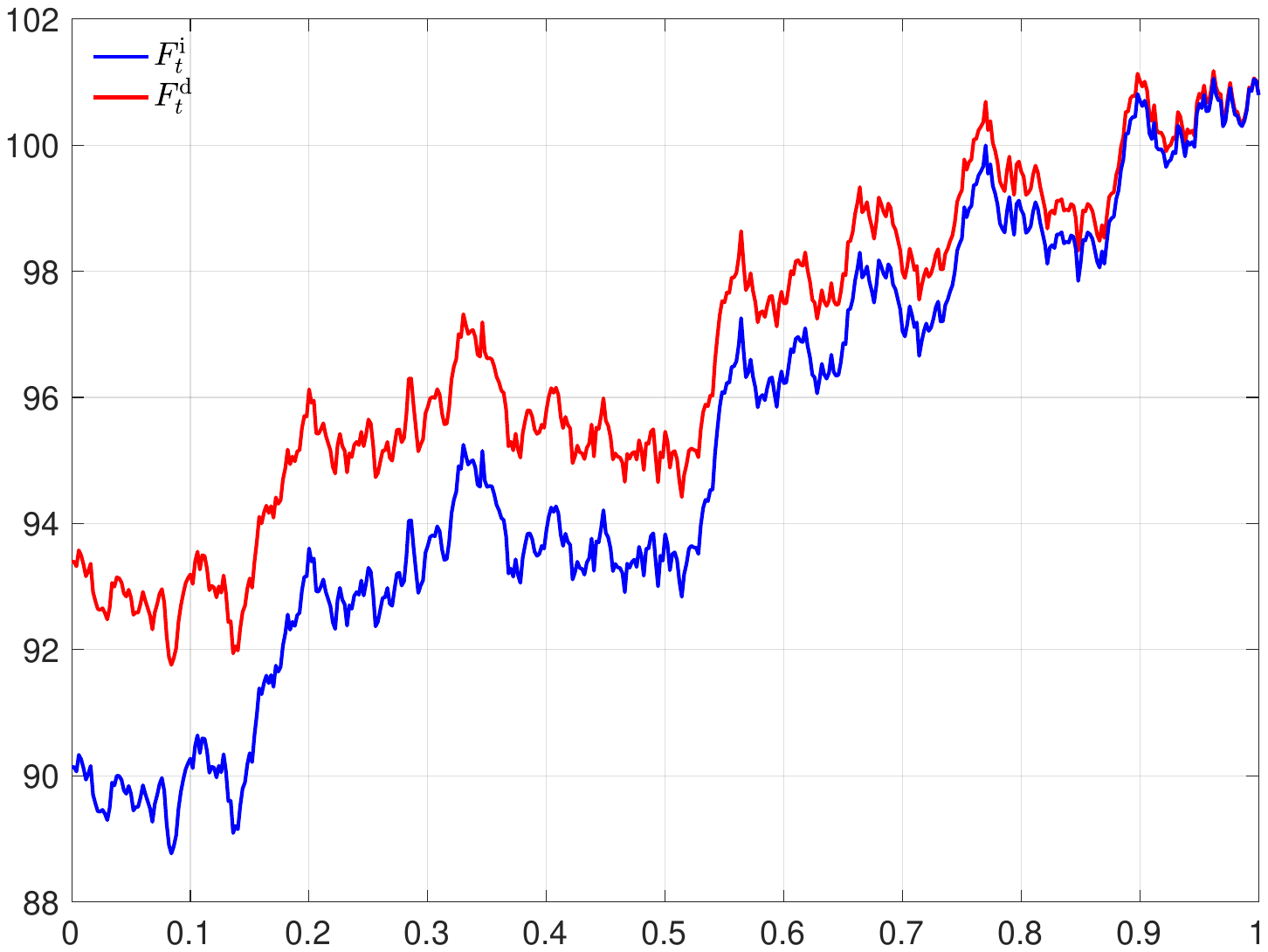} &
\includegraphics[width=0.3\textwidth]{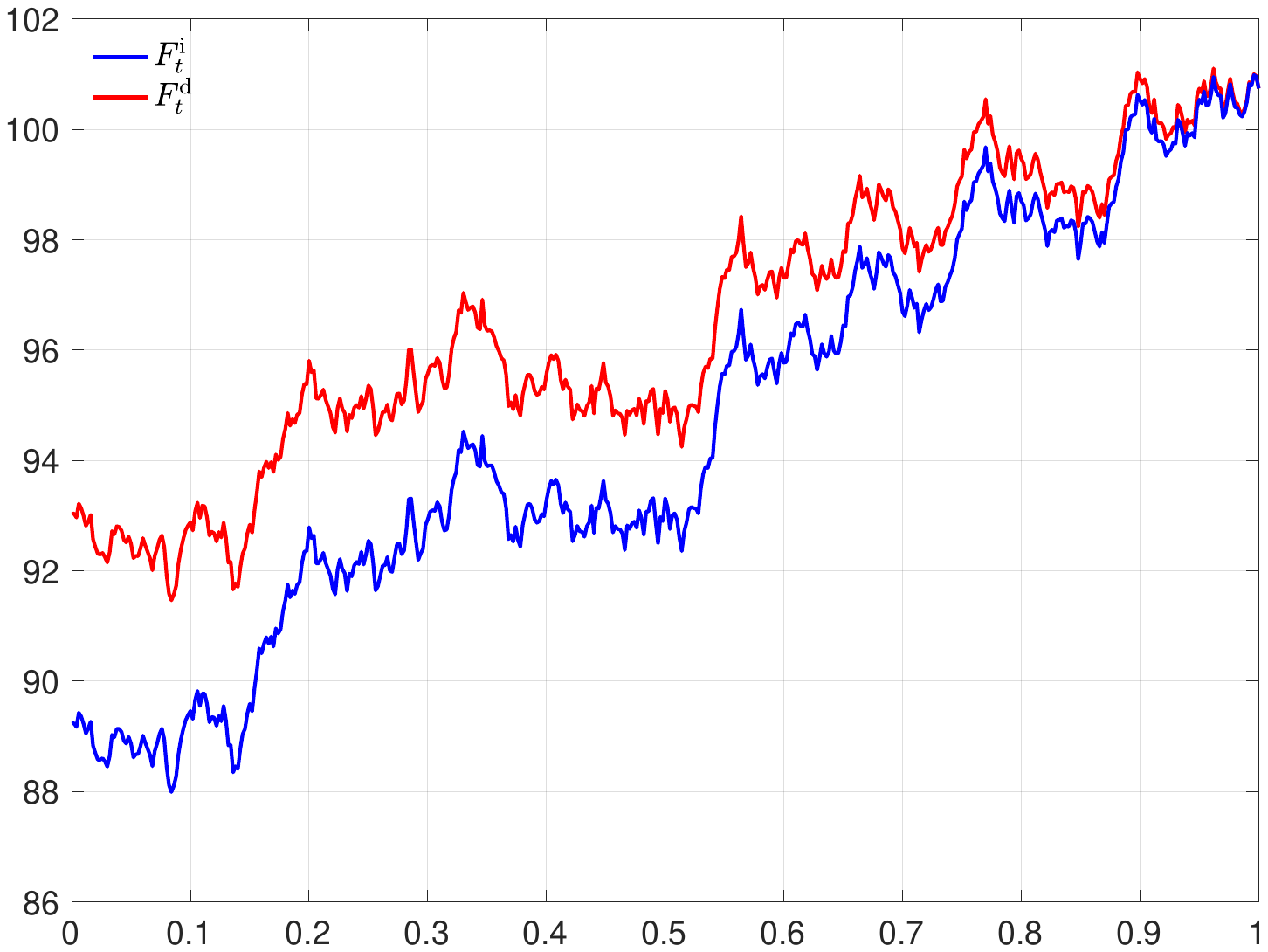} & 
\includegraphics[width=0.3\textwidth]{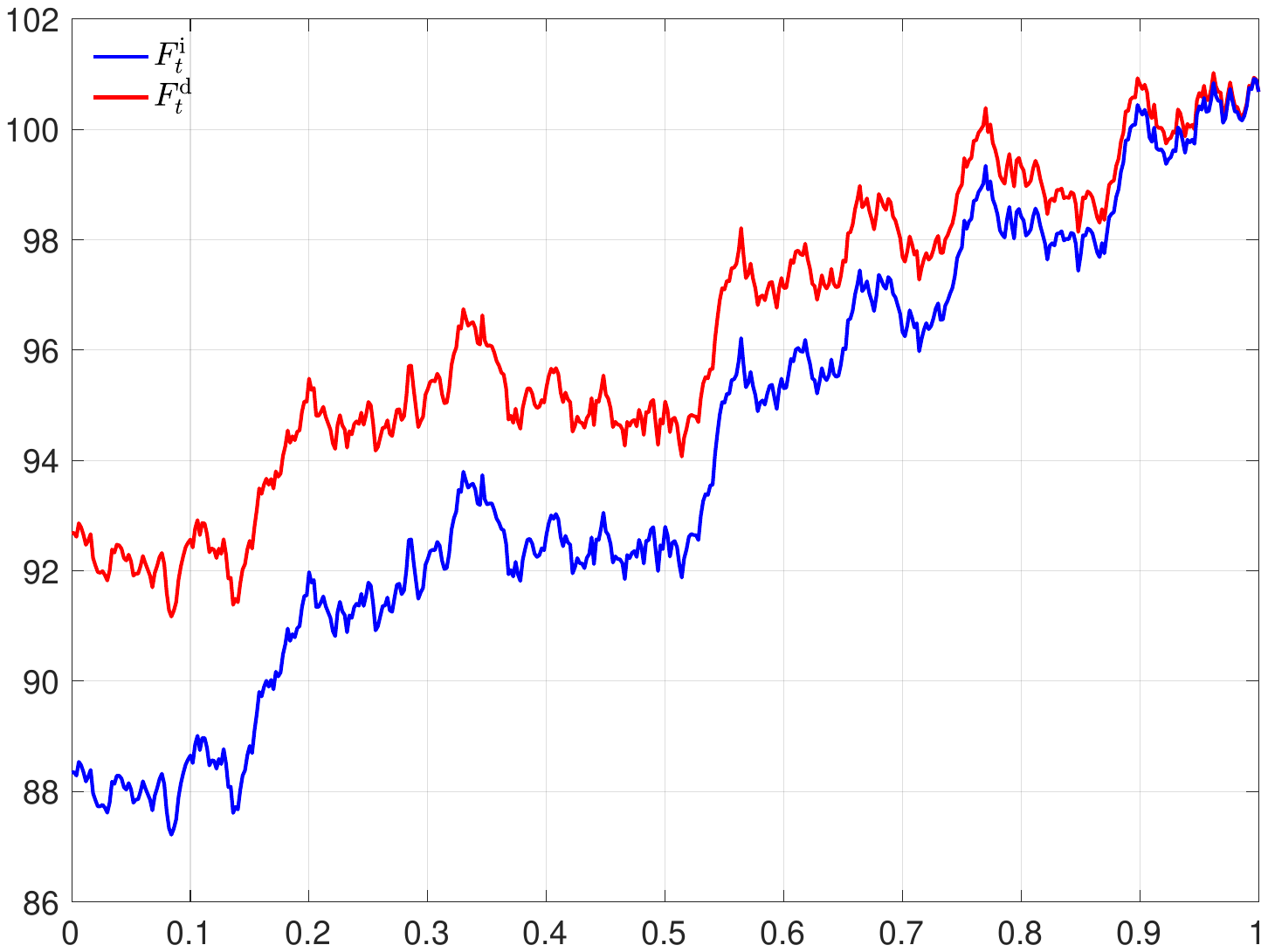} \\ 
\includegraphics[width=0.3\textwidth]{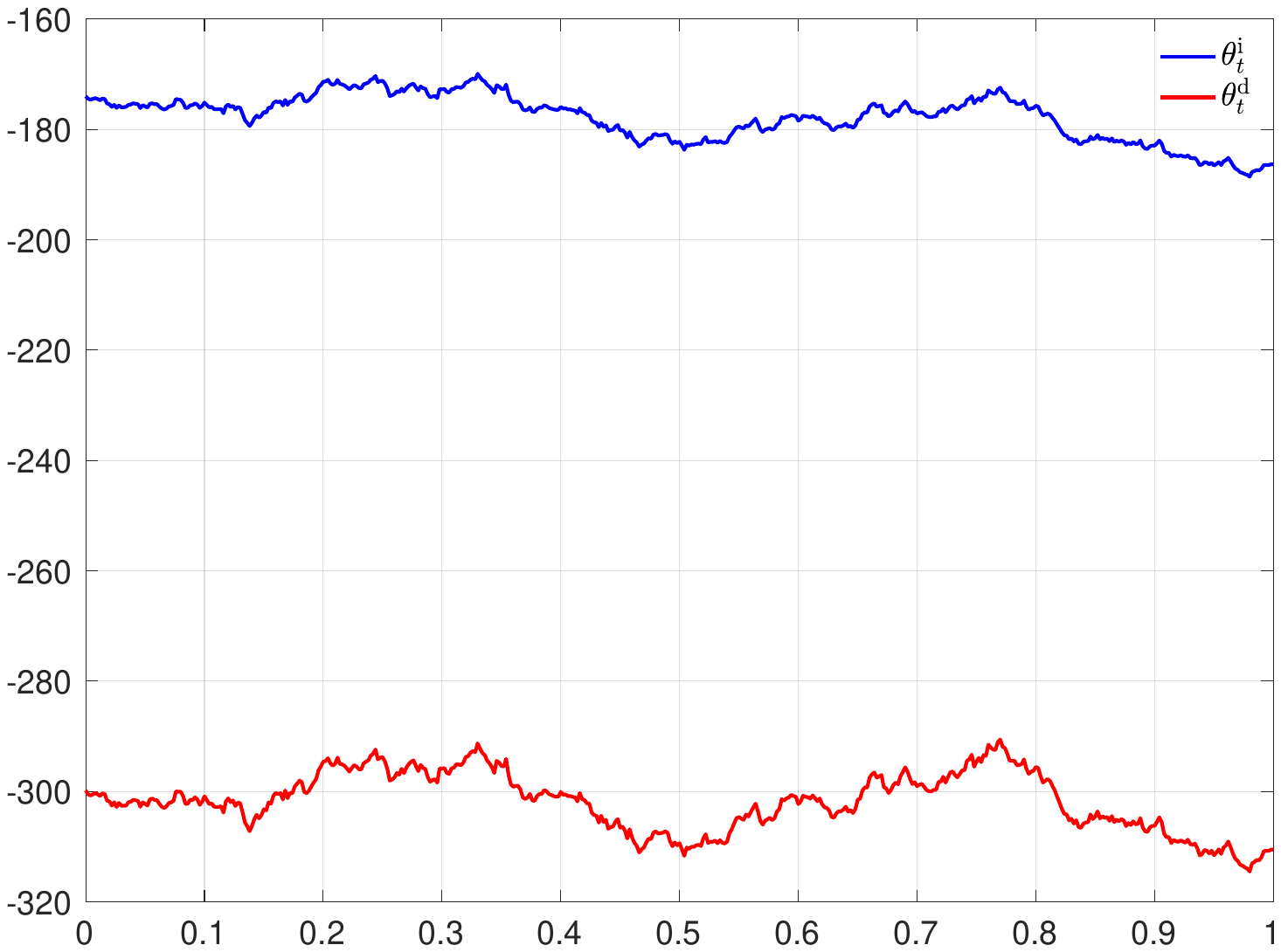} &
\includegraphics[width=0.3\textwidth]{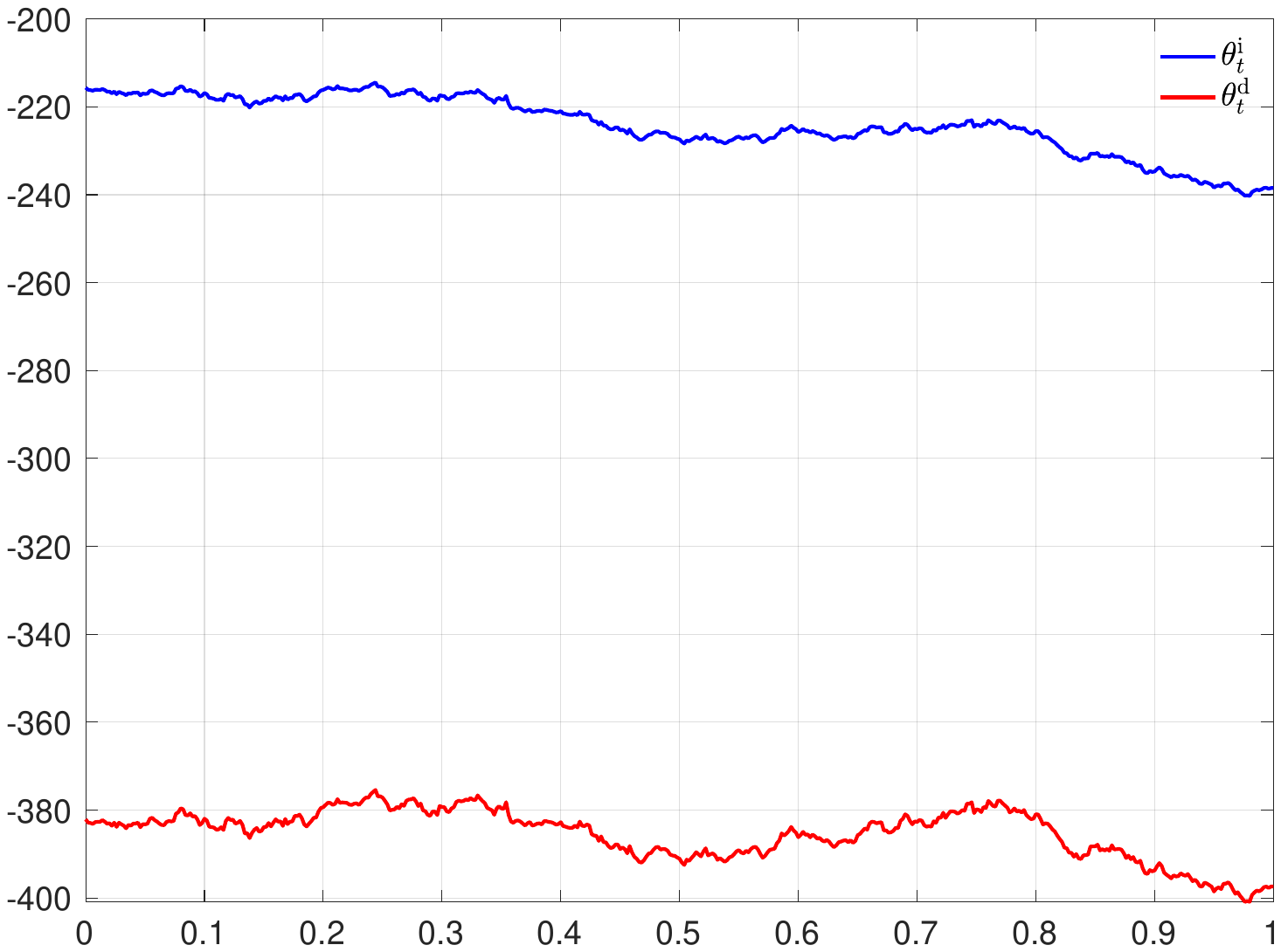} & 
\includegraphics[width=0.3\textwidth]{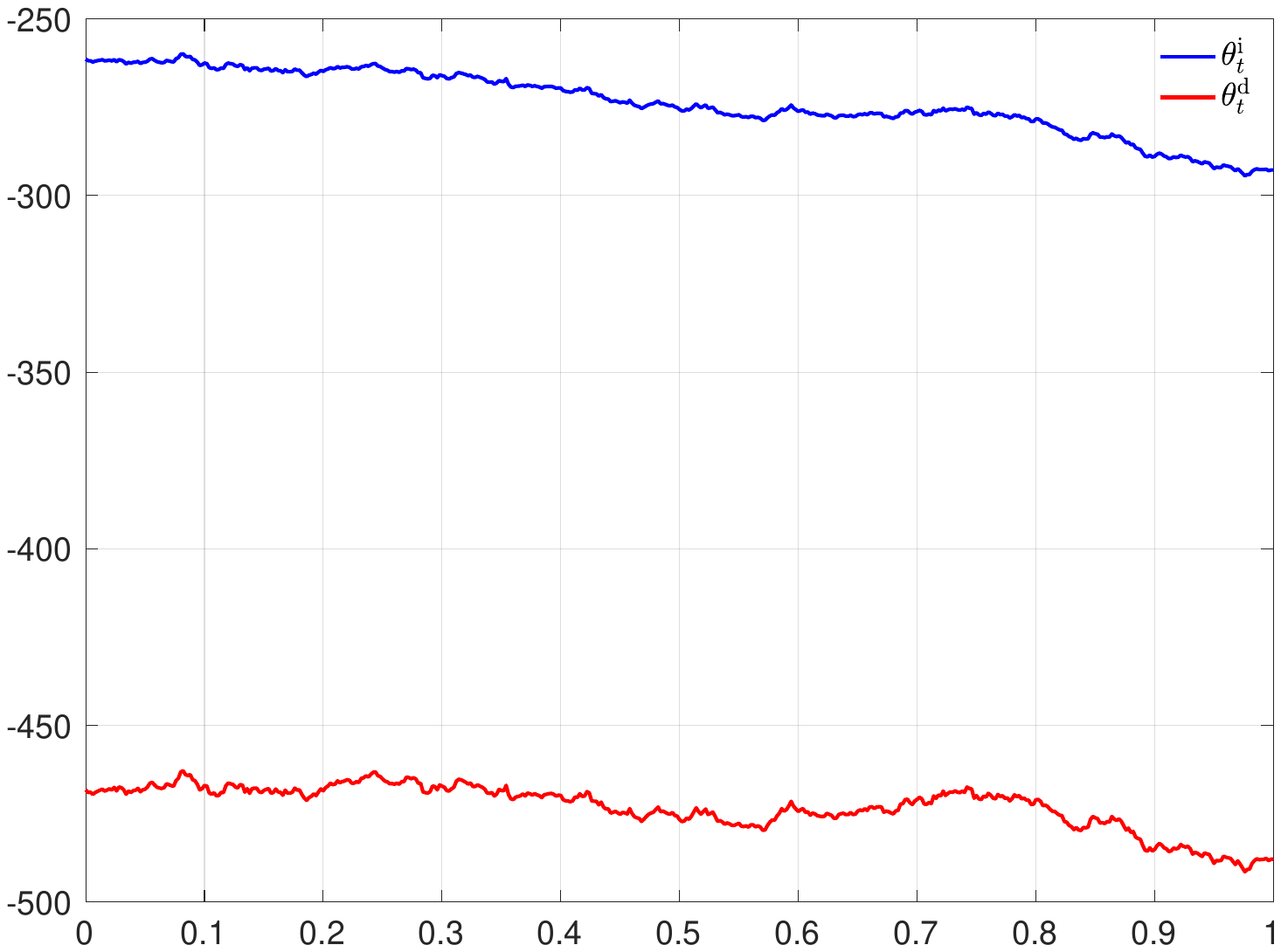} \\ 
 \end{tabular}
\end{center}
\caption{
{\small Volatilities $\sigma^2_{\rm i}(t)$  and $\sigma^2_{\rm d}(t)$ of the futures prices $F^{\rm i}_t$ and $F^{\rm d}_t$ and the respective hedging positions $\theta^{\rm i}_t$ and $\theta^{\rm d}_t$. Column (a) $\rho=-0.25$. Column (b) $\rho=0$. Column (c) $\rho=0.25$. Other parameters value: $s_0=100,\; \nu=5.0, \; \epsilon=1.00\,10^{-2},\; \sigma=20,\; \eta_p=1.00\,10^{-3},\; \eta_a=1.00\,10^{-3},\; \eta_m=1.00\,10^{-3},\; T=1,\; \gamma=4.00\,10^{-3}$. }
}
\label{fig:vol}
\end{figure}

\subsection{Futures price volatility}

Figure~\ref{fig:vol} depicts sample trajectories of the equilibrium futures prices, $F^{\rm i}_t$ and $F^{\rm d}_t$, alongside their corresponding squared volatilities, defined as $\sigma_{\rm i}^2(t) := F^{\rm i}_1(t) \cdot A F^{\rm i}_1(t)$ and $\sigma_{\rm d}^2(t) := F^{\rm d}_1(t) \cdot A F^{\rm d}_1(t)$. These dynamics, along with the associated hedging positions, are presented for three distinct values of the correlation coefficient $\rho$. \\

First, we observe that the futures price trajectories exhibit a positive drift in both the in-house and delegated scenarios. This behavior is consistent with the firm's net short position (selling forward). Furthermore, corroborating our findings from the static framework, the results indicate that delegation induces a structural increase in futures prices relative to the integrated case.\\

Second, the numerical analysis shows that the term structure of futures-price volatility is not uniform and varies with the correlation between demand and supply shocks. In our parameterization, more positive values of $\rho$ are associated with volatility profiles that are more consistent with the Samuelson effect, that is, with higher futures-price volatility as maturity approaches. Conversely, for lower values of $\rho$, the volatility term structure may flatten or even decrease with maturity.

These patterns are consistent with the state-variable view developed by \cite{Anderson83}   and \cite{Anderson85}, according to which the shape of the volatility curve depends on how uncertainty about the underlying supply-demand equilibrium is resolved over time. In this sense, our dynamic extension shows that heterogeneous volatility patterns can arise endogenously from the joint evolution of the state variables, even in the absence of storage.

Taken together, these numerical results indicate that the Samuelson effect should not be viewed as a universal prediction of the model, but rather as a correlation-dependent outcome within the linear-equilibrium framework studied in this section.

\subsection{Preliminary result on Riccati's system}

Consider the following PDE 
\begin{align*}  
&0=\partial_t v +  \big( B_0 + B_1 x\big) \mcdot Dv + \frac12 A\mddot D^2v  + \frac12  Dv \mcdot \Lambda Dv  + H_0 + H_1 \mcdot x +  x \mcdot H_2 x, \\
& v(T,x) = g_0 + g_1 \mcdot x + x \mcdot g_2 x. 
\end{align*}
with $B_0, B_1, H_0, H_1$ and $H_2$ functions of time and $A$ constant. We provide a solution of the PDE of the form 
$$v = K_0 + K_1 \mcdot x +  x \mcdot K_2 x$$
 where $K_2$ solves the matrix Riccati equation 
 \begin{align}\label{eq:K2}
0 = & \dot K_2 +  B_1^\tp K_2 +  K_2 B_1 +2 K_2 \Lambda K_2 + H_2, \quad K_2(T) = g_2, 
\end{align}
the function $K_1$ solves the linear ODE:
\begin{align*}  
0 = & \dot K_1 + 2 K_2  B_0 + B_1^\tp K_1 + 2 K_2 \Lambda K_1 + H_1, \quad K_1(T) = g_1, \\
\end{align*}
and the function $K_0$ is given by
\begin{align*}  
K_0(t) & = g_0 + \int_t^T  \big[B_0  \mcdot K_1 +  A\mddot K_2 + \frac12 K_1 \mcdot \Lambda K_1 + H_0 \big](s) ds.
\end{align*}

We first rewrite the Riccati equation in terms of  $\Gamma := K_2 - g_2$:
\begin{align*}  
0=& \dot \Gamma + 2 \Gamma \Lambda \Gamma +  (B_1^\tp + 2 g_2 \Lambda) \Gamma +  \Gamma( B_1 +2  \Lambda g_2) - L , \quad \Gamma(T) = 0, 
\quad L := - H_2 - B_1^\tp g_2 - g_2 B_1- 2 g_2 \Lambda g_2.
\end{align*}

Our existence results relies on the following Lemma. Its proof can be found in \cite{Farhat21} (section 3.7.1), which is a direct application of \cite{Reid72} (chap. 2). \\

\noindent{\bf Lemma}: Let $\Lambda$, $L$ and $A$ be continuous and bounded, $\Lambda, L \geq 0$ and $L_T>0$. Then the matrix Riccati ODE~\eqref{eq:K2}  has a unique solution on $(0,T)$.

\subsection{Integrated producer}
For a given forward price process $F_t$, the problem of the producer is
\begin{align*}
 & V^{\rm P} := \sup_{\alpha,\theta} J ^{\rm P}(\alpha,\theta), \quad \text{where} \quad J^{\rm P}(\alpha,\theta) := \E\Big[ U_{\rm P}\Big( g(X_T)  - \int_0^T c(\alpha_t) dt + \int_0^T \theta_t dF_t \Big)\Big],  \\
& dX_t = \big( b + \alpha_t e_{2}\big) dt + \Sigma d\bar W_t^\perp, \quad X_t := ( S_t \,\, Q_t)^\tp, \quad g(x) := x\mcdot M x, \\
& \Sigma := \begin{pmatrix} \nu\rho & \nu\sqrt{1-\rho^2} \\ \sigma & 0 \end{pmatrix} \quad 
M :=  \begin{pmatrix} 0 &\frac12 \\ \frac12 & -\eps \end{pmatrix}, \quad
b := ( \mu \,,\, 0)^\tp, \quad e_2 := ( 0 \,,\, 1)^\tp.
\end{align*}

The optimisation problem of the market maker who ensures the supply of hedging is now
\begin{align*}
 & V^{\rm M}:= \sup_{\theta} J^{\rm M}(\theta), \quad \text{where} \quad
 J^{\rm M}(\theta):= \E\Big[ U_{\rm M}\Big( \int_0^T \theta_t dF_t \Big)\Big].
 \end{align*}
The main objective of this section is to find an equilibrium in the following sense.

\begin{Definition}
A linear market equilibrium is defined as a quadruplet $(\alpha^P,\theta^P,\theta^M, F^\ast)$ such that
\b*
&V^{\rm P} = J^{\rm P}(\alpha^P,\theta^P), \quad V^{\rm M} = J^{\rm M}(\theta^M), \quad \theta^P+\theta^M=0, 
&
\\
&
F^\ast_t = F_0(t) + F_1(t)\mcdot X_t, \quad t\in[0,T]
&
\e*
for some deterministic functions $F_0,F_1$ satisfying $F_0(T) = 0$ and $F_1(T) = (1, \, -\epsilon)^\tp$.
\end{Definition}

The final condition on the functions $F_0$ and $F_1$ guarantees that the forward price at maturity $T$ coincides with the spot $P_T := S_T - \eps Q_T$. We also report the dynamics of $F_t$:
\begin{align*}
dF_t =\big(\bar \beta_t + \alpha_t F_1 \mcdot e_{2}\big) dt + F_1 \mcdot \Sigma d\bar W_t^\perp,
\quad
\text{where} \quad \bar \beta = \dot F_0 + F_1\mcdot b + \dot F_1 \mcdot X.
\end{align*}

\hs

\noindent{\bf Producer}: For a given forward price process $F_t := F_0(t) + F_1(t) \mcdot X_t$, the problem of the producer is 
\begin{align*}
 & V(0,x,\ell) := \sup_{\alpha,\theta} J(\alpha,\theta), \quad \text{where} \quad J(\alpha,\theta) := \E\Big[ U_{\rm P}\Big(g(X_T) - L_T\Big)\Big], \\
&dL_t := c(\alpha_t) dt - \theta_t dF_t, \quad 
dF_t = \big(\bar\beta_t + \alpha_t F_1(t)\mcdot e_2\big) dt + F_1(t) \mcdot \Sigma d\bar W_t^\perp, \\
&dL_t = \Big( c(\alpha_t) - \theta_t \big[\bar\beta_t + \alpha_t F_1(t) \mcdot e_2\big]\Big)dt -  \theta_t F_1(t) \mcdot \Sigma d\bar W_t^\perp, \quad
dX_t = (b + \alpha_t e_2) dt + \Sigma d\bar W_t^\perp.
\end{align*} 

With $A = \Sigma\Sigma^\tp$, the PDE satisfied by $V$ is
\begin{align*}
& 0 =\partial_t V + \frac12 A\mddot D^2_x V + b\mcdot D_xV \\
 & +  \sup_{a,\theta}\Big\{ 
a e_2 \mcdot D_xV + \big(c(a) - \theta(\bar\beta + a F_1\mcdot e_2) \big) V_\ell
+\frac12 \theta^2 F_1\mcdot A F_1 V_{\ell\ell}
- \theta AF_1 \mcdot D_xV_\ell
 \Big\}\\
&V(T,x,\ell) = U_{\rm P}(g(x)-\ell).
\end{align*}

We set $V(t,x,\ell) =:U_{\rm P}(v(t,x)-\ell)$, so that
\begin{align*}
D_xV  = -\eta_p V Dv, \quad D_x^2V = -\eta_pV(D^2v - \eta_p DvDv^\tp), \quad V_\ell = \eta_p V, \quad
D_xV_\ell = -\eta_p^2 V Dv, \quad V_{\ell\ell} = \eta_p^2 V, 
\end{align*}
and get the PDE satisfied by $v$
\begin{align*}
&0=\partial_t v + \frac12 A\mddot (D^2 v -\eta_p DvDv^\tp) + b\mcdot Dv 
 + \sup_{a,\theta} L(a,\theta), \quad
v(T,x) =  g(x).
\end{align*}
with 
\begin{align*}
L(a,\theta) := 
a e_2 \mcdot Dv - \big(c(a) - \theta(\bar\beta + a F_1\mcdot e_2)\big)
-\frac12 \eta_p \theta^2 F_1\mcdot A F_1
 - \theta \eta_p AF_1 \mcdot Dv.
\end{align*}
It comes that
\begin{align*}
L(a,\theta) := 
a e_2 \mcdot Dv -  c(a) + \theta\big[ \bar\beta + a F_1\mcdot e_2 - \eta_p AF_1\mcdot Dv\big]
-\frac12 \eta_p \theta^2 F_1\mcdot A F_1
\end{align*}

If $\phi_1 := F_1\mcdot A F_1>0$, $L$ admits a maximiser in $\theta^{\rm i}$ 
\begin{align*}
L(a,\theta^{\rm i}) &= 
a e_2 \mcdot Dv -  c(a) 
+ \frac12 \eta_p  \phi_1(\theta^{\rm i})^2, \quad
\theta^{\rm i} := \frac{1}{\eta_p\phi_1} \big(\bar\beta  + a  F_1 \mcdot e_2 -   \eta_p AF_1\mcdot Dv\big),
\end{align*}
from which we get
\begin{align*}
L(a,\theta^{\rm i}) &= -\frac{1}{2\gamma\eta_p \phi_1}\Big( \eta_p \phi_1 -  \gamma (F_1\mcdot e_2)^2\Big) a^2
+  \frac1{\eta_p\phi_1} \Big(  (F_1\mcdot e_2)  \bar\beta   + \eta_p [ \phi_1 e_2 - (e_2 \mcdot F_1) AF_1] \mcdot Dv  \Big) a  
+  \frac{1}{2\eta_p\phi_1}\Big(\bar\beta - \eta_p AF_1\mcdot Dv \Big)^2
\end{align*}

If $\gamma_{\rm i} := \eta_p\phi_1 -  \gamma (F_1\mcdot e_2)^2>0 $, there is a maximiser $\hat a$ given by
\begin{align*}
\hat a := \frac{\gamma}{\gamma_{\rm i}} \big(  (F_1\mcdot e_2)  \bar\beta   + \eta_p u_2 \mcdot Dv  \big), \quad
u_2 :=  \phi_1 e_2 - (e_2 \mcdot F_1) AF_1
\end{align*}
so that
\begin{align*}
L(\hat a,\theta^{\rm i}) &= 
 \frac{1}{2\eta_p \phi_1} \Big[ \frac{\gamma}{\gamma_{\rm i}} \big[(F_1\mcdot e_2)  \bar\beta   + \eta_p u_2 \mcdot Dv \big]^2
+ \big(\bar\beta - \eta_p AF_1\mcdot Dv \big)^2
\Big] \\
& =  \frac{1}{2\eta_p \phi_1}\Big[
\big(1+ \frac{\gamma}{\gamma_{\rm i}}(F_1\mcdot e_2)^2 \big)\bar\beta^2
 + 2\eta_p \bar\beta \big(  \frac{\gamma}{\gamma_{\rm i}}(F_1\mcdot e_2) u_2 -   AF_1 \big) \mcdot Dv 
 + Dv \mcdot \big(\eta_p^2 AF_1F_1^\tp A + \frac{\eta_p^2\gamma}{\gamma_{\rm i}} u_2u_2^\tp \big) Dv
\Big]\end{align*}

and the PDE reduces to
\begin{align*}
&0 = \partial_t v + \frac12 A\mddot (D^2 v -\eta_p DvDv^\tp) + b\mcdot Dv  + L(\hat a,\theta^{\rm i}), \quad
v(T,x) =  g(x).
\end{align*}

Recalling that $\bar \beta = \dot F_0 + F_1\mcdot b + \dot F_1 \mcdot X$, this PDE can be written
\begin{align*}
&0=\partial_t v +  \big( B_0^{\rm i} + B_1^{\rm i} x\big) \mcdot Dv 
+ \frac12 A\mddot D^2v  + \frac12  Dv \mcdot \Lambda_{\rm i} Dv  
+ H_0^{\rm i} + H_1^{\rm i} \mcdot x +  x \mcdot H_2^{\rm i} x, \\
& v(T,x) = g_0 + g_1 \mcdot x + x \mcdot g_2 x,
\end{align*}
with
\begin{align*}
&B_0^{\rm i} := b  + \frac{\dot F_0 + F_1\mcdot b}{ \phi_1}\big(\frac{\gamma}{\gamma_{\rm i}}(F_1\mcdot e_2) u_2 -  AF_1 \big), \quad
 B_1^{\rm i} :=  \frac{1}{\phi_1} \big(  \frac{\gamma}{\gamma_{\rm i}}(F_1\mcdot e_2) u_2 -  AF_1 \big)\dot F_1^\tp,   \\
&H_0^{\rm i} :=  \frac{1}{2\eta_p\phi_1}\big(1 + \frac{\gamma}{\gamma_{\rm i}}(F_1\mcdot e_2)^2)(\dot F_0 + F_1 \mcdot b)^2, \quad
H_1^{\rm i} := \frac{1}{\eta_p\phi_1}\big(1 + \frac{\gamma}{\gamma_{\rm i}}(F_1\mcdot e_2)^2)(\dot F_0 + F_1 \mcdot b)\dot F_1, \\ 
& H_2^{\rm i} := \frac{1}{2\eta_p\phi_1}\big(1 + \frac{\gamma}{\gamma_{\rm i}}(F_1\mcdot e_2)^2) \dot F_1 \dot F_1^\tp, \quad
\Lambda_{\rm i} := -\eta_p A  + \frac{\eta_p}{\phi_1}AF_1F_1^\tp A + \frac{\eta_p\gamma}{\gamma_{\rm i}\phi_1} u_2 u_2^\tp.
\end{align*}

\hs

\noindent{\bf Representative agent of the market}: The problem is
\begin{align*}
 & V^{\rm m}(0,x,\ell):= \sup_{\theta}  \E\big[ U_{\rm M}\big( L_T \big)\big], \quad
dX_t = \big( b + \hat a(t,X_t) e_2 \big) dt + \Sigma d\bar W_t^\perp, \\
& dF_t = \big( \bar \beta_t + \hat a(t,X_t) F_1(t) \mcdot e_2 \big) dt +  F_1(t) \mcdot \Sigma d\bar W_t^\perp, \\
& dL_t = \theta_t dF_t = \theta_t\big[ \bar\beta_t + \hat a(t,X_t) F_1 \mcdot e_2\big] dt + \theta_t F_1 \mcdot \Sigma dW_t^\perp.
 \end{align*}
The HJB for $V^m$ writes
\begin{align*}
& 0=\partial V^m_t + \frac12 A\mddot D^2_xV^m 
+ \sup_{\theta} \Big\{ 
 \big( b + \hat a e_2 \big) \mcdot D_xV^m 
+ \big( \bar \beta + \hat a  F_1 \mcdot e_2 \big)\theta V_\ell
+ \frac12 \theta^2 F_1 \mcdot AF_1 V_{\ell\ell}
+ \theta AF_1 \mcdot D_xV_\ell
\Big\}, \\
&V^m(T,x,\ell) =U_{\rm M}(\ell).
 \end{align*}
We define  $V^m(t,x,\ell) =:U_{\rm M}(v^m(t,x) + \ell)$, so that
\begin{align*}
&D_xV^m  = -\eta_m V^m Dv^m, \quad D_x^2V^m  = -\eta_m V^m(D^2v^m - \eta_p Dv^m\mcdot Dv^m), \\
& V^m_\ell = -\eta_m V^m,  \quad
D_xV^m_\ell = \eta_m^2 V^m Dv^m, \quad V^m_{\ell\ell} = \eta_m^2 V^m, 
\end{align*}
The certainty equivalent $v^m$ satisfies:
\begin{align*}
0&=\partial v^m_t + \frac12 A\mddot \big(D^2 v^m -\eta_m Dv^m (Dv^m)^\tp\big) + \big( b + \hat a e_2 \big) \mcdot Dv^m + \sup_{\theta} G(\theta), \quad
v^m(T,x) = 0, \\
\text{where}\quad 
G(\theta) & :=   \big( \bar \beta + \hat a  F_1 \mcdot e_2 - \eta_m  AF_1 \mcdot D v^m  \big)\theta 
- \frac12 \eta_m  \phi_1  \theta^2
 \end{align*}
It holds that $G$ admits a maximiser $\theta_{\rm m,i}$ also if $\phi_1 = F_1\mcdot AF_1 >0$. Then we have
\begin{align*}
G(\theta) & :=  - \frac12 \eta_m \phi_1\big( \theta - \theta_{\rm m,i}\big)^2
+  \frac12 \eta_m \phi_1 \theta_{\rm m,i}^2, \quad
\theta_{\rm m,i} := \frac1{\eta_m\phi_1} \big(\bar \beta + \hat a  F_1 \mcdot e_2 -  \eta_m AF_1 \mcdot D v^m \big)
 \end{align*}
Hence, the PDE satisfied by $v^m$ becomes
\begin{align*}
0&=\partial v^m_t + \frac12 A\mddot \big(D^2 v^m -\eta_m Dv^m (Dv^m)^\tp\big) + \big( b + \hat a e_2 \big) \mcdot Dv^m 
+   \frac1{2\eta_m \phi_1} \big(\bar \beta + \hat a  F_1 \mcdot e_2 -  \eta_m AF_1 \mcdot D v^m\big)^2
 \\
v^m(T,x) &= 0.
 \end{align*}

Recalling that 
\begin{align*}
\hat a & := \frac{\gamma}{\gamma_{\rm i}} \big(  (F_1\mcdot e_2)  \bar\beta   + \eta_p u_2 \mcdot Dv  \big) \quad \text{and} \quad Dv = K_1^{\rm i} + 2 K_2^{\rm i} x, \\
\hat a & := \frac{\gamma}{\gamma_{\rm i}} \big\{  
(F_1\mcdot e_2)(\dot F_0 + F_1 \mcdot b) 
+ \eta_p u_2 \mcdot K_1^{\rm i} 
+ (F_1\mcdot e_2) \dot F_1 \mcdot x 
+ 2 \eta_p u_2 \mcdot K_2^{\rm i} x \big\} 
 = \hat a_0 + \hat a_1 \mcdot x \\
&\text{with} \quad
\hat a_0 := \frac{\gamma}{\gamma_{\rm i}} \big\{ (F_1\mcdot e_2)(\dot F_0 + F_1 \mcdot b) 
+ \eta_p u_2 \mcdot K_1^{\rm i} \big\}, \quad
\hat a_1 := \frac{\gamma}{\gamma_{\rm i}} 
\big[(F_1\mcdot e_2) \dot F_1 
+ 2 \eta_p (K_2^{\rm i})^\tp u_2 \big].
\end{align*}

it holds that
\begin{align*}
\bar \beta + \hat a  F_1 \mcdot e_2 
& = \dot F_0 + F_1 \mcdot b + (F_1\mcdot e_2) \hat a_0 + (\dot F_1 + (F_1\mcdot e_2) \hat a_1) \mcdot x 
 =: \beta_0 + \beta_1 \mcdot x
\end{align*}

this PDE can be written
\begin{align*}
&0=\partial_t v^m +  \big( B_0^{\rm m,i} + B_1^{\rm m,i} x\big) \mcdot Dv^m + \frac12 A\mddot D^2v^m  + \frac12  Dv^m \mcdot \Lambda_{\rm m,i} Dv^m  + H_0^{\rm m,i} + H_1^{\rm m,i} \mcdot x +  x \mcdot H_2^{\rm m,i} x, \\
& v(T,x) = 0,
\end{align*}
with
\begin{align*}
&B_0^{\rm m,i} :=  b + \hat a_0 e_2 - \frac{\beta_0}{\phi_1} AF_1, \quad
 B_1^{\rm m,i} :=     e_2 \hat a_1^\tp - \frac1\phi_1 AF_1 \beta_1^\tp, \quad
H_0^{\rm m,i} :=  \frac{\beta_0^2}{2\eta_p\phi_1}, \quad
 H_1^{\rm m,i} := \frac{\beta_0}{\eta_m\phi_1} \beta_1, \\ 
& H_2^{\rm m,i} := \frac{1}{2\eta_m\phi_1} \beta_1\beta_1^\tp, \quad
\Lambda_{\rm m,i} := -\eta_m A  + \frac{\eta_m}{\phi_1} AF_1 F_1^\tp A.
\end{align*}

\noindent{\bf Equilibrium} The clearing condition $\theta^m + \theta^{\rm i} =0$ writes
\begin{align*}
\frac1{\eta_m} \big(\bar \beta + \hat a  F_1 \mcdot e_2 -  \eta_m AF_1 \mcdot D v^m  \big) 
+
 \frac{1}{\eta_p} \big(\bar\beta  + \hat a  F_1 \mcdot e_2 -   \eta_p AF_1\mcdot Dv\big)
= 0.
\end{align*}
Thus
\begin{align*}
 -  \bar\eta AF_1 \mcdot (D v^m  +  Dv)+  \bar \beta + \hat a  F_1 \mcdot e_2    = 0, \quad \bar\eta := \frac{\eta_p\eta_m}{\eta_p + \eta_m}.
\end{align*}
Now, using the ansatz that $Dv = 2 x\mcdot K_2^{\rm i} + K_1^{\rm i}$ and $Dv^m = 2 x\mcdot K_2^{\rm m,i} + K_1^{\rm m,i}$ we get
\begin{align*}
  \big[ - 2  \bar\eta (K_2^{\rm m,i} + K_2^{\rm i})  AF_1 + \beta_1 \big] \mcdot x 
  +   \beta_0 - \bar\eta AF_1 \mcdot (K_1^{\rm m,i} + K_1^{\rm i})   = 0,
\end{align*}
which yields
\begin{align*}
    - 2 \bar\eta (K_2^{\rm m,i} + K_2^{\rm i})  AF_1 + \beta_1 & = 0, \quad
    \beta_0 - \bar\eta AF_1 \mcdot (K_1^{\rm m,i} + K_1^{\rm i})     = 0,
\end{align*}
or more explicitly
\begin{align*}
     \dot F_1 + (F_1\mcdot e_2) \hat a_1 - 2 \bar\eta  (K_2^{\rm m,i} + K_2^{\rm i}) AF_1  & =0, \\ 
    \dot F_0 + F_1 \mcdot b + (F_1\mcdot e_2) \hat a_0 - \bar\eta AF_1 \mcdot (K_1^{\rm m,i} + K_1^{\rm i})   & = 0,
\end{align*}
or
\begin{align*}
 \dot F_1 + \frac{\gamma}{\gamma_{\rm i}} (F_1\mcdot e_2) \Big\{  (F_1\mcdot e_2) \dot F_1 
+ 2 \eta_p (K_2^{\rm i})^\tp u_2 \Big\}  -  2  \bar\eta (K_2^{\rm m,i} + K_2^{\rm i})  AF_1 & =0, \\ 
    \dot F_0 + F_1 \mcdot b + \frac{\gamma}{\gamma_{\rm i}} (F_1\mcdot e_2) \Big\{  (F_1\mcdot e_2)(\dot F_0 + F_1 \mcdot b) 
+ \eta_p u_2 \mcdot K_1^{\rm i}  \Big\} - \bar\eta AF_1 \mcdot (K_1^{\rm m,i} + K_1^{\rm i})   & = 0,
\end{align*}
and finally
\begin{align*}
 \Big[1 + \frac{\gamma}{\gamma_{\rm i}} (F_1\mcdot e_2)^2\Big]  \dot F_1  + 2 \eta_p \frac{\gamma}{\gamma_{\rm i}} (F_1\mcdot e_2) (K_2^{\rm i})^\tp u_2  - 2  \bar\eta (K_2^{\rm m,i} + K_2^{\rm i})  AF_1& =0, \\ 
\Big[1 + \frac{\gamma}{\gamma_{\rm i}} (F_1\mcdot e_2)^2\Big]   \dot F_0 + \Big[1 + \frac{\gamma}{\gamma_{\rm i}} (F_1\mcdot e_2)^2\Big] F_1 \mcdot b + \frac{\eta_p\gamma}{\gamma_{\rm i}} (F_1\mcdot e_2)   u_2 \mcdot K_1^{\rm i}  - \bar\eta AF_1 \mcdot (K_1^{\rm m,i} + K_1^{\rm i})   & = 0.
\end{align*}


\subsection{Second-best delegation}

We cope here with the second-best case. 
For a given forward price process $F_t$, the problem of the Principal is now
\begin{align*}
 & V^{\rm P}:= \sup_{Z,\theta} J ^{\rm P}(Z,\theta), \quad \text{where} \quad J^{\rm P}(Z,\theta) := \E\Big[ U_{\rm P}\Big( g(X_T)  - Y_T + \int_0^T \theta_t dF_t \Big)\Big],  \quad g(x):= x \mcdot M x,\\
&dY_t = \frac12 Z_t \mcdot C_a  Z_tdt + Z_t \mcdot \Sigma d\bar W_t^\perp, \quad
dX_t = \big( b + \gamma E_{22} Z_t\big) dt + \Sigma d\bar W_t^\perp,
\end{align*}
where we recall that $C_a $ and $C_p$ are defined by $C_i := \gamma E_{22}+\eta_i A, \quad  i\in \{ a,p\},$ and $\bar{C} := \gamma E_{22} + (\eta_a+\eta_p) A$ and the matrix $M$ has been defined in the integrated case, i.e.  $M := \begin{pmatrix} 0 & \frac12 \\ \frac12 & -\eps \end{pmatrix}.$ Indeed, the incentive mechanism $Y_T$ does not change its form because of the hedging possibility.  The best-response function of the Agent remains the same, namely $\hat \alpha(z) = \gamma z_2$.

The optimisation problem of the market maker who ensures the supply of hedging is still
\begin{align*}
 & V^{\rm M}:= \sup_{\theta} J^{\rm M}(\theta), \quad \text{where} \quad
 J^{\rm M}(\theta):= \E\Big[ U_{\rm M}\Big( \int_0^T \theta_t dF_t \Big)\Big].
 \end{align*}
The main objective of this section is to find an equilibrium in the following sense.

\begin{Definition}
A linear market equilibrium is defined as a quadruplet $(Z^P,\theta^P,\theta^M, F^\ast)$ such that
\b*
&V^{\rm P} = J^{\rm P}(Z^P,\theta^P), \quad V^{\rm M} = J^{\rm M}(\theta^M), \quad \theta^P+\theta^M=0, 
&
\\
&
F^\ast_t = F_0(t) + F_1(t)\mcdot X_t, \quad t\in[0,T]
&
\e*
for some deterministic functions $F_0,F_1$ satisfying $F_0(T) = 0$ and $F_1(T) = (1, \, -\epsilon)^\tp$.
\end{Definition}

The final condition on the functions $F_0$ and $F_1$ garantees that the forward price at maturity $T$ coincides with the spot $P_T = S_T - \eps Q_T$. We also report the dynamics of $F_t$:
\begin{align*}
dF_t =\big( \bar \beta_t + \gamma E_{22} F_1\mcdot Z_t \big) dt + F_1 \mcdot \Sigma d\bar W_t^\perp
=:
\mu(Z_t) dt + F_1 \mcdot \Sigma d\bar W_t^\perp
\end{align*}
where $\bar \beta = \dot F_0 + F_1\mcdot b + \dot F_1 \mcdot X.$
 and we recall that $\phi_1 = F_1\mcdot A F_1>0$ and $\phi = A F_1$.

\hs

We use the same method as in the integrated case.

\hs

\noindent{\bf Producer's problem}: In the context of a linear market equilibrium, the optimisation problem of the Principal reduces to
\b*
 V^{\rm P}
 &=&
 e^{\eta_{p}Y_0}\sup_{Z,\theta} \E\Big[ U_{\rm P}\Big( g_T(X_T)  
 + \int_0^T \Big(\theta_t\big(\bar \beta_t + \gamma E_{22} F_1 \mcdot Z_t \big)- \frac12 Z_t \mcdot C_a  Z_t\Big)dt -  Z_t \mcdot \Sigma d\bar W_t^\perp + \theta_t F_1 \mcdot \Sigma d\bar W_t^\perp \Big], 
\\
&=&
e^{\eta_{p}Y_0}
\sup_{Z,\theta} \E\Big[ U_{\rm P}\Big( g_T(X_T)  
 + \int_0^T \Big(\theta_t \big(\bar \beta_t + \gamma E_{22} F_1 \mcdot Z_t \big)- \frac12 Z_t \mcdot C_a  Z_t\Big)dt +  \Sigma^\tp\big(\theta_t F_1 - Z_t\big) \mcdot  d\bar W_t^\perp
 \Big)\Big].
\e*

\hs

 We have 
\begin{align*}
V(0,x,\ell) :=  V^{\rm P} = &
\sup_{Z,\theta} \E\Big[ U_{\rm P}\Big( g_T(X_T)  - L_T \Big)\Big], \quad
dX_t   = (b + \gamma E_{22} Z_t) dt + \Sigma \mcdot d\bar W^\perp_t \\
dL_t  & = \Big[\frac12 Z_t \mcdot C_a Z_t - \theta_t \mu(Z_t) \Big] dt - \Sigma^\tp (\theta_t F_1 - Z_t) \mcdot d\bar W^\perp_t,\end{align*}
Hence, we have the PDE with again $\phi_1 = F_1 \mcdot AF_1$
\begin{align*}
& 0 =\partial_t V + \frac12 A\mddot D^2_x V + b\mcdot D_xV \\
 & +  \sup_{a,\theta}\Big\{ 
\gamma E_{22} z  \mcdot D_xV + \big[\frac12 z \mcdot C_a z - \theta \mu(z) \big] \big) V_\ell
+\frac12\big[ \phi_1 \theta^2 - 2 (AF_1 \mcdot z) \theta + Az \mcdot z\big] V_{\ell\ell}
-  A(\theta F_1 - z) \mcdot D_xV_\ell
 \Big\}\\
&V(T,x,\ell) = U_{\rm P}(g(x)-\ell).
\end{align*}
We set again $V(t,x,\ell) =:U_{\rm P}(v(t,x)-\ell)$, so that
\begin{align*}
D_xV  = -\eta_p V Dv, \quad D_x^2V = -\eta_pV(D^2v - \eta_p DvDv^\tp), \quad V_\ell = \eta_p V, \quad
D_xV_\ell = -\eta_p^2 V Dv, \quad V_{\ell\ell} = \eta_p^2 V, 
\end{align*}
and get the PDE satisfied by $v$
\begin{align*}
& 0 =\partial_t v + \frac12 A\mddot (D^2v - \eta_p Dv \mcdot Dv) + b\mcdot Dv \\
 & +  \sup_{z,\theta}\Big\{ 
\gamma E_{22} z  \mcdot Dv -  \Big[\frac12 z \mcdot C_a z - \theta \mu(z) \Big]   
-\eta_p \frac12\big[ \phi_1 \theta^2 - 2 (AF_1 \mcdot z) \theta + Az \mcdot z\big]  
-  \eta_p A(\theta F_1 - z) \mcdot Dv
 \Big\}\\
&v(T,x) =  g(x).
\end{align*}
Thus we can rewrite it as
\begin{align*}
& 0 =\partial_t v + \frac12 A\mddot (D^2v - \eta_p Dv \mcdot Dv) + b\mcdot Dv  +  \sup_{z,\theta} L(z,\theta)
\end{align*}
with (recall $\mu(z) = \bar\beta + \gamma E_{22} F_1\mcdot z$)
\begin{align*}
L(z,\theta) & := -\frac12 \eta_p \phi_1 \theta^2 + ( \eta_pAF_1\mcdot (z - Dv) +   \mu(z)) \theta
+ C_p z  \mcdot Dv - \frac12 z \mcdot \bar C z, \\
& = -\frac{1}{2} \eta_p\phi_1 (\theta - \theta^{\rm d}(z))^2 + \frac1{2} \eta_p\phi_1 ( \theta^{\rm d}(z))^2 
+ C_p z  \mcdot Dv - \frac12 z \mcdot \bar C z \\
\text{with} & \quad \theta^{\rm d}(z) := \frac1{\eta_p\phi_1}\big(C_p F_1 \mcdot z  + \bar\beta - \eta_p AF_1 \mcdot Dv)
\end{align*}
Thus,
\begin{align*}
L(z,\theta^{\rm d}(z)) & = C_p z  \mcdot Dv - \frac12 z \mcdot \bar C z + \frac1{2\eta_p\phi_1}(C_p F_1 \mcdot z   + \bar\beta - \eta_p AF_1 \mcdot Dv)^2 \\
& = -\frac12 z \mcdot \big[ \bar C - \frac1{\eta_p\phi_1}C_pF_1 F_1^\tp C_p^\tp  \big] z
+ \big[C_p^\tp Dv + \frac{1}{\eta_p\phi_1}(\bar\beta - \eta_p AF_1 \mcdot Dv) C_p F_1 \big] \mcdot z
+ \frac1{2\eta_p\phi_1}(\bar\beta - \eta_p AF_1 \mcdot Dv)^2 \\
& = -\frac12 (z - \hat z) \mcdot \Gamma (z- \hat z) + \frac12 \hat z \mcdot \Gamma \hat z  + \frac1{2\eta_p\phi_1}(\bar\beta - \eta_p AF_1 \mcdot Dv)^2 \\
\end{align*}
 with
\begin{empheq}[box=\fbox]{align*}
 \Gamma :=  \bar C - \frac1{\eta_p\phi_1}C_pF_1 F_1^\tp C_p^\tp, \quad 
\hat z := \Gamma^{-1}\Big[ C_p^\tp Dv + \frac{1}{\eta_p\phi_1}(\bar\beta - \eta_p AF_1 \mcdot Dv) C_p F_1\Big].
\end{empheq}

We assume that the eigen values of $\Gamma$ are strictly nonnegative so that $\hat z$ is a maximiser of $L$. The optimiser $\hat z$ can be written
\begin{align*}
\hat z & =  \frac{\Gamma^{-1}}{\eta_p\phi_1}\Big[  (\dot F_0 + F_1 \mcdot b)C_p F_1  + C_p F_1 \dot F_1 \mcdot x  +  \eta_p \big[\phi_1 C_p^\tp -   C_p F_1 F_1^\tp A\big] Dv \Big] =: \hat z_0 + \hat z_1 x + \hat z_2 Dv, \\
\end{align*}
 with
\begin{empheq}[box=\fbox]{align*}
\hat z_0 := \frac{ \Gamma^{-1}}{\eta_p\phi_1}(\dot F_0 + F_1 \mcdot b)C_p F_1, \quad
 \hat z_1 := \frac{\Gamma^{-1}}{\eta_p\phi_1} C_pF_1\dot F_1^\tp, \quad
\hat z_2 := - \frac{\Gamma^{-1}}{\phi_1}\big[\phi_1 C_p^\tp -   C_p F_1 F_1^\tp A\big].
\end{empheq}

It holds that
\begin{align*}
\frac12 \hat z \mcdot \Gamma \hat z & = \frac12 (\hat z_0 + \hat z_1 x + \hat z_2 Dv) \mcdot (\Gamma \hat z_0 + \Gamma\hat z_1 x + \Gamma\hat z_2 Dv) \\
& = \frac12\Big\{ 
\hat z_0 \mcdot \Gamma \hat z_0 
+ \big[ \hat z_1^\tp (\Gamma + \Gamma^\tp) \hat z_0 \big] \mcdot x 
+ x \mcdot \hat z_1^\tp \Gamma \hat z_1 x 
 + \big[ \hat z_2^\tp (\Gamma + \Gamma^\tp) \hat z_0 + \hat z_2^\tp (\Gamma + \Gamma^\tp) \hat z_1 x  \big] \cdot Dv + Dv \mcdot \hat z_2^\tp \Gamma \hat z_2 Dv
\Big\}
\end{align*}
Moreover, we have
\begin{align*}
(\dot F_0 +  F_1 \mcdot b + \dot F_1 \mcdot x - \eta_p AF_1 \mcdot Dv)^2 &  = 
(\dot F_0 +  F_1 \mcdot b)^2 + 2 (\dot F_0 +  F_1 \mcdot b) \dot F_1 \mcdot x + x \mcdot \dot F_1\dot  F_1^\tp x \\
& - 2 \eta_p \big[ (\dot F_0 +  F_1 \mcdot b) AF_1 + AF_1 \dot F_1^\tp x \big] \mcdot Dv \\
& + \eta_p^2 Dv \mcdot AF_1 F_1^\tp A Dv
\end{align*}

The PDE of $v$ reduces then to
\begin{align*}
& 0 =\partial_t v + \frac12 A\mddot (D^2v - \eta_p Dv \mcdot Dv) + b\mcdot Dv  
+  \frac12 \hat z \mcdot \Gamma \hat z  + \frac1{2\eta_p\phi_1}(\bar\beta - \eta_p AF_1 \mcdot Dv)^2, \quad v(T,x) = g(x).
\end{align*}

It can be written in the form
\begin{align*}
&0=\partial_t v +  \big( B_0^{\rm d} + B_1^{\rm d} x\big) \mcdot Dv + \frac12 A\mddot D^2v  + \frac12  Dv \mcdot \Lambda_{\rm d} Dv  + H_0^{\rm d} + H_1^{\rm d} \mcdot x +  x \mcdot H_2^{\rm d} x, \quad
 v(T,x) = x \mcdot M x,
\end{align*}
with
\begin{empheq}[box=\fbox]{align*}
& B_0^{\rm d} = b + \frac12 \big[ \hat z_2^\tp (\Gamma + \Gamma^\tp) \hat z_0  \big] 
- \frac1\phi_1   (\dot F_0 +  F_1 \mcdot b) AF_1, \quad
 B_1^{\rm d} =  \frac12[  \hat z_2^\tp (\Gamma + \Gamma^\tp) \hat z_1 ] - \frac1{\phi_1} AF_1 \dot F_1^\tp\\
& H_0^{\rm d} =  \frac12 \hat z_0 \mcdot \Gamma \hat z_0  + \frac1{2\eta_p\phi_1} (\dot F_0 +  F_1 \mcdot b)^2, \quad
H_1^{\rm d} = \frac12 \big[ \hat z_1^\tp (\Gamma + \Gamma^\tp) \hat z_0 \big] + \frac1{\eta_p\phi_1} (\dot F_0 +  F_1 \mcdot b) \dot F_1 \\
& H_2^{\rm d} = \frac12\hat z_1^\tp \Gamma \hat z_1 + \frac{1}{2\eta_p\phi_1}  \dot F_1 \dot F_1^\tp, \quad
\Lambda_{\rm d} =  -\eta_p A + \hat z_2^\tp \Gamma \hat z_2 + \frac{\eta_p}{\phi_1} AF_1 F_1^\tp A.
\end{empheq}
\hs

Hence, using the guessed form for $v$ of $v = K_0^{\rm d} + K_1^{\rm d} \mcdot x + x \mcdot K_2^{\rm d} x$, the generic form of the Riccati's system provides the ODes for the coefficients $K_i^{\rm d}$, $i=0,1,2$. Besides, using the fact that $Dv = K_1^{\rm d} + 2 K_2^{\rm d} x$, we can rewrite $\hat z(t,x)$ as
\begin{align*}
\hat z = \hat z_0 + \hat z_1 x + \hat z_2 (K_1^{\rm d} + 2  K_2^{\rm d} x) = \hat z_0 + \hat z_2 K_1^{\rm d} + (\hat z_1 + 2 \hat z_2 K_2^{\rm d}) x. 
\end{align*}

\hs

\noindent{\bf Representative agent of the market}: The problem is
\begin{align*}
 & V^m(0,x,\ell):= \sup_{\theta}  \E\big[ U_{\rm M}\big( L_T \big)\big], \quad
dX_t = \big( b + \gamma E_{22} \hat z(t,X_t)\big) dt + \Sigma d\bar W_t^\perp, \\
dL_t  & =   \theta_t \mu(\hat z(t,X_t)) dt +  \theta_t  F_1 \mcdot \Sigma d\bar W^\perp_t, \quad
dF_t =  \mu(Z_t) dt + F_1 \mcdot \Sigma d\bar W_t^\perp
 \end{align*}
where we recall that $\bar \beta = \dot F_0 + F_1\mcdot b + \dot F_1 \mcdot X$, $\mu(z) = \bar\beta + \gamma E_{22} F_1 \mcdot z$ and  $\phi_1 = F_1\mcdot A F_1>0$. Besides, $Z_t = \hat z(t,X_t)$ with the function $z$ given in the resolution of the producer's problem.  

\hs

The PDE satisfied by $V^m$ is given by
\begin{align*}
& 0 =\partial_t V^m + \frac12 A\mddot D^2_x V^m + (b + \gamma E_{22} \hat z )\mcdot D_xV^m    
 +  \sup_{\theta}\Big\{   \theta \mu(\hat z) V_\ell
+\frac12  \phi_1 \theta^2  V_{\ell\ell}
+  \theta A F_1 \mcdot D_xV_\ell \Big\}\\
&V^m(T,x,\ell) = U_{\rm M}(0).
\end{align*}
We set again $V^m(t,x,\ell) =:U_{\rm M}(v^m(t,x) + \ell)$, so that
\begin{align*}
&D_xV^m  = -\eta_m V^m Dv^m, \quad D_x^2V^m  = -\eta_m V^m(D^2v^m - \eta_p Dv^m\mcdot Dv^m), \\
& V^m_\ell = -\eta_m V^m,  \quad
D_xV^m_\ell = \eta_m^2 V^m Dv^m, \quad V^m_{\ell\ell} = \eta_m^2 V^m, 
\end{align*}
and get the PDE satisfied by $v^m$
\begin{align*}
& 0 =\partial_t v^m + \frac12 A\mddot (D^2v^m - \eta_p Dv^m\mcdot Dv^m) + (b + \gamma E_{22} z )\mcdot Dv^m    
 +  \sup_{\theta}\Big\{    (\mu(z) -  \eta_m  A F_1 \mcdot Dv^m) \theta
- \frac12 \eta_m \phi_1 \theta^2   
  \Big\}\\
&v^m(T,x) = 0.
\end{align*}
Thus, we have
\begin{align*}
& 0 =\partial_t v^m + \frac12 A\mddot (D^2v^m - \eta_m Dv^m\mcdot Dv^m) + (b + \gamma E_{22} z )\mcdot Dv^m    
 +  \frac12 \eta_m \phi_1 \theta_{\rm m,i}^2, \\
 \text{with}  & \quad 
 \theta_{\rm m,d} := \frac1{\eta_m\phi_1} [\mu(z) -  \eta_m  A F_1 \mcdot Dv^m].
\end{align*}
Hence, using the expression of $\hat z$ as $\hat z = \hat z_0 + \hat z_2 K_1^{\rm d} + (\hat z_1 + 2 \hat z_2 K_2^{\rm d}) x$, 
we have
\begin{align*}
   \theta_{\rm m,d} & = \frac1{\eta_m\phi_1} \big[\dot F_0 + F_1 \mcdot b + \gamma E_{22} F_1 \mcdot (\hat z_0 + \hat z_2 K_1^{\rm d}) + (\dot F_1 + \gamma (\hat z_1 + 2 \hat z_2 K_2^{\rm d})^\tp E_{22} F_1 )\mcdot x   - \eta_m AF_1 \mcdot Dv^m \big],  \\
   & =: \frac1{\eta_m\phi_1} \big[ \theta^0_{\rm m} + \theta^1_{\rm m} \mcdot x - \eta_m AF_1 \mcdot Dv^m \big]
\end{align*}
Thus, the PDE can be written in the generic form
\begin{align*}
&0=\partial_t v^m +  \big( B_0^{\rm m,d} + B_1^{\rm m,d} x\big) \mcdot Dv^m + \frac12 A\mddot D^2v  + \frac12  Dv \mcdot \Lambda_{\rm m,d} Dv  + H_0^{\rm m,d} + H_1^{\rm m,d} \mcdot x +  x \mcdot H_2^{\rm m,d} x, \quad
 v^m(T,x) = 0,
\end{align*}
with
\begin{empheq}[box=\fbox]{align*}
& B_0^{\rm m,d} = b + \gamma E_{22}[\hat z_0 + \hat z_2 K_1^{\rm d}] - \frac{ \theta_{\rm m}^0}{\phi_1}  AF_1, \quad
 B_1^{\rm m,d} = \gamma E_{22}(\hat z_1 + 2 \hat z_2 K_2^{\rm d}) - \frac{1}{\phi_1}  A F_1( \theta_{\rm m}^1)^\tp, \\
 & H_0^m = \frac{(\theta_{\rm m}^0)^2}{2\eta_m\phi_1}, \quad
 H_1^m = \frac{1}{\phi_1}\theta_{\rm m}^0 \theta_{\rm m}^1, \quad
 H_2^m = \frac{1}{2\eta_m\phi_1}\theta_{\rm m}^1 (\theta_{\rm m}^1)^\tp, \\
 & \Lambda_{\rm m,d} = -\eta_m A + \frac{\eta_m}{\phi_1} AF_1 F_1^\tp A
\end{empheq}
\hs

Hence, using the guessed form for $v^m$ of $v^m = K_0^{\rm m,d} + K_1^{\rm m,d} \mcdot x + x \mcdot K_2^{\rm m,d} x$, the generic form of the Riccati's system provides the ODes for the coefficients $K_i^{\rm m,d}$, $i=0,1,2$. 

\hs

\noindent{\bf Equilibrium} We write

The clearing condition $\theta_{\rm m,i} + \theta_{\rm d} =0$ writes
\begin{align*}
 \frac1{\eta_m\phi_1} [ \bar\beta + \gamma E_{22} F_1 \mcdot z -  \eta_m  A F_1 \mcdot Dv^m]
+
\frac1{\eta_p\phi_1}\big(C_p F_1 \mcdot z  + \bar\beta - \eta_p AF_1 \mcdot Dv)
= 0
\end{align*}
thus ($C_p = \gamma E_{22} + \eta_p A$)
\begin{align*}
&\big[ (\eta_p + \eta_m) \gamma E_{22} F_1 + \eta_m\eta_p A F_1\big] \mcdot z + (\eta_p+\eta_m) \bar\beta - \eta_p\eta_m AF_1 \mcdot (Dv + Dv^m) = 0 \\
& \big[\gamma E_{22} F_1 + \bar\eta A F_1\big] \mcdot z +  \dot F_0 + F_1 \mcdot b + \dot F_1 \mcdot x - \bar\eta AF_1 \mcdot (Dv + Dv^m) = 0
\end{align*}
Substitution with the expression of $\hat z$, $Dv$ and $Dv^m$ as functions of $x$ provides
\begin{align*}
& 0 = \dot F_1 + (\hat z_1 + 2 \hat z_2 K_2^{\rm d})^\tp \big[\gamma E_{22} F_1 + \bar\eta A F_1\big] 
- 2 \bar\eta (K_2^{\rm m,d} + K_2^{\rm d})^\tp AF_1  \\
& 0 = \dot F_0 + F_1 \mcdot b + \big[\gamma E_{22} F_1 + \bar\eta A F_1\big] \mcdot \big(\hat z_0 + \hat z_2 K_1^{\rm d}\big) - \bar\eta AF_1 \mcdot (K_1^{\rm m,d} + K_1^{\rm d}).
\end{align*}
with a little simplification
\begin{align*}  
& 0 = \dot F_1 + (\hat z_1 + 2 \hat z_2 K_2^{\rm d})^\tp \bar C F_1  
- 2 \bar\eta (K_2^{\rm m,d} + K_2^{\rm d})^\tp AF_1  \\
& 0 = \dot F_0 + F_1 \mcdot b + \bar C F_1  \mcdot \big(\hat z_0 + \hat z_2 K_1^{\rm d}\big) - \bar\eta AF_1 \mcdot (K_1^{\rm m,d} + K_1^{\rm d}).
\end{align*}

But, recall that $\hat z_0$ depends on $\dot F_0$ as well as $\hat z_1$ on $\dot F_1$ in
\begin{align*}  
\hat z_0 & = \frac{ \Gamma^{-1}}{\eta_p\phi_1}(\dot F_0 + F_1 \mcdot b) C_p F_1 
=
\frac{ \dot F_0 }{\eta_p\phi_1} \Gamma^{-1} C_p F_1 
+ \frac{F_1 \mcdot b }{\eta_p\phi_1} \Gamma^{-1} C_p F_1 \\
\hat z_1 & = \frac{\Gamma^{-1}}{\eta_p\phi_1} C_pF_1\dot F_1^\tp
\end{align*}
and thus, we have
\begin{align*}  
& 0 = m_0 \dot F_0 + m_0 F_1 \mcdot b + \bar C F_1  \mcdot  \hat z_2 K_1^{\rm d} - \bar\eta AF_1 \mcdot (K_1^{\rm m,d} + K_1^{\rm d}), 
\quad m_0 :=  1 + \frac{1}{\eta_p \phi_1} \bar C F_1 \mcdot \Gamma^{-1} C_p F_1. 
\end{align*}
and
\begin{align*}  
& 0 = \dot F_1  + (\hat z_1)^\tp \bar C F_1  +   2 (\hat z_2 K_2^{\rm d})^\tp \bar C F_1  
- 2 \bar\eta (K_2^{\rm m,d} + K_2^{\rm d})^\tp AF_1, \\
& 0 =  m_1 \dot F_1 
+   2 (\hat z_2 K_2^{\rm d})^\tp \bar C F_1  
- 2 \bar\eta (K_2^{\rm m,d} + K_2^{\rm d})^\tp AF_1, \quad m_1 :=  1 + \frac1{\eta_p \phi_1} F_1^\tp C_p^\tp (\Gamma^{-1})^\tp \bar C F_1  
\end{align*}
Thus, the ODEs for $F_0$ and $F_1$ are given by
\begin{align*}  
& 0 =  \dot F_0 +  F_1 \mcdot b + \frac1{m_0} \bar C F_1  \mcdot  \hat z_2 K_1^{\rm d} - \frac{\bar\eta}{m_0} AF_1 \mcdot (K_1^{\rm m,d} + K_1^{\rm d})  \\
& 0 = \dot F_1  +   \frac{2}{m_1} (\hat z_2 K_2^{\rm d})^\tp \bar C F_1   - 2 \frac{\bar\eta}{m_1} (K_2^{\rm m,d} + K_2^{\rm d})^\tp AF_1.
\end{align*}

\end{document}